\newcommand{\lint}{\textrm{Int}}
\newcommand{\dint}{\textrm{DInt}}
\newcommand{\intk}{\textrm{IntK}}
\newcommand{\biint}{\hbox{BiInt}}
\newcommand{\kt}{\hbox{Kt}}
\newcommand{\bikt}{\hbox{BiKt}}
\newcommand{\lbikt}{\mathrm{\bf LBiKt}}
\newcommand{\lbikte}{\mathrm{\bf LBiKtE}}
\newcommand{\dbikt}{\mathrm{\bf DBiKt}}
\newcommand{\dbikts}{\mathrm{\bf DBiKt_1}}
\newcommand{\di}{\mathrm{\bf DInt}}
\newcommand{\dik}{\mathrm{\bf DIntK}}
\newcommand{\dbi}{\mathrm{\bf DBInt}}
\newcommand{\struct}{\triangleright}
\newcommand{\derive}[2]{\vdash_{#2} #1:}
\newcommand{\impl}{\rightarrow}
\newcommand{\dimpl}{{\mbox{$\;-\!\!\!<\;$}}}
\newcommand{\upcontext}[1]{\widehat{#1}}
\newcommand{\toplevel}[1]{\{\!| #1 |\!\}}
\newcommand{\boxf}{\square}
\newcommand{\diaf}{\lozenge}
\newcommand{\boxp}{\blacksquare}
\newcommand{\diap}{\blacklozenge}
\newcommand{\dneg}{\sim}
\newcommand{\circwhite}[1]{\circ ( #1 )}
\newcommand{\circblack}[1]{\bullet ( #1 )}
\newcommand{\circwhiteshort}[1]{\circ #1}
\newcommand{\circblackshort}[1]{\bullet #1}
\newcommand{\deriveshallow}[2]{\derive{#1}{\lbikt} #2}
\newcommand{\derivedeep}[2]{\derive{#1}{\dbikt} #2}
\newcommand{\mysmall}[1]{\mbox{\small $#1$}}
\begin{document}

\begin{frontmatter}
  \title{Cut-elimination and Proof Search for Bi-Intuitionistic Tense Logic}
  \author{Rajeev Gor\'e, Linda Postniece \and Alwen Tiu}
  \address{Logic and Computation Group, School of Computer Science \\
The Australian National University \\
\url{{Rajeev.Gore, Linda.Postniece, Alwen.Tiu}@anu.edu.au}}

\begin{abstract}
  We consider an extension of bi-intuitionistic logic with the
  traditional modalities $\diaf$, $\boxf$, $\diap$ and $\boxp$
  from tense logic \kt{}. Proof theoretically, this extension is obtained
  simply by extending an existing sequent calculus for bi-intuitionistic logic
  with typical inference rules for the modalities used in display logics. 
  As it turns out, the resulting calculus,
  $\lbikt$, seems to be more basic than most intuitionistic tense or modal logics
  considered in the literature, in particular, those studied by Ewald and Simpson,
  as it does not assume any {\em a priori} relationship between the modal
  operators $\diaf$ and $\boxf$. 
  We recover Ewald's intuitionistic tense logic and Simpson's 
  intuitionistic modal logic by modularly extending $\lbikt$ with 
  additional structural rules. 
  The calculus $\lbikt$ is formulated in a variant of display calculus, 
  using a form of sequents called nested sequents. Cut elimination is proved
  for $\lbikt$, using a technique similar to that used in display calculi. 
  As in display calculi, the inference rules of $\lbikt$ are ``shallow'' rules, 
  in the sense that they act on top-level formulae in a nested sequent. 
  The calculus $\lbikt$ is ill-suited for 
  backward proof search due to the presence of certain structural rules
  called ``display postulates'' and the contraction rules on arbitrary
  structures. We show that these structural rules can be made redundant
  in another calculus, $\dbikt$, which uses deep inference, allowing
  one to apply inference rules at an arbitrary depth in a nested sequent. 
  We prove the equivalence between $\lbikt$ and $\dbikt$ and 
  outline a proof search strategy for $\dbikt$. 
  We also give a Kripke semantics and prove that $\lbikt$ is sound with respect
  to the semantics, but completeness is still an open problem. 
  We then discuss various extensions of $\lbikt.$ 

\end{abstract}

\begin{keyword}
 Intuitionistic logic, modal logic, intuitionistic modal logic, deep inference.
\end{keyword}

\end{frontmatter}

\section{Introduction}

Intuitionistic logic \lint{} forms a rigorous foundation for many
areas of Computer Science via its constructive interpretation and via
the Curry-Howard isomorphism between natural deduction proofs
and well-typed terms in the $\lambda$-calculus. Central to
both concerns are syntactic proof calculi with 
cut-elimination and  backwards
proof-search for finding derivations automatically.

In traditional intuitionistic logic, the connectives $\impl$ and $\land$ form an
adjoint pair in that $(A \land B) \impl C$ is valid iff $A \impl (B \impl C)$ 
is valid iff $B \impl (A \impl C)$ is valid.  
Rauszer~\cite{Rauszer80Phd} obtained BiInt by extending
Int with a binary connective $\dimpl$ called ``exclusion'' 
which is adjoint to $\lor$ 
in that $A \impl (B \lor C)$ is valid iff $(A \dimpl B) \impl C$ 
is valid iff $(A \dimpl C) \impl B$ is valid. 
Crolard~\cite{crolard2004} showed that \biint{} has a
computational interpretation in terms of continuation passing style
semantics.  Uustalu and Pinto recently showed that Rauszer's
sequent calculus~\cite{Rauszer74StudiaLogica} and
Crolard's extensions of it fail cut-elimination, 
but a nested sequent calculus with 
cut-elimination~\cite{gorepostniecetiu2008} and
a labelled sequent calculus~\cite{DBLP:conf/tableaux/PintoU09} with
cut-free-completeness have been found for \biint{}.

The literature on Intuitionistic Modal/Tense Logics (IM/TLs) 
is vast~\cite{ewald1986,simpson1994} and typically uses Hilbert calculi with
algebraic, topological or relational semantics.  We omit
details since our interest is primarily proof-theoretic.
Sequent and natural deduction calculi for IMLs are
rarer~\cite{DBLP:journals/logcom/Masini93,DBLP:journals/sLogica/AmatiP94,DBLP:journals/entcs/PfenningW95,collinson-hilken-rydeheard-intuitionistic-modal-sequent,DBLP:journals/jacm/DaviesP01,kakutani-inml2007,galmiche-salhi-intuitionistic-hybrid-modal-logic}.
Extending them with ``converse'' modalities like $\diap$
and $\boxp$ causes cut-elimination to fail as it does for classical
modal logic S5 where $\diaf$ is a
self-converse. Labels~\cite{mints-labelled-sfive,simpson1994,DBLP:conf/lics/VIICHP04} 
can help but are not purely proof-theoretic
since they encode the Kripke semantics.

The closest to our work is that of 
Sadrzadeh and Dyckhoff \cite{DBLP:journals/entcs/SadrzadehD09} who
give a cut-free sequent calculus using deep inference for a logic with
an adjoint pair of modalities $(\diap, \boxf)$ plus only $\land$,
$\lor$, $\top$ and $\bot$. 
As all their connectives are
``monotonic'', cut-elimination presents no difficulties.

Let \bikt{} be the bi-intuitionistic tense logic obtained by extending
\biint{} with two pairs of adjoint modalities $(\diaf, \boxp)$ and
$(\diap, \boxf)$, with no explicit relationship between the modalities
of the same colour, namely, $(\diaf, \boxf)$ and $(\diap, \boxp)$.
The modalities form an adjunction as follows: 
$A \impl \boxf B$ iff $\diap A \impl B$
and
$A \impl \boxp B$ iff $\diaf A \impl B.$ 

Our shallow inference calculus $\lbikt$ is a merger of two sub-calculi
for \biint{} and \kt{} derived from Belnap's inherently modular
display logic. $\lbikt$ has syntactic cut-elimination,
but is ill-suited for backward proof search. Our
deep inference calculus $\dbikt$ is complete with respect to the
cut-free fragment of $\lbikt$ and is more amenable to proof search
as it contains no display postulates and contraction rules.

To complete the picture, we also give a Kripke semantics for \bikt{}
based upon three relations $\leq$, $R_{\diaf}$ and
$R_{\boxf}$.
The logic \bikt{} enjoys various desirable properties:
\begin{itemize}
\item{\bf Conservativity:} it is a conservative extension of 
  intuitionistic logic \lint{}, dual intuitionistic logic \dint{},
  and bi-intuitionistic logic \biint{};

\item{\bf Classical Collapse:} it collapses to classical tense logic
  by the addition of four structural rules;
\item{\bf Disjunction Property:} If $A \lor B$ is a theorem not
  containing $\dimpl$ then $A$ is a theorem or $B$ is a theorem;

\item{\bf Dual Disjunction Property:} If $A \land B$ is a
  counter-theorem not containing $\impl$ then so is $A$ 
  or $B$;

\item{\bf Independent $\diaf$ and $\boxf$:} there is no {\em a priori}
  relationship between these connectives.
\end{itemize}
The independence of $\diaf$ and $\boxf$ is a departure
from traditional intuitionistic tense or modal logics, e.g.,
those considered by Ewald~\cite{ewald1986} and Simpson~\cite{simpson1994}. 
Both Ewald and Simpson allow a form of interdependency between $\diaf$ and $\boxf$, expressed
as the axiom $(\diaf A \impl \boxf B) \impl \boxf (A \impl B)$, which is not
derivable in $\lbikt.$ However, we shall see in Section~\ref{sec:ext} that we
can recover Ewald's intuitionistic tense logic and Simpson's intuitionistic
modal logic by extending $\lbikt$ with two structural rules.

\section{Nested Sequents}\label{sec:nested}

\begin{figure}[t]
{\small
$$
\begin{array}{lll@{\qquad\qquad}lll}
\tau^-(A) & = & A 
  & \tau^+(A) & = & A \\
\tau^-(X, Y) & = & \tau^-(X) \land \tau^-(Y) 
  & \tau^+(X, Y) & = & \tau^+(X) \lor \tau^+(Y) \\
\tau^-(X \struct Y) & = & \tau^-(X) \dimpl \tau^+(Y) 
  & \tau^+(X \struct Y) & = & \tau^-(X) \impl \tau^+(Y) \\
\tau^-(\circ X) & = & \diaf \tau^-(X) 
  & \tau^+(\circ X) & = & \boxf \tau^+(X) \\
\tau^-(\bullet X) & = & \diap \tau^-(X) 
  & \tau^+(\bullet X) & = & \boxp \tau^+(X) \\
\tau(X \struct Y) & = & \tau^-(X) \impl \tau^+(Y) 
\end{array}
$$
}
\caption{Formula Translation of Nested Sequents}
\label{fig:interpretation}
\end{figure}

The formulae of $\bikt$ are built from a set $Atoms$ of
atomic formulae via the grammar below, with $p \in Atoms$:
$$
  A ::= p \mid \top \mid \bot \mid A \impl A \mid A \dimpl A 
             \mid A \land A \mid A \lor A
             \mid \boxf A \mid \diaf A \mid \boxp A \mid \diap A.
$$
A structure is defined by the following grammar, where $A$ is a $\bikt$ formula:
$$X := \emptyset \mid A \mid (X, X) \mid X \struct X \mid \circwhiteshort X \mid \circblackshort X.$$

The structural connective ``,'' is associative and commutative
and $\emptyset$ is its unit. We always consider structures modulo
these equivalences. To reduce parentheses, we assume that
``$\circwhiteshort$'' and ``$\circblackshort$'' bind tighter than
``,'' which binds tighter than ``$\struct$''. Thus, we write
$\circblackshort X, Y \struct Z$ to mean $(\circblack X, Y) \struct
Z$. 

A {\em nested sequent} is a structure of the form $X \struct Y$.
This notion of nested sequents generalises Kashima's nested
sequents~\cite{kashima-cut-free-tense} for classical tense logics,
Br\"unnler's nested sequents~\cite{DBLP:conf/tableaux/BrunnlerS09}
and Poggiolesi's tree-hypersequents~\cite{poggiolesi2009} for classical modal
logics.
Figure~\ref{fig:interpretation} shows the formula-translation of
nested sequents. On both sides of the sequent, $\circ$ is interpreted
as a white (modal) operator and $\bullet$ as a black (tense) operator.
Note that however, on the lefthand side of the sequent, 
$\struct$ is interpreted as exclusion, while on the righthand side,
it is interpreted as implication.

The occurrence of a formula $A$ in a structure can have three different
{\em polorities}: {\em neutral}, {\em positive} or {\em negative}. These
are defined inductively below:
\begin{itemize}
\item The occurrence of $A$ in $X$ is neutral if it does not
occur in the scope of the structural connective $\struct.$
\item If the occurrence of $A$ in $X$ is neutral then
it is positive in $Y \struct X$ and negative in $X \struct Y$
for any structure $Y.$
\item If the occurrence of $A$ in $X$ is positive (resp. negative)
then it is also positive (resp. negative) in $X \struct Y$, $Y \struct X$,
$(X,Y)$, $(Y,X)$, $\circ X$ and $\bullet X$, for any structure $Y.$
\end{itemize}
The polarity of a structure occurrence $X$ in another structure $Y$ is defined analogously, substituting
the formula occurrence for the structure occurrence $X$. 
Note that as a consequence of the overloading of $\struct$ to represent
the structural proxies for both $\impl$ and $\dimpl$, further nesting 
of a negative context within $\struct$ does not change its polarity.

A {\em context} is a structure with a hole or a placeholder $[]$.
Contexts are ranged over by $\Sigma[]$. We write $\Sigma[X]$ for
the structure obtained by filling the hole $[]$ in the context $\Sigma[]$
with a structure $X.$ The notion of polarities of a context is
defined as above, treating the hole in the context as a formula occurrence.
We say that a context $\Sigma[~]$ is {\em neutral} if the hole $[~]$ 
has neutral polarity, {\em positive} if it has positive polarity, and {\em negative} if
it has negative polarity. 
Thus, the hole in a neutral context is never under the scope of
$\struct$.
We write $\Sigma^-[]$ to indicate that $\Sigma[]$ is 
a negative context and $\Sigma^+[]$ to indicate that
it is a positive context. Intuitively, if one views a nested
sequent as a tree (with structural connectives and multisets of formulae
as nodes), then a hole in a context is negative if it appears
to the left of the closest ancestor node labelled with $\struct.$

%% A {\em neutral} context is defined via:
%% \[
%% \Sigma[] ::= [] 
%%               \ \mid \ \Sigma[],(Y)
%%               \ \mid \ (Y), \Sigma[] 
%%               \ \mid \ \circwhiteshort{\Sigma[]} 
%%               \ \mid \  \circblackshort{\Sigma[]}
%% \]
%% Intuitively, the hole in a neutral context is never under the scope of
%% $\struct$.

%% Positive and negative contexts are defined inductively as follows:
%% \begin{itemize}
%% \item If $\Sigma[]$ is a neutral context then $\Sigma[] \struct Y$ is a negative
%% context and $Y \struct \Sigma[]$ is a positive context.
%% \item If $\Sigma[]$ is a positive/negative context then 
%% so are $(\Sigma[],Y)$, $(Y, \Sigma[])$, $\circblack{\Sigma[]}$, $\circwhite{\Sigma[]}$,
%% $\Sigma[]\struct Y$, and $Y \struct \Sigma[].$
%% \end{itemize}

The context $\Sigma[]$ is {\em strict} if 
it has any  of the forms:
$$
\Sigma'[X \struct []]
\qquad
\Sigma'[[] \struct X]
\qquad
\Sigma'[\circwhiteshort{[]}]
\qquad
\Sigma'[\circblackshort{[]}]
$$ 
Intuitively, in the formation tree of a strict context, the hole
must be an immediate child of $\struct$ or
$\circwhiteshort$ or $\circblackshort$. This notion of strict contexts
will be used in later in Section~\ref{sec:calculi}.

\begin{example}
The context
$\circblack{[], (X \struct Y)}$ is a neutral context but
$\circblack{([], X) \struct Y}$ is not.
Both 
$\circblack{[], (X \struct Y)} \struct Z$ 
and
$\circblack{([], X) \struct Y} \struct Z$ 
are negative contexts. The context
$\circblackshort{[]} \struct Z$ is a strict context but
$\circblack{([], X) \struct Y} \struct Z$ is not.
\end{example}

\section{Nested Sequent Calculi}\label{sec:calculi}

\begin{figure}
{\small
{\bf Identity and logical constants:}
$$
\begin{array}{c@{\qquad\qquad}c@{\qquad\qquad}c}
\AxiomC{}
\RightLabel{$id$} \UnaryInfC{$X, A \struct A, Y$}
\DisplayProof
&
\AxiomC{}
\RightLabel{$\bot_L$} \UnaryInfC{$X, \bot \struct Y$}
\DisplayProof
&
\AxiomC{$$}
\RightLabel{$\top_R$} \UnaryInfC{$X \struct \top, Y$}
\DisplayProof
\end{array}
$$

{\bf Structural rules: }
$$
\begin{array}{c@{\qquad\qquad}c@{\qquad\qquad}c@{\qquad\qquad}c}
\AxiomC{$X \struct Z$}
\RightLabel{$w_L$} \UnaryInfC{$X, Y \struct Z$}
\DisplayProof
&
\AxiomC{$X \struct Z$}
\RightLabel{$w_R$} \UnaryInfC{$X \struct Y, Z$}
\DisplayProof
&
\AxiomC{$X, Y, Y \struct Z$}
\RightLabel{$c_L$} \UnaryInfC{$X, Y \struct Z$}
\DisplayProof
&
\AxiomC{$X \struct Y, Y, Z$}
\RightLabel{$c_R$} \UnaryInfC{$X \struct Y, Z$}
\DisplayProof
\end{array}
$$
$$
\begin{array}{c@{\qquad}c@{\qquad}c@{\qquad}c}
\AxiomC{$(X_1 \struct Y_1), X_2 \struct Y_2$}
\RightLabel{$s_L$} \UnaryInfC{$X_1, X_2 \struct Y_1, Y_2$}
\DisplayProof
&
\AxiomC{$X_1 \struct Y_1, (X_2 \struct Y_2)$}
\RightLabel{$s_R$} \UnaryInfC{$X_1, X_2 \struct Y_1, Y_2$}
\DisplayProof
&
\AxiomC{$X_2 \struct Y_2, Y_1$}
\RightLabel{$\struct_L$} \UnaryInfC{$(X_2 \struct Y_2) \struct Y_1$}
\DisplayProof
&
\AxiomC{$X_1, X_2 \struct Y_2$}
\RightLabel{$\struct_R$} \UnaryInfC{$X_1 \struct (X_2 \struct Y_2)$}
\DisplayProof
\\[2em]
\AxiomC{$\circblackshort{X} \struct Y$}
\RightLabel{$rp_\circ$} \doubleLine \UnaryInfC{$X \struct \circwhiteshort{Y}$}
\DisplayProof
&
\AxiomC{$\circwhiteshort{X} \struct Y$}
\RightLabel{$rp_\bullet$} \doubleLine \UnaryInfC{$X \struct \circblackshort{Y}$}
\DisplayProof
&
\multicolumn{2}{c}{
\AxiomC{$X_1 \struct Y_1, A$}
\AxiomC{$A, X_2 \struct Y_2$}
\RightLabel{$cut$} \BinaryInfC{$X_1, X_2 \struct Y_1, Y_2$}
\DisplayProof
}
\end{array}
$$

%\vspace{1cm}

{\bf Logical rules:}
$$
\begin{array}{l@{\qquad\qquad}l}
\AxiomC{$X, B_i \struct Y$}
\RightLabel{$\land_L \mbox{  } i \in \{1, 2\}$} \UnaryInfC{$X, B_1 \land B_2 \struct Y$}
\DisplayProof
&
\AxiomC{$X \struct A, Y$}
\AxiomC{$X \struct B, Y$}
\RightLabel{$\land_R$} \BinaryInfC{$X \struct A \land B, Y$}
\DisplayProof
\\[2em]
\AxiomC{$X, A \struct Y$}
\AxiomC{$X, B \struct Y$}
\RightLabel{$\lor_L$} \BinaryInfC{$X, A \lor B \struct Y$}
\DisplayProof
&
\AxiomC{$X \struct B_i, Y$}
\RightLabel{$\lor_R \mbox{  } i \in \{1, 2\}$} \UnaryInfC{$X \struct B_1 \lor B_2, Y$}
\DisplayProof
\\[2em]
\AxiomC{$X \struct A, Y$}
\AxiomC{$X, B \struct Y$}
\RightLabel{$\impl_L$} \BinaryInfC{$X, A \impl B \struct Y$}
\DisplayProof
&
\AxiomC{$X, A \struct B$}
\RightLabel{$\impl_R$} \UnaryInfC{$X \struct A \impl B, Y$}
\DisplayProof
\\[2em]
\AxiomC{$A \struct B, Y$}
\RightLabel{$\dimpl_L$} \UnaryInfC{$X, A \dimpl B \struct Y$}
\DisplayProof
&
\AxiomC{$X \struct A, Y$}
\AxiomC{$X, B \struct Y$}
\RightLabel{$\dimpl_R$} \BinaryInfC{$X \struct A \dimpl B, Y$}
\DisplayProof
\end{array}
$$
\\[1em]
$$
\begin{array}{l@{\qquad\qquad}l@{\qquad\qquad}l@{\qquad\qquad}l}
\AxiomC{$A \struct X$}
\RightLabel{$\boxf_L$} \UnaryInfC{$\boxf A \struct \circwhiteshort{X}$}
\DisplayProof
&
\AxiomC{$X \struct \circwhiteshort{A}$}
\RightLabel{$\boxf_R$} \UnaryInfC{$X \struct \boxf A$}
\DisplayProof
&
\AxiomC{$A \struct X$}
\RightLabel{$\boxp_L$} \UnaryInfC{$\boxp A \struct \circblackshort{X}$}
\DisplayProof
&
\AxiomC{$X \struct \circblackshort{A}$}
\RightLabel{$\boxp_R$} \UnaryInfC{$X \struct \boxp A$}
\DisplayProof
\\[2em]
\AxiomC{$\circwhiteshort{A} \struct X$}
\RightLabel{$\diaf_L$} \UnaryInfC{$\diaf A \struct X$}
\DisplayProof
&
\AxiomC{$X \struct A$}
\RightLabel{$\diaf_R$} \UnaryInfC{$\circwhiteshort{X} \struct \diaf A$}
\DisplayProof
&
\AxiomC{$\circblackshort{A} \struct X$}
\RightLabel{$\diap_L$} \UnaryInfC{$\diap A \struct X$}
\DisplayProof
&
\AxiomC{$X \struct A$}
\RightLabel{$\diap_R$} \UnaryInfC{$\circblackshort{X} \struct \diap A$}
\DisplayProof
\end{array}
$$
}
%\vspace{1cm}
\caption{$\lbikt$: a shallow inference system for $\bikt$}
\label{fig:lbikt}
\end{figure}

\begin{figure}
{\small
{\bf Identity and logical constants:}

$$
\begin{array}{c@{\qquad\qquad}c@{\qquad\qquad}c}
\AxiomC{}
\RightLabel{$id$} \UnaryInfC{$\Sigma[X, A \struct A, Y]$}
\DisplayProof
&
\AxiomC{}
\RightLabel{$\bot_L$} \UnaryInfC{$\Sigma[\bot, X \struct Y]$}
\DisplayProof
&
\AxiomC{}
\RightLabel{$\top_R$} \UnaryInfC{$\Sigma[X \struct \top, Y]$}
\DisplayProof
\end{array}
$$

{\bf Propagation rules:}

$$
\begin{array}{c@{\qquad}c}
\AxiomC{$\Sigma^-[A, (A, X \struct Y)]$}
\RightLabel{$\struct_{L1}$} \UnaryInfC{$\Sigma^-[A, X \struct Y]$}
\DisplayProof
&
\AxiomC{$\Sigma^+[(X \struct Y, A), A]$}
\RightLabel{$\struct_{R1}$} \UnaryInfC{$\Sigma^+[X \struct Y,A]$}
\DisplayProof
\\[2em]
\AxiomC{$\Sigma[X, A \struct W, (A, Y \struct Z)]$}
\RightLabel{$\struct_{L2}$} \UnaryInfC{$\Sigma[X, A \struct W, (Y \struct Z)]$}
\DisplayProof
&
\AxiomC{$\Sigma[(X \struct Y, A), W \struct A, Z]$}
\RightLabel{$\struct_{R2}$} \UnaryInfC{$\Sigma[(X \struct Y), W \struct A, Z]$}
\DisplayProof
\\[2em]
\AxiomC{$\Sigma^-[A, \circblack{\boxf A,X}]$}
\RightLabel{$\boxf_{L1}$} \UnaryInfC{$\Sigma^-[\circblack{\boxf A,X}]$}
\DisplayProof
\qquad
\AxiomC{$\Sigma^+[A, \circblack{\diaf A, X}]$}
\RightLabel{$\diaf_{R1}$} \UnaryInfC{$\Sigma^+[\circblack{\diaf A, X}]$}
\DisplayProof
&
\AxiomC{$\Sigma^-[A, \circwhite{\boxp A,X}]$}
\RightLabel{$\boxp_{L1}$} \UnaryInfC{$\Sigma^-[\circwhite{\boxp A,X}]$}
\DisplayProof
\qquad
\AxiomC{$\Sigma^+[A, \circwhite{\diap A,X}]$}
\RightLabel{$\diap_{R1}$} \UnaryInfC{$\Sigma^+[\circwhite{\diap A,X}]$}
\DisplayProof
\\[2em]
\AxiomC{$\Sigma[\boxp A,X \struct \circblack{A \struct Y},Z]$}
\RightLabel{$\boxp_{L2}$} \UnaryInfC{$\Sigma[\boxp A,X \struct \circblackshort{Y}, Z]$}
\DisplayProof
&
\AxiomC{$\Sigma[\circwhite{X \struct A}, Y \struct Z, \diaf A]$}
\RightLabel{$\diaf_{R2}$} \UnaryInfC{$\Sigma[\circwhiteshort{X}, Y \struct Z, \diaf A]$}
\DisplayProof
\\[2em]
\AxiomC{$\Sigma[\boxf A, X \struct \circwhite{A \struct Y}, Z]$}
\RightLabel{$\boxf_{L2}$} \UnaryInfC{$\Sigma[\boxf A, X \struct \circwhiteshort{Y}, Z]$}
\DisplayProof
&
\AxiomC{$\Sigma[\circblack{X \struct A}, Y \struct Z, \diap A]$}
\RightLabel{$\diap_{R2}$} \UnaryInfC{$\Sigma[\circblackshort{X}, Y \struct Z, \diap A]$}
\DisplayProof
\end{array}
$$

%\vspace{0.5cm}

{\bf Logical rules:}

$$
\begin{array}{c@{\qquad\qquad\qquad}c}
\AxiomC{$\Sigma^-[A \lor B, A]$}
\AxiomC{$\Sigma^-[A \lor B, B]$}
\RightLabel{$\lor_L$} \BinaryInfC{$\Sigma^-[A \lor B]$}
\DisplayProof
&
\AxiomC{$\Sigma^+[A \lor B, A, B]$}
\RightLabel{$\lor_R$} \UnaryInfC{$\Sigma^+[A \lor B]$}
\DisplayProof
\\[2em]
\AxiomC{$\Sigma^-[A \land B, A, B]$}
\RightLabel{$\land_L$} \UnaryInfC{$\Sigma^-[A \land B]$}
\DisplayProof
&
\AxiomC{$\Sigma^+[A \land B, A]$}
\AxiomC{$\Sigma^+[A \land B, B]$}
\RightLabel{$\land_R$} \BinaryInfC{$\Sigma^+[A \land B]$}
\DisplayProof
\\[2em]
\AxiomC{$\Sigma^-[A \dimpl B, (A \struct B)]$}
\RightLabel{$\dimpl_L$} \UnaryInfC{$\Sigma^-[A \dimpl B]$}
\DisplayProof
&
\AxiomC{$\Sigma^+[A \impl B, (A \struct B)]$}
\RightLabel{$\impl_R$} \UnaryInfC{$\Sigma^+[A \impl B]$}
\DisplayProof
\\[2em]
\multicolumn{2}{c}{
\AxiomC{$\Sigma^-[X, A \impl B \struct A]$}
\AxiomC{$\Sigma^-[X, A \impl B, B]$}
\RightLabel{$\impl_L$} \BinaryInfC{$\Sigma^-[X, A \impl B]$}
\DisplayProof
\qquad 
\hbox{$\Sigma^-[]$ is a strict context}
}
\\[2em]
\multicolumn{2}{c}{
\AxiomC{$\Sigma^+[X, A \dimpl B, A]$}
\AxiomC{$\Sigma^+[B \struct X, A \dimpl B]$}
\RightLabel{$\dimpl_R$} \BinaryInfC{$\Sigma^+[X, A \dimpl B]$}
\DisplayProof
\qquad
\hbox{$\Sigma^+[]$ is a strict context}
}
\\[2em]
\AxiomC{$\Sigma^-[\diaf A, \circwhiteshort{A}]$}
\RightLabel{$\diaf_L$} \UnaryInfC{$\Sigma^-[\diaf A]$}
\DisplayProof
\qquad
\AxiomC{$\Sigma^+[\boxf A, \circwhiteshort{A}]$}
\RightLabel{$\boxf_R$} \UnaryInfC{$\Sigma^+[\boxf A]$}
\DisplayProof
&
%\\[2em]
\AxiomC{$\Sigma^-[\diap A, \circblackshort{A}]$}
\RightLabel{$\diap_L$} \UnaryInfC{$\Sigma^-[\diap A]$}
\DisplayProof
\qquad
%&
\AxiomC{$\Sigma^+[\boxp A, \circblackshort{A}]$}
\RightLabel{$\boxp_R$} \UnaryInfC{$\Sigma^+[\boxp A]$}
\DisplayProof
\end{array}
$$
}
\caption{$\dbikt$: a deep inference system for $\bikt$}
\label{fig:dbikt}
\end{figure}

We now present the two nested sequent calculi that we will use in the
rest of the paper: a shallow inference calculus $\lbikt$ and a deep
inference calculus $\dbikt$.

Fig.~\ref{fig:lbikt} gives the rules of the shallow inference calculus
$\lbikt$. Central to this calculus is the idea that inference rules
can only be applied to formulae at the top level of nested sequents,
and the structural rules $s_L$, $s_R$, $\struct_L$, $\struct_R$,
$rp_\circ$ and $rp_\bullet$, also called the {\em residuation rules},
are used to bring the required sub-structures to the top level. 
These rules are similar to residuation postulates in display logic, 
are essential for the cut-elimination proof of $\lbikt$, but  contain too much
non-determinism for effective proof search. 
Another issue with proof search in $\lbikt$ is the structural contraction rules,
which allow contraction on arbitrary structures, not just formulae as
in traditional sequent calculi. 

$\lbikt$ is as a merger of two calculi: the LBiInt
calculus~\cite{gorepostniecetiu2008,postniece2009} for the
intuitionistic connectives, and the display
calculus~\cite{Gore98IGPL} 
for the tense connectives.

Note that we use $\circ$ and $\bullet$ as structural
proxies for the non-residuated pairs $(\diaf, \boxf)$ and $(\diap,
\boxp)$ respectively, whereas Wansing~\cite{Wansing1994} uses only one
$\bullet$ as a structural proxy for the residuated pair $(\diap,
\boxf)$ and recovers $(\diaf, \boxp)$ via classical negation,
while Gor\'e~\cite{Gore98IGPL} uses $\circ$ and $\bullet$ as structural proxies for the
residuated pairs $(\diaf,\boxp)$ and $(\diap,\boxf)$ respectively.
As we shall see later, our choice allows
us to retain the modal fragment $(\diaf,\boxf)$ 
by simply eliding all rules that contain ``black'' operators from
our deep sequent calculus.

Fig.~\ref{fig:dbikt} gives the rules of the deep inference calculus
$\dbikt$. Here the inference rules can be applied at any level of the
nested sequent, indicated by the use of contexts. Notably, there are
no residuation rules; indeed one of the goals of our paper is to show
that the residuation rules of $\lbikt$ can be simulated by deep
inference and propagation rules in $\dbikt$. Another feature of
$\dbikt$ is the use of polarities in defining contexts to which rules
are applicable. For example, the premise of the $\boxf_{L1}$ rule
denotes a negative context $\Sigma$ which itself contains a formula
$A$ and a $\circblackshort$-structure, such that the
$\circblackshort$-structure contains $\boxf A$.

$\dbikt$ achieves the goal of merging the DBiInt
calculus~\cite{postniece2009} and a two-sided version of the DKt
calculus~\cite{gorepostniecetiu2009}. While in the shallow inference
case, a calculus for \bikt{} could be obtained relatively easily by
merging shallow inference calculi for BiInt and tense logics, the
combination of calculi is not so obvious in the deep inference
case. Although the propagation rules for $\struct$-structures remain
the same as in the BiInt case~\cite{postniece2009}, the propagation
rules for $\circ$- and $\bullet$-structures are not as simple as in
the DKt calculus~\cite{gorepostniecetiu2009}. 
Since we do not assume any direct relationship between $\boxf$ and
$\diaf$, or $\boxp$ and $\diap$, propagation rules 
like $\boxp_{L2}$ need to involve the $\struct$ structural connective so
they can refer to both sides of the nested sequent.

Note that in the rules $\impl_L$ and $\dimpl_R$ in $\dbikt$,
we require that the contexts in which the principal formulae
reside are strict contexts. This is strictly speaking not necessary, i.e.,
we could remove the proviso without affecting the expressivity of the proof
system. The proviso does,
however, reduce the non-determinism in partitioning the 
contexts in $\impl_L$ or $\dimpl_R.$
Consider, for example, the nested sequent
$\circ(a, b \impl c) \struct d.$ Without the requirement of strict contexts,
there are two instances of $\impl_L$ with that nested sequent as the conclusion:
$$
\infer[\impl_L]
{\circ(a, b \impl c) \struct d}
{\circ(a, (b \impl c \struct b)) \struct d
 &
 \circ(a, b \impl c, c) \struct d
}
$$
$$
\infer[\impl_L]
{\circ(a, b \impl c) \struct d}
{
 \circ(a, b \impl c \struct b) \struct d
 &
 \circ(a, b \struct c, c) \struct d
}
$$
In the first instance, the context is $\circ(a, [~]) \struct d$, which is
not strict, whereas in the second instance, it is $\circ([~])\struct d$,
which is strict.  
In general, if there are $n$ formulae connected to $b\impl c$ via the disjunctive structural
connective, then there are $2^n$ possible instances of $\impl_L$
without the strict context proviso. 

We write $\deriveshallow{\pi}{X \struct Y}$ when $\pi$ is a 
derivation of the shallow sequent $X \struct Y$ in $\lbikt$, and
$\derivedeep{\pi}{X \struct Y}$ when $\pi$ is a derivation of the
sequent $X \struct Y$ in $\dbikt$. In either calculus,
the height $|\pi|$ of
a derivation $\pi$ is the number of sequents on the longest branch.

\begin{example}
  Below we derive Ewald's axiom 9 for $IK_t$~\cite{ewald1986} in
  $\lbikt$ and $\dbikt$. The $\lbikt$-derivation on the left read
  bottom-up brings the required sub-structure $\diap A$ to the
  top-level using the residuation rule $rp_\circwhiteshort$ and
  applies $\diap_R$ backward. The $\dbikt$-derivation on the right instead applies $\boxf_R$
  deeply, and propagates the required formulae to the appropriate
  sub-structure using $\diap_{R1}$. Note that contraction is implicit
  in $\diap_{R1}$, and all propagation rules.
$$
\begin{array}{c@{\qquad\qquad}c}
\AxiomC{$$}
\RightLabel{$id$} \UnaryInfC{$A \struct A$}
\RightLabel{$\diap_R$} \UnaryInfC{$\circblackshort{A} \struct \diap A$}
\RightLabel{$rp_\circ$} \UnaryInfC{$A \struct \circwhiteshort{\diap A}$}
\RightLabel{$\boxf_R$} \UnaryInfC{$A \struct \boxf \diap A$}
\RightLabel{$\impl_R$} \UnaryInfC{$\emptyset \struct A \rightarrow \boxf \diap A$}
\DisplayProof
&
\AxiomC{$$}
\RightLabel{$id$} \UnaryInfC{$\emptyset \struct (A \struct A, \circwhite{\diap A})$}
\RightLabel{$\diap_{R1}$} \UnaryInfC{$\emptyset \struct (A \struct \circwhite{\diap A})$}
\RightLabel{$\boxf_R$} \UnaryInfC{$\emptyset \struct (A \struct \boxf \diap A)$}
\RightLabel{$\impl_R$} \UnaryInfC{$\emptyset \struct A \rightarrow \boxf \diap A$}
\DisplayProof
\end{array}
$$
\end{example}

\paragraph{Display property}

A (deep or shallow) nested sequent can be seen as a tree of traditional
sequents.
The structural rules of $\lbikt$ allows shuffling of structures
to display/un-display a particular node in the tree, so
inference rules can be applied to it. This is similar to
the display property in traditional display calculi, where
any substructure can be displayed and un-displayed. 
We state the display property of $\lbikt$ more precisely
in subsequent lemmas. We shall use two 
``display'' rules which are easily derivable using
$s_L$, $s_R$, $\struct_L$ and $\struct_R$:
$$
\AxiomC{$(X_1 \struct X_2) \struct Y$}
\RightLabel{$rp_\struct^L$} \doubleLine \UnaryInfC{$X_1 \struct X_2, Y$}
\DisplayProof
\qquad
\AxiomC{$X_1 \struct (X_2 \struct Y)$}
\RightLabel{$rp_\struct^R$} \doubleLine \UnaryInfC{$X_1, X_2 \struct Y$}
\DisplayProof
$$
Let $DP = \{rp_\struct^L, rp_\struct^R, rp_\circ, rp_\bullet\}$ and
let $DP$-derivable mean ``derivable using rules only from $DP$''.

The following lemmas can be proved by simple induction on the
size of the context $\Sigma[~].$

\begin{lemma}[Display property for neutral contexts]\label{lemma:display-property-neutral}
Let $\Sigma[]$ be a neutral context. 
Let $X$ be a structure and $p$ a propositional variable 
not occurring in $X$ nor $\Sigma[].$
Then there exist structures $Y$ and $Z$ such that:
\begin{enumerate}
\item $Y \struct p$ is DP-derivable from 
$X \struct \Sigma[p]$ 
and 

\item $p \struct Z$ is 
DP-derivable from $\Sigma[p] \struct X.$ 
\end{enumerate}
\end{lemma}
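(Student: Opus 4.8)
The plan is to prove both statements simultaneously by induction on the structure of the neutral context $\Sigma[~]$, following the informal tree-intuition that a neutral context is built from the hole using only the ``,'' connective together with white/black modalities and possibly further $\struct$-structures hanging off to the side (but never enclosing the hole on the left or right of a $\struct$). The two clauses are mutually dependent because, as one descends into $\Sigma[~]$, passing through a $\struct$ swaps the role of ``displaying on the left'' and ``displaying on the right'', so the induction hypothesis for clause 1 will be invoked to establish clause 2 at the next level down, and vice versa.

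First I would fix the base case: if $\Sigma[~] = [~]$ then for clause 1 we are given $X \struct \Sigma[p] = X \struct p$, and we may simply take $Y = X$ and $Z$ to be whatever is needed in clause 2; for clause 2, given $p \struct X$, take $Z = X$ and $Y = X$ for clause 1. More precisely, since we are proving the conjunction of the two clauses, in the base case we set $Y := X$ and $Z := X$ and both DP-derivations are trivial (zero-length). Then I would consider the inductive cases according to the outermost constructor of $\Sigma[~]$. The cases $\Sigma[~] = (\Sigma'[~], W)$ and $\Sigma[~] = (W, \Sigma'[~])$ are handled by noting that $\Sigma'[~]$ is again neutral; by the IH we get the desired $Y',Z'$ for $\Sigma'[~]$, and we absorb the extra side-structure $W$ using weakening-like moves available within $DP$ — more carefully, since $DP$ contains only residuation rules, the extra $W$ stays inside the structure $Y$ or $Z$ rather than being weakened away, so I would simply thread it through the $s_L, s_R, \struct_L, \struct_R$ applications. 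The cases $\Sigma[~] = \circwhiteshort{\Sigma'[~]}$ and $\Sigma[~] = \circblackshort{\Sigma'[~]}$ use $rp_\circ$ and $rp_\bullet$ to strip the modality from around the hole's ancestor path: e.g.\ from $X \struct \circwhiteshort{\Sigma'[p]}$ the rule $rp_\bullet$ read upward gives $\circblackshort{X} \struct \Sigma'[p]$ — wait, one must check the direction; $rp_\circ$ relates $\circblackshort{X}\struct Y$ and $X\struct\circwhiteshort{Y}$, so from $X\struct\circwhiteshort{\Sigma'[p]}$ we obtain $\circblackshort{X}\struct\Sigma'[p]$, and then apply the IH (clause 1) to the neutral context $\Sigma'[~]$ with the structure $\circblackshort{X}$ in place of $X$. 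The remaining, most delicate case is $\Sigma[~] = (\Sigma'[~], W)$ where $\Sigma'[~]$ itself has the form $W_1 \struct \Sigma''[~]$ or $\Sigma''[~]\struct W_1$ — but here one must remember that for $\Sigma[~]$ to be \emph{neutral} the hole cannot lie under any $\struct$, so in fact these sub-cases do not arise for the hole's own path; $\struct$-structures may only sit in the ``$W$'' side-positions, and those are handled exactly as the side-structures in the ``,'' case.

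The main obstacle I expect is bookkeeping the exact shape of $Y$ and $Z$ so that the two clauses feed each other correctly, together with getting the direction of each residuation rule right (each of $rp_\circ$, $rp_\bullet$, $rp_\struct^L$, $rp_\struct^R$ is a two-way/double-line rule, so direction is not an obstruction in principle, but it must be tracked). A secondary subtlety is that $DP$ does \emph{not} include weakening or contraction, so nothing from $\Sigma[~]$ or $X$ may be discarded; the displayed sequent $Y \struct p$ must retain, repackaged inside $Y$, all the material of $X$ and of $\Sigma[~]$. I would therefore state the induction hypothesis in the strengthened form that records $Y$ (resp.\ $Z$) explicitly as a specific structure built from $X$ and the side-structures of $\Sigma[~]$ by $\struct$ and ``,'' and modalities, and verify the $DP$-derivation step by step for each constructor; the verification in each case is a two- or three-line application of the residuation rules and is routine once the shape is pinned down.
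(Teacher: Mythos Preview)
Your proposal is correct and matches the paper's approach, which simply states that the lemma is proved ``by simple induction on the size of the context $\Sigma[~]$.'' One minor remark: your opening claim that clauses~1 and~2 are mutually dependent via $\struct$ is unnecessary here---as you yourself observe later, a \emph{neutral} context never routes the hole through $\struct$, so each clause can be proved independently (the mutual dependence you anticipate is what drives the companion Lemmas for positive and negative contexts, not this one).
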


\begin{lemma}[Display property for positive contexts]\label{lemma:display-property-positive}
Let $\Sigma[]$ be a positive context. Let $X$ be
a structure and $p$ a propositional variable not 
occurring in $X$ nor $\Sigma[].$
Then there exist structures $Y$ and $Z$ such that:
\begin{enumerate}
\item $Y \struct p$ is $DP$-derivable from 
$X \struct \Sigma[p]$, and 

\item $Z \struct p$ is 
$DP$-derivable from $\Sigma[p] \struct X.$ 
\end{enumerate}
\end{lemma}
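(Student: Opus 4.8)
The plan is to prove (1) and (2) by a single \emph{simultaneous} induction on the size of $\Sigma[]$, with a case analysis on its outermost structural connective. Since the hole of $\Sigma[]$ is positive, $\Sigma[]$ is not the empty context, and reading off the polarity clauses shows that it must have exactly one of the shapes: $\Gamma \struct \Sigma_0[]$ with $\Sigma_0[]$ neutral or positive; $\Sigma_0[] \struct \Gamma$ with $\Sigma_0[]$ positive; $(\Sigma_0[],\Gamma)$ or $(\Gamma,\Sigma_0[])$ with $\Sigma_0[]$ positive; or $\circ\Sigma_0[]$ or $\bullet\Sigma_0[]$ with $\Sigma_0[]$ positive. (All the context-building operations preserve a positive hole; only placing a neutral context on the right of $\struct$ creates one, and only placing a neutral or positive context on the left of $\struct$ destroys it.) In every case $\Sigma_0[]$ is a strictly smaller context, to which either the induction hypothesis (when positive) or Lemma~\ref{lemma:display-property-neutral} (when neutral) applies.

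For each case the recipe is the same: using only the four $DP$ rules, together with the built-in associativity and commutativity of ``$,$'', rewrite the given sequent so that all the surrounding structure is pushed to the immediate left of the $\struct$ sitting just above $\Sigma_0[p]$ --- that is, into a sequent of the form $W \struct \Sigma_0[p]$ for part (1), or $\Sigma_0[p] \struct W$ for part (2) --- and then appeal to the appropriate part of the induction hypothesis (or of the neutral lemma). For example, in the case $\Sigma[]=\Gamma \struct \Sigma_0[]$: from $X \struct (\Gamma \struct \Sigma_0[p])$ one $rp_\struct^R$ step gives $(X,\Gamma) \struct \Sigma_0[p]$, and part (1) finishes; from $(\Gamma \struct \Sigma_0[p]) \struct X$, two $rp_\struct^L$ steps plus re-association give $(\Gamma \struct X) \struct \Sigma_0[p]$, and part (1) finishes. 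In the case $\Sigma[]=\Sigma_0[] \struct \Gamma$: $X \struct (\Sigma_0[p] \struct \Gamma)$ re-curries (two $rp_\struct^R$ steps) to $\Sigma_0[p] \struct (X \struct \Gamma)$, and part (2) finishes; $(\Sigma_0[p]\struct\Gamma)\struct X$ becomes $\Sigma_0[p]\struct(\Gamma,X)$ by $rp_\struct^L$, and part (2) finishes. The comma cases are handled the same way, peeling $\Gamma$ off as an inner or outer structure with $rp_\struct^L$ / $rp_\struct^R$; the modal cases use $rp_\circ$ / $rp_\bullet$, e.g. $X \struct \circ\Sigma_0[p] \equiv \bullet X \struct \Sigma_0[p]$ and $\circ\Sigma_0[p] \struct X \equiv \Sigma_0[p]\struct\bullet X$. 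In all cases the freshness side-condition on $p$ survives, because the new left-hand structures are assembled only from $X$ and substructures of $\Sigma[]$, none of which contain $p$.

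Two things are worth emphasising. First, (1) and (2) genuinely must be proved together: proving (1) in the case $\Sigma_0[]\struct\Gamma$ uses (2) for $\Sigma_0$, and proving (2) in the case $\Gamma\struct\Sigma_0[]$ uses (1) for $\Sigma_0$, so neither part can be established on its own. Second, the conceptual content of part (2) is that a positive hole is ``succedent-side'' once one passes the nearest enclosing $\struct$, so $p$ ends up displayed on the \emph{right} ($Z \struct p$) even though $\Sigma[p]$ began on the antecedent side of the sequent --- which is exactly what distinguishes this statement from the neutral-context lemma, where a hole on the antecedent side is displayed as $p \struct Z$.

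The main obstacle is not any calculation --- the $DP$-rewrites are just re-currying and re-association, and all four display rules are invertible ``double-line'' rules derivable from $s_L, s_R, \struct_L, \struct_R$ --- but rather the bookkeeping: checking, in each of the cases, that the residual sub-context $\Sigma_0[]$ has precisely the polarity (positive or neutral) needed for the induction hypothesis or for Lemma~\ref{lemma:display-property-neutral} to apply, and that the chosen sequence of rewrites really lands on a sequent to which that lemma applies.
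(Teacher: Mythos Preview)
Your proposal is correct and follows essentially the same approach as the paper: the paper states only that the lemma ``can be proved by simple induction on the size of the context $\Sigma[~]$'' without giving any details, and your case analysis on the outermost structural connective of $\Sigma[]$---invoking Lemma~\ref{lemma:display-property-neutral} in the base case $\Gamma \struct \Sigma_0[]$ with $\Sigma_0[]$ neutral and the simultaneous induction hypothesis otherwise---is exactly how that induction unfolds. The $DP$-rewrites you sketch are all valid instances of the four display rules, and your observation that parts (1) and (2) must be proved together is correct and worth noting.
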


\begin{lemma}[Display property for negative contexts]\label{lemma:display-property-negative}
Let $\Sigma[]$ be a negative context. Let $X$ be
a structure and $p$ a propositional variable 
not occurring in  $X$ nor $\Sigma[].$
Then there exist structures $Y$ and $Z$ such that:
\begin{enumerate}
\item $p \struct Y$ is $DP$-derivable from 
$X \struct \Sigma[p]$ 
and 

\item $p \struct Z$ is 
$DP$-derivable from $\Sigma[p] \struct X.$ 
\end{enumerate}
\end{lemma}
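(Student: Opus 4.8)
The plan is to prove the three display-property lemmas simultaneously by induction on the size $|\Sigma[~]|$, with a case analysis on the outermost structural connective of $\Sigma[~]$. Since $\Sigma[~]$ is here assumed negative, the hole is never the whole of $\Sigma[~]$, and it always sits inside a proper subcontext $\Sigma_1[~]$ of one of the shapes $\circ\Sigma_1[~]$, $\bullet\Sigma_1[~]$, $(\Sigma_1[~],V)$, $(V,\Sigma_1[~])$, $\Sigma_1[~]\struct W$ or $W\struct\Sigma_1[~]$; in each shape $\Sigma_1[~]$ is strictly smaller, so the induction hypothesis applies to it. The polarity clauses pin down the polarity of $\Sigma_1[~]$: the hole is again negative in all cases except $\Sigma[~]=\Sigma_1[~]\struct W$, where it may instead be neutral, and there $\Sigma_1[~]$ is a neutral context so we appeal to Lemma~\ref{lemma:display-property-neutral}. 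The freshness condition on $p$ needs no special care, because every rule we apply merely rearranges material ($X$, $W$ and comma-siblings) that was already present and hence $p$-free.

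The modal cases are immediate. If $\Sigma[~]=\circ\Sigma_1[~]$, then from $X\struct\circ\Sigma_1[p]$ one use of $rp_\circ$ gives $\bullet X\struct\Sigma_1[p]$, and the negative induction hypothesis for $\Sigma_1[~]$ (with $\bullet X$ in place of $X$) yields the required $p\struct Y$; dually $rp_\bullet$ turns $\circ\Sigma_1[p]\struct X$ into $\Sigma_1[p]\struct\bullet X$ and we use part~2 of the hypothesis. The case $\Sigma[~]=\bullet\Sigma_1[~]$ is the same with $rp_\circ$ and $rp_\bullet$ interchanged. Likewise $\Sigma[~]=W\struct\Sigma_1[~]$ forces $\Sigma_1[~]$ negative: $rp_\struct^R$ rewrites $X\struct(W\struct\Sigma_1[p])$ to $X,W\struct\Sigma_1[p]$, which already displays $\Sigma_1[p]$ as the succedent and is handled by the hypothesis; for part~2, $rp_\struct^L$ takes $(W\struct\Sigma_1[p])\struct X$ to $W\struct\Sigma_1[p],X$, and a second $rp_\struct^L$, this time splitting the right-hand comma as ``$X$ then $\Sigma_1[p]$'', reaches $(W\struct X)\struct\Sigma_1[p]$, whence part~1 of the hypothesis for $\Sigma_1[~]$.

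The comma cases and the case $\Sigma[~]=\Sigma_1[~]\struct W$ use the same idea: a residuation step that re-associates a comma so as to bring $\Sigma_1[p]$, and hence eventually $p$, onto one whole side of the sequent. If $\Sigma[~]=(\Sigma_1[~],V)$, with $\Sigma_1[~]$ negative, then $rp_\struct^L$ applied to $X\struct(\Sigma_1[p],V)$ but splitting the right-hand comma as ``$V$ then $\Sigma_1[p]$'' gives $(X\struct V)\struct\Sigma_1[p]$, ready for the hypothesis, while $rp_\struct^R$ sends $(\Sigma_1[p],V)\struct X$ to $\Sigma_1[p]\struct(V\struct X)$ for part~2; the shape $(V,\Sigma_1[~])$ is identical. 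If $\Sigma[~]=\Sigma_1[~]\struct W$ with $\Sigma_1[~]$ neutral, then two residuation steps ($rp_\struct^R$ downward to $X,\Sigma_1[p]\struct W$, then $rp_\struct^R$ upward splitting the left-hand comma so $\Sigma_1[p]$ is alone) produce $\Sigma_1[p]\struct(X\struct W)$, and Lemma~\ref{lemma:display-property-neutral}(2) for $\Sigma_1[~]$ then delivers $p\struct Y$; part~2 is easier, since $rp_\struct^L$ alone turns $(\Sigma_1[p]\struct W)\struct X$ into $\Sigma_1[p]\struct(W,X)$ before invoking the same clause. When $\Sigma_1[~]$ is negative instead, the identical residuation steps work and we quote the negative hypothesis for $\Sigma_1[~]$ rather than the neutral lemma.

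I expect the work here to be bookkeeping rather than insight. The two things to keep straight are: (i) in each case the subcontext recursed into has precisely the polarity needed to quote the correct sibling lemma --- this is where the remark that nesting a negative context inside $\struct$ does not change its polarity is used --- and (ii) the comma splits chosen for $rp_\struct^L$ and $rp_\struct^R$ are the ones that push the ``other'' material past $\struct$ so that $\Sigma_1[p]$ ends up displayed, rather than the splits that bury it deeper. Neither point is deep, so the ``simple induction'' characterisation is fair; the only mild subtlety, and the step I would double-check most carefully, is the comma-splitting in the $(\Sigma_1[~],V)$ and $\Sigma_1[~]\struct W$ cases, since that is where the residuation rules are used in a less obvious way.
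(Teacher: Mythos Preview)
Your proposal is correct and follows exactly the approach the paper indicates: a straightforward induction on the size of $\Sigma[~]$, case-splitting on the outermost structural connective and using the $DP$ rules to peel it off before recursing into the smaller subcontext. The paper gives no more detail than ``simple induction on the size of the context $\Sigma[~]$'', so you have simply (and accurately) unpacked that hint; your polarity bookkeeping --- in particular that $\Sigma_1[~]$ is forced negative when $\Sigma[~]=W\struct\Sigma_1[~]$, and may be neutral or negative when $\Sigma[~]=\Sigma_1[~]\struct W$ --- matches the definitions exactly.
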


Since the rules in $DP$ are all invertible,
the derivations constructed in the above lemmas 
are invertible derivations. That is, we can derive
$Y \struct p$ from $X \struct \Sigma[p]$
and vice versa. Note also that since rules in the
shallow system are closed under substitution, this also means
$Y \struct Z$ is derivable from $X \struct \Sigma[Z]$,
and vice versa, for any $Z.$

The display property of pure display calculi is the
ability to display/un-display a structure with respect to a
top-level turnstile $\vdash$ (say) as the {\em whole} of the
antecedent or succedent.  For example, we have to display $V \struct
W$ as the whole of the antecedent or succedent as $V \struct W \vdash
Z$ or $Z \vdash V \struct W$.  Our shallow nested sequent calculus 
instead enables us to ``zoom in'' to $V \struct W$ in $X \struct Y$ by
explicitly transforming the latter into $X', V \struct W, Y'$
so a rule can be applied to any top-level formula/structure of $V$ or
$W$. 
Our deep nested sequent calculus % is even more efficient since it
allows us to ``zoom in'' to $V \struct W$ by treating it as
the filler of a hole $\Sigma[V \struct W]$, without explicit
transformations.

\section{Cut elimination in $\lbikt$}
\label{sec:cut-elim}

\begin{figure}[t]
{\small
$$
\begin{array}{cccc}
\AxiomC{$\psi_1$}
\noLine \UnaryInfC{$X_1' \struct A$}
\RightLabel{$\diaf_R$} \UnaryInfC{$\circwhite{X_1'} \struct \diaf A$}
\noLine \UnaryInfC{$\vdots$}
\noLine \UnaryInfC{$X_1 \struct Y_1, \diaf A$}
\DisplayProof
\quad 
&
\quad
\AxiomC{$\psi_2$}
\noLine \UnaryInfC{$\circwhiteshort A \struct Y_2'$}
\RightLabel{$\diaf_L$} \UnaryInfC{$\diaf A \struct Y_2'$}
\noLine \UnaryInfC{$\vdots$}
\noLine \UnaryInfC{$\diaf A, X_2 \struct Y_2$}
\DisplayProof &
\quad
\AxiomC{$\circwhiteshort{X_1'} \struct (X_2 \struct Y_2)$}
\noLine \UnaryInfC{$\vdots$}
\noLine \UnaryInfC{$X_1 \struct Y_1, (X_2 \struct Y_2)$}
\RightLabel{$s_R$} \UnaryInfC{$X_1, X_2 \struct Y_1, Y_2$}
\DisplayProof
\quad
& 
\quad
\AxiomC{$\psi_1$}
\noLine \UnaryInfC{$X_1' \struct A$}
\AxiomC{$\psi_2$}
\noLine \UnaryInfC{$\circwhiteshort A \struct Y_2'$}
\RightLabel{$rp_\bullet$} \UnaryInfC{$A \struct \circblackshort{Y_2'}$}
\RightLabel{$cut$} \BinaryInfC{$X_1' \struct \circblackshort{Y_2'}$}
\RightLabel{$rp_\bullet$} \UnaryInfC{$\circwhiteshort{X_1'} \struct Y_2'$}
\noLine \UnaryInfC{$\vdots$}
\noLine \UnaryInfC{$\circwhiteshort{X_1'}, X_2 \struct Y_2$}
\RightLabel{$\struct_R$} \UnaryInfC{$\circwhiteshort{X_1'} \struct (X_2 \struct Y_2)$}
\DisplayProof \\ \\
(1) & (2) & (3) & (4)
\end{array} 
$$
}
\caption{An example of cut reduction}
\label{fig:cut}
\end{figure}

Our cut-elimination proof is based on the method of proof-substitution
presented in~\cite{gorepostniecetiu2008}. It is very similar
to the general cut elimination method used in display calculi.
The proof relies on the display property and the fact that
inference rules in $\lbikt$ are closed under substitutions.

We illustrate the cut reduction steps here with an example.
Consider the derivation below ending with a cut on $\diaf A$:
$$
\AxiomC{$\pi_1$}
\noLine \UnaryInfC{$X_1 \struct Y_1, \diaf A$}
\AxiomC{$\pi_2$}
\noLine \UnaryInfC{$\diaf A, X_2 \struct Y_2$}
\RightLabel{$cut$} \BinaryInfC{$X_1, X_2 \struct Y_1, Y_2$}
\DisplayProof
$$
Instead of permuting the cut rule locally, we trace the cut formula
$\diaf A$ until it becomes principal in the derivations $\pi_1$ and
$\pi_2$, and then apply cut on a smaller formula. Suppose
that $\pi_1$ and $\pi_2$ are respectively the derivations (1) and (2)
in Figure~\ref{fig:cut}. 
We first transform $\pi_1$ by substituting $(X_2 \struct Y_2)$ for
$\diaf A$ in $\pi_1$ and obtain the sub-derivation with an
open leaf as shown in Figure~\ref{fig:cut}(3). 
We then prove the open leaf by uniformly substituting
$\circwhite{X_1'}$ for $\diaf A$ in $\pi_2$, and applying cut on a
sub-formula $A$, as shown in Figure~\ref{fig:cut}(4).

The {\em cut rank} of an instance of cut 
is the size of the cut formula, as usual.
The cut rank $cr(\pi)$ of a derivation $\pi$
is the largest cut rank of the cut instances
in $\pi$ (or zero, if $\pi$ is cut-free). 
Given a formula $A$, we denote with $|A|$ its size.

To formalise the cut elimination proof, we first
introduce a notion of multiple-hole contexts. 
A {\em $k$-hole context} is a context with $k$ holes.
Given a $k$-hole context $\Sigma[\cdots]$
we write $\Sigma[X^{k}]$ to stand for the structure obtained from
$\Sigma[\cdots]$ by replacing each hole with an occurrence of the
structure $X$. For example, if $\Sigma[] = ([], \circblack{[], W
  \struct Y}) \struct Z$ then $\Sigma[(\circwhiteshort{U})^2] =
(\circwhiteshort{U}, \circblack{\circwhiteshort{U}, W \struct Y})
\struct Z$.
A {\em neutral} (resp. {\em positive} and {\em negative}) $k$-hole context is
a $k$-hole context where all the holes have neutral (resp. positive and negative) 
polarity. 
A {\em quasi-positive} (resp. {\em quasi-negative}) $k$-hole
context is a $k$-hole context where each hole in the context
has either neutral or positive (resp. negative) polarity. 
Obviously, a $k$-hole positive (negative) context 
is also a $k$-hole quasi-positive (quasi-negative) context.

Lemma~\ref{lm:cut-atm} states the proof substitutions needed to
eliminate atomic cuts. Lemmas~\ref{lm:cut-or}-\ref{lm:cut-diap} state
the proof substitutions needed for non-atomic cuts. We only give the
proofs of the cases involving the modal connectives as the other
proofs are unchanged from~\cite{gorepostniecetiu2008}.

\begin{lemma}
\label{lm:cut-atm}
Suppose $p, X \struct Y$ is cut-free derivable for some fixed $p$, $X$ and
$Y$. Then for any $k$-hole positive context $Z_1[\cdots]$ and any
$l$-hole quasi-positive context $Z_2[\cdots]$, if
$Z_1[p^k] \struct Z_2[p^l]$ is cut-free derivable, then $Z_1[(X \struct Y)^k] \struct Z_2[(X \struct Y)^l]$
is cut-free derivable.
\end{lemma}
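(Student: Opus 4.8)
The plan is to prove the statement by simultaneous induction on the height of the given cut-free derivation $\pi$ of $Z_1[p^k] \struct Z_2[p^l]$, together with the obvious companion statement in which $p, X \struct Y$ is replaced on the other side, i.e.\ that a cut-free derivation of $X \struct p$ can likewise be used to substitute $(X \struct Y)$ for positive occurrences of $p$. (We shall actually need a handful of symmetric variants at once, corresponding to the four shapes of display rules and the two sides of the sequent; I would state them as a single lemma with several clauses and do one induction.) The key observation driving the induction is the one emphasised just before the lemma: the inference rules of $\lbikt$ are closed under substitution of arbitrary structures for propositional variables, so substituting $(X \struct Y)$ for $p$ throughout $\pi$ turns $\pi$ into a ``pseudo-derivation'' whose only defects are at the leaves where $p$ was principal. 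It therefore suffices to repair those leaves.

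First I would set up the substitution: replacing each $p$ occurring in a positive position in $Z_1[p^k]$ and in a neutral-or-positive position in $Z_2[p^l]$ by the structure $(X \struct Y)$. Rules acting away from these occurrences go through verbatim. The rules that can have $p$ as a principal formula at a leaf are $id$ (and only $id$, since $p$ is atomic — there are no logical rules for propositional variables, and $\bot_L$, $\top_R$ involve the constants, not $p$). So the crux is: at a leaf $\Sigma'[W, p \struct p, V]$ of $\pi$ where the displayed left $p$ is negative and the displayed right $p$ is positive, we must now derive the structure obtained by substituting $(X\struct Y)$ for exactly those $p$-occurrences that were positive — the left $p$ of the $id$ stays $p$ (it is negative, hence not substituted, given the polarity constraints on $Z_1, Z_2$), but the right $p$, being positive, becomes $(X \struct Y)$, and any other $p$'s in $W$, $V$, $\Sigma'$ are substituted according to their own polarities. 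Here is where the display property is used: by Lemma~\ref{lemma:display-property-negative} (or its positive counterpart) applied to the context surrounding the surviving left-hand $p$, we can $DP$-derive from our target a sequent of the form $p \struct Z'$; and since $p, X \struct Y$ is assumed cut-free derivable and $DP$ is invertible, we can splice in that derivation of $p, X \struct Y$ (after weakening to match $Z'$, using $w_L, w_R$) and then run the $DP$-derivation backwards to reach the target. In effect, the single $id$ leaf is replaced by a small cut-free derivation built from $DP$-rules, weakenings, and the assumed derivation of $p, X \struct Y$.

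The induction then proceeds by cases on the last rule of $\pi$; in every non-$id$ case the rule is re-applied after substitution, appealing to the induction hypothesis on the premise(s) — noting that substituting a fixed structure for $p$ does not disturb the strict-context provisos of $\impl_L$, $\dimpl_R$ (those concern the shape of $\Sigma$, which is unchanged), and does not disturb polarities of holes since, as remarked in Section~\ref{sec:nested}, nesting a negative context inside $\struct$ leaves its polarity negative. I expect the main obstacle to be bookkeeping the polarities correctly through the substitution: one must check that every $p$-occurrence that gets replaced really is positive (so that Lemma~\ref{lemma:display-property-positive}/\ref{lemma:display-property-negative} applies in the right direction when repairing an $id$ leaf), and that the ``quasi-positive'' relaxation on $Z_2$ — allowing neutral holes as well — is harmless, which it is because a neutral hole in $Z_2$ sits outside every $\struct$, so substituting into it never changes whether a downstream rule is applicable. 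A secondary subtlety is making sure the repair at the $id$ leaf produces a genuinely \emph{cut-free} derivation: the spliced-in derivation of $p, X \struct Y$ is cut-free by hypothesis, and $DP$-rules and weakenings introduce no cuts, so this is fine, but it must be stated. The $k$- and $l$-hole machinery introduced just above the lemma is exactly what lets us phrase ``replace all the positive $p$'s simultaneously'' cleanly, so once the single-hole repair step is in hand, the multi-hole version is immediate.
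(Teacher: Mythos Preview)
Your overall plan---induction on the height of the cut-free derivation $\pi_2$ of $Z_1[p^k]\struct Z_2[p^l]$, with the observation that $\lbikt$ rules are closed under substituting a fixed structure for $p$ so that only the leaves where $p$ is principal need repair---is exactly the paper's approach. The paper also singles out contraction on a substituted $p$ as the other non-trivial case; your blanket ``re-apply the rule after substitution'' handles it because $c_L,c_R$ in $\lbikt$ contract arbitrary structures, so contracting $(X\struct Y),(X\struct Y)$ is fine.

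Where you diverge is at the $id$ repair, and there you over-engineer it. In $\lbikt$ the $id$ axiom is shallow: the leaf literally has the shape $W,p\struct p,V$, with both $p$'s at top level. Because $Z_1[\cdots]$ is a \emph{positive} context, none of its holes can sit at a neutral (top-level) position, so the left $p$ is never one of the substituted occurrences; the right $p$ may be, since $Z_2[\cdots]$ is quasi-positive. Hence the target is simply $W',p\struct (X\struct Y),V'$, and the paper derives it in two lines: from the hypothesis $p,X\struct Y$ apply $\struct_R$ to get $p\struct(X\struct Y)$, then weaken in $W'$ and $V'$. No display lemma is needed, because nothing needs to be displayed---$p$ is already top-level. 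Your route via Lemma~\ref{lemma:display-property-negative} and invertibility of $DP$ would work, but it is a detour.

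Two smaller points. First, the ``companion statement'' with $X\struct p$ and the ``handful of symmetric variants'' are not needed for this lemma: the polarity constraints (positive $Z_1$, quasi-positive $Z_2$) are preserved as you move up through every rule of $\lbikt$, so a single clause suffices and the paper proves it directly without mutual induction. Second, your remark about strict-context provisos on $\impl_L$ and $\dimpl_R$ concerns $\dbikt$, not $\lbikt$; the shallow calculus has no such proviso, so there is nothing to check there.
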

\begin{proof}
 Let $\pi_1$ be a cut-free derivation of $p,X \struct Y$ and let $\pi_2$
 be a cut-free derivation of $Z_1[p^k] \struct Z_2[p^l].$ We construct
 a cut-free derivation $\pi$ of $Z_1[(X \struct Y)^k] \struct Z_2[(X \struct Y)^l]$ 
 by induction on $|\pi_2|.$ Most cases follow
 straightforwardly from the induction hypothesis. The only
 non-trivial case is when $p$ is active in the derivation, i.e., when
 $\pi_2$ ends with an $id$ rule or a contraction rule applied to an
 occurence of $p$ to be substituted for:
\begin{itemize}
\item Suppose $\pi_2$ is 
$$
\AxiomC{$$}
\RightLabel{$id$} \UnaryInfC{$Z_1'[p^k], p \struct p, Z_2'[p^{l-1}]$}
\DisplayProof
$$
Note that the $p$ immediately to the left of the ``$\struct$'' cannot be
part of the $p^k$ by the restrictions on the context $Z_1[\cdots]$. The
derivation $\pi$ is then constructed as follows, where we use double lines to abbreviate derivations:
$$
\AxiomC{$\pi$}
\noLine \UnaryInfC{$p, X \struct Y$}
\RightLabel{$\struct_R$} \UnaryInfC{$p \struct (X \struct Y)$}
\RightLabel{$w_R;w_L$} \UnaryInfC{$Z_1'[(X \struct Y)^k], p \struct (X \struct Y), Z_2'[(X \struct Y)^{l-1}]$}
\DisplayProof
$$

\item Suppose $\pi_2$ is
$$
\AxiomC{$\psi$}
\noLine \UnaryInfC{$Z_1[p^k] \vdash p, p, Z_2'[p^{l-1}]$}
\RightLabel{$c_R$} \UnaryInfC{$Z_1[p^k] \vdash p, Z_2'[p^{l-1}]$}
\DisplayProof
$$
By induction hypothesis, we have a cut-free derivation $\psi'$ of
$$Z_1[(X \struct Y)^k] \struct (X \struct Y), (X \struct Y), Z_2'[(X \struct Y)^{l-1}].$$
The derivation $\pi$ is then constructed as follows:
$$
\AxiomC{$\psi'$}
\noLine \UnaryInfC{$Z_1[(X \struct Y)^k] \struct (X \struct Y), (X \struct Y), Z_2'[(X \struct Y)^{l-1}]$}
\RightLabel{$c_R$} \UnaryInfC{$Z_1[(X \struct Y)^k] \struct (X \struct Y), Z_2'[(X \struct Y)^{l-1}]$}
\DisplayProof
$$
\end{itemize}
\end{proof}

\begin{lemma}
\label{lm:cut-or}
Suppose $\deriveshallow{\pi_i}{X \struct Y, A_i}$, for some 
$i \in \{1,2\}$, such that $cr(\pi_i) < |A_1 \lor A_2|$. Suppose
$\deriveshallow{\pi_3}{Z_1[(A_1 \lor A_2)^k] \struct Z_2[(A_1 \lor A_2)^l]}$ 
for some $k$-hole quasi-negative context $Z_1[\cdots]$ and
$l$-hole negative context $Z_2[\cdots]$, such that $cr(\pi_3) < |A_1
\lor A_2|.$ Then there exists $\pi$ such that
$\deriveshallow{\pi}{Z_1[(X \struct Y)^k] \struct Z_2[(X \struct Y)^l]}$ 
and $cr(\pi) < |A \lor B|$.
\end{lemma}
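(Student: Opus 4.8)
The plan is to prove Lemma~\ref{lm:cut-or} by induction on $|\pi_3|$, mirroring the structure of the atomic-cut lemma but now tracking the cut formula $A_1 \lor A_2$ through its principal rule applications. In the routine cases, $\pi_3$ ends with a rule whose principal formula/structure is not one of the substituted occurrences of $A_1 \lor A_2$; here we simply apply the induction hypothesis to the premise(s), possibly after noting that the context obtained by plugging the premise's extra structure into a hole is still quasi-negative (resp.\ negative) --- this uses the remark after Lemma~\ref{lemma:display-property-negative} that nesting a negative context inside $\struct$ preserves polarity. Contraction and weakening on substituted occurrences are handled exactly as in Lemma~\ref{lm:cut-atm}: a $c_R$ on two copies of $A_1 \lor A_2$ in a negative succedent position becomes, after the IH, a $c_R$ on two copies of $(X \struct Y)$.

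The crucial case is when $\pi_3$ ends with $\lor_L$ applied to one of the substituted occurrences of $A_1 \lor A_2$ sitting in a negative (more precisely quasi-negative) hole. Say that occurrence is the $j$-th hole of $Z_1[\cdots]$; then the premises of $\pi_3$ are cut-free derivations $\psi_1$ of a sequent with $A_1$ added next to that hole and $\psi_2$ of the analogous sequent with $A_2$ added. I would first apply the induction hypothesis to each $\psi_i$ to replace all the \emph{other} $A_1 \lor A_2$ occurrences by $(X \struct Y)$, leaving a derivation whose endsequent has the shape $Z_1'[(X \struct Y)^{k-1}, A_i \text{ in hole }j] \struct Z_2[(X \struct Y)^l]$. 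Now I must get rid of the remaining $A_i$. Since $A_i$ sits in a negative position, the display property (Lemma~\ref{lemma:display-property-negative}, in its substitution-closed form) lets me transform this invertibly into a sequent of the form $A_i \struct W_i$ for some structure $W_i$; then I cut $A_i$ against $\pi_i$ (whose endsequent is $X \struct Y, A_i$, displayed via the positive display property as something with $A_i$ isolated on the right), obtaining a derivation of $X \struct Y$ plugged into that same position, i.e.\ of $Z_1[(X\struct Y)^k]\struct Z_2[(X\struct Y)^l]$; the new cut has rank $|A_i| < |A_1 \lor A_2|$, and all cuts introduced lie below the relevant rank bound, so $cr(\pi) < |A_1 \lor A_2|$ as required. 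One has to be slightly careful here to use $\pi_i$ for the correct $i$ matching the branch, and to note that the $\lor_L$ premise naming $A_i$ is available for each $i \in \{1,2\}$ so both branches can be closed.

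The main obstacle I expect is the bookkeeping around \emph{where} the displayed $A_i$ ends up and ensuring the cut really reconstructs $Z_1[(X\struct Y)^k]$ rather than something displaced: the display transformation moves structure around, so after cutting one must undo it, relying on the invertibility of the $DP$ rules noted just after Lemma~\ref{lemma:display-property-negative}. A secondary subtlety is that $Z_1$ is only \emph{quasi}-negative, so some holes may be neutral; a neutral hole lies outside the scope of any $\struct$, hence cannot be the site of a $\lor_L$ acting on a succedent-style disjunction --- so the principal-occurrence case genuinely only arises at a negative hole, and the neutral holes are handled by the routine IH step together with the observation that substituting $(X \struct Y)$ into a neutral hole may turn it into a non-neutral but still admissible position, which is exactly why the statement is phrased with quasi-negative $Z_1$. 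Modulo these points the argument is the standard proof-substitution cut reduction of~\cite{gorepostniecetiu2008}, and indeed the excerpt notes this case is unchanged from there.
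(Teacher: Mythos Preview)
Your overall architecture---induction on $|\pi_3|$, with the interesting case being $\lor_L$ applied to a tracked occurrence, handled by applying the IH to the premise and then cutting on the smaller formula $A_i$---is the same as the paper's. Two points are worth correcting, though.

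First, the display machinery is unnecessary. In the shallow calculus $\lbikt$ the rule $\lor_L$ acts only at the top level of the antecedent, so if a substituted occurrence of $A_1 \lor A_2$ is principal, the sequent already has the form $Z_1'[A^{k-1}], A_1 \lor A_2 \struct Z_2[A^l]$ with the relevant hole \emph{neutral}, not negative. Your polarity analysis in the last paragraph is therefore inverted: the principal-occurrence case arises precisely at a neutral hole of $Z_1$, and cannot arise at a genuinely negative one. Consequently the paper's construction is direct: from $\pi_i : X \struct Y, A_i$ apply $\struct_L$ to get $(X \struct Y) \struct A_i$, then cut with the IH-transformed premise $\psi_i' : Z_1'[(X\struct Y)^{k-1}], A_i \struct Z_2[(X\struct Y)^l]$. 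No displaying or undisplaying is needed.

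Second, the remark that ``both branches can be closed'' is misleading. The hypothesis gives $\pi_i$ for \emph{one} fixed $i$, and the construction uses only the $\lor_L$ premise $\psi_i$ for that same $i$; the other premise is simply discarded, since after substitution no $\lor_L$ is being rebuilt.
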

\begin{proof}
By induction on $|\pi_3|.$ In the following, we let $A = A_1 \lor A_2.$
Most cases follow straightforwardly from the induction hypothesis. The only interesting case is when a left-rule is applied to an occurrence of $A_1 \lor A_2$ which is to be replaced by $X \struct Y.$ 
That is, $\pi_3$ is 
$$
\AxiomC{$\psi_1$}
\noLine \UnaryInfC{$Z_1'[A^{k-1}], A_1 \struct Z_2[A^l]$}
\AxiomC{$\psi_2$}
\noLine \UnaryInfC{$Z_1'[A^{k-1}], A_2 \struct Z_2[A^l]$}
\RightLabel{$\lor_L$} \BinaryInfC{$Z_1'[A^{k-1}], A_1 \lor A_2 \struct Z_2[A^l]$}
\DisplayProof
$$
By induction hypothesis, we have a derivation $\psi_i'$, for each $i \in \{1,2\}$, of 
$$
Z_1'[(X \struct Y)^{k-1}], A_i \struct Z_2[(X \struct Y)^l] 
$$
with $cr(\psi_i') < |A_1 \lor A_2|.$ The derivation $\pi$ is then
constructed as follows:
$$
\AxiomC{$\pi_i$}
\noLine \UnaryInfC{$X \struct Y, A_i$}
\RightLabel{$\struct_L$} \UnaryInfC{$X \struct Y \struct A_i$}
\AxiomC{$\psi_i'$}
\noLine \UnaryInfC{$Z_1'[(X \struct Y)^{k-1}], A_i \struct Z_2[(X \struct Y)^l]$}
\RightLabel{$cut$} \BinaryInfC{$Z_1'[(X \struct Y)^{k-1}], X \struct Y \struct Z_2[(X \struct Y)^l]$}
\DisplayProof
$$
\end{proof}

\begin{lemma}
\label{lm:cut-and}
Suppose $\deriveshallow{\pi_1}{X \struct Y, A_1}$ and
$\deriveshallow{\pi_2}{X \struct Y, A_2}$ with $cr(\pi_1) < |A_1 \land A_2|$ 
and $cr(\pi_2) < |A_1 \land A_2|.$ Suppose 
$\deriveshallow{\pi_3}{Z_1[(A_1 \land A_2)^k] \struct Z_2[(A_1 \land A_2)^l]}$ 
for some $k$-hole quasi-negative context $Z_1[\cdots]$ and
$l$-hole negative context $Z_2[\cdots]$ with 
$cr(\pi_3) < |A_1 \land A_2|.$ 
Then there exists $\pi$ such that 
$\deriveshallow{\pi}{Z_1[(X \struct Y)^k] \struct Z_2[(X \struct Y)^l]}$ and 
$cr(\pi) < |A \land B|$.
\end{lemma}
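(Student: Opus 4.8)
The plan is to mirror the proof of Lemma~\ref{lm:cut-or}, proceeding by induction on $|\pi_3|$, and writing $A = A_1 \land A_2$ throughout. As in the disjunction case, every case in which the bottom rule of $\pi_3$ does not interact with an occurrence of $A$ slated for replacement by $X \struct Y$ goes through directly by applying the induction hypothesis to the premise(s) and reapplying the same rule; the context-manipulation rules, weakening, contraction, residuation, and all logical rules on other formulae are of this routine kind. The one interesting case is when the last rule of $\pi_3$ is $\land_L$ acting on one of the $k$ displayed occurrences of $A$ in the antecedent context $Z_1[\cdots]$. Note that this is the only rule that can be principal on such an occurrence, since $A$ sits in a quasi-negative hole of $Z_1$ and hence, if it is neutral-polarity, behaves like an antecedent formula, while if it is negative-polarity it is under a $\struct$ and must first be displayed before any logical rule applies — in which case we are again in a routine subcase handled by the induction hypothesis together with the display property (Lemmas~\ref{lemma:display-property-neutral}--\ref{lemma:display-property-negative}).

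So suppose $\pi_3$ ends with
$$
\AxiomC{$\psi$}
\noLine \UnaryInfC{$Z_1'[A^{k-1}], A_1, A_2 \struct Z_2[A^l]$}
\RightLabel{$\land_L$} \UnaryInfC{$Z_1'[A^{k-1}], A_1 \land A_2 \struct Z_2[A^l]$}
\DisplayProof
$$
Applying the induction hypothesis to $\psi$ (replacing the remaining $k-1$ antecedent occurrences and all $l$ succedent occurrences of $A$ by $X \struct Y$) yields a derivation $\psi'$ of
$$
Z_1'[(X \struct Y)^{k-1}], A_1, A_2 \struct Z_2[(X \struct Y)^l]
$$
with $cr(\psi') < |A_1 \land A_2|$. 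It now remains to "cut in" the two pieces $A_1$ and $A_2$ using the hypotheses $\pi_1$ and $\pi_2$. From $\pi_i$ we obtain $X \struct Y \struct A_i$ by $\struct_L$, as in Lemma~\ref{lm:cut-or}; cutting this against $\psi'$ on $A_1$, and then against the result on $A_2$, and absorbing the duplicated copies of $X$ and $Y$ by structural contraction and residuation, produces the desired derivation $\pi$ of $Z_1'[(X \struct Y)^{k-1}], (X \struct Y) \struct Z_2[(X \struct Y)^l]$, i.e. of $Z_1[(X \struct Y)^k] \struct Z_2[(X \struct Y)^l]$. Since each new cut is on $A_1$ or $A_2$, both strictly smaller than $A_1 \land A_2$, and since $cr(\pi_1), cr(\pi_2), cr(\psi') < |A_1 \land A_2|$, we get $cr(\pi) < |A_1 \land A_2|$ as required.

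The main obstacle — and the only place genuine care is needed — is the bookkeeping when two cuts must be performed in succession: after cutting in $A_1$, the antecedent acquires an extra copy of $X$ and the succedent an extra copy of $Y$ (nested inside a $\struct$-structure coming from the $\struct_L$ step), and one must check that $A_2$ is still accessible as a top-level antecedent formula so that the second cut applies, and that the display/residuation rules in $DP$ together with $c_L$ and $c_R$ suffice to restore the sequent to the stated form. This is exactly analogous to the handling of duplicated structures in the $\lor$ case and in the corresponding proof in~\cite{gorepostniecetiu2008}, so no new technique is required; it is simply the most delicate of the routine manipulations. All other cases, including the subcase where the $\land_L$-principal occurrence of $A$ has negative polarity and must be displayed first, reduce cleanly to the induction hypothesis.
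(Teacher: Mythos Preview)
Your overall inductive plan is right, but you have misread the $\land_L$ rule of $\lbikt$. In Figure~\ref{fig:lbikt} it is the \emph{single-conjunct} form
\[
\infer[\land_L,\ i\in\{1,2\}]{X, B_1 \land B_2 \struct Y}{X, B_i \struct Y}
\]
so the premise of the interesting case has only one $A_i$, not both $A_1$ and $A_2$. Concretely, when $\pi_3$ ends in $\land_L$ on a displayed occurrence of $A$, it has the shape
\[
\infer[\land_L]{Z_1'[A^{k-1}], A_1 \land A_2 \struct Z_2[A^l]}{\psi_i : Z_1'[A^{k-1}], A_i \struct Z_2[A^l]}
\]
for some fixed $i$. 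Applying the induction hypothesis to $\psi_i$ gives $\psi_i'$ of $Z_1'[(X\struct Y)^{k-1}], A_i \struct Z_2[(X\struct Y)^l]$, and a \emph{single} cut on $A_i$ against $\pi_i$ (after one $\struct_L$) already yields the target sequent $Z_1'[(X\struct Y)^{k-1}], (X\struct Y) \struct Z_2[(X\struct Y)^l]$. This is exactly what the paper does.

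As a consequence, the ``main obstacle'' you describe---two successive cuts, duplicated copies of $X$ and $Y$, and the need to re-display $A_2$ after the first cut---never arises. Your discussion of that bookkeeping is based on the wrong premise form and should be dropped. (Incidentally, your remark about a ``subcase where the $\land_L$-principal occurrence of $A$ has negative polarity and must be displayed first'' is also off: in $\lbikt$ logical rules act only at top level, so if the marked $A$ is buried, the last rule of $\pi_3$ is necessarily structural and falls under the routine cases.) Once you correct the shape of $\land_L$, your argument collapses to the paper's and is fine.
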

\begin{proof}
By induction on $|\pi_3|.$ In the following, we let $A = A_1 \land A_2.$ Most cases follow straightforwardly from the induction hypothesis. The only interesting case is when a left-rule is applied to an occurrence of $A_1 \land A_2$ which is to be replaced by $X \struct Y.$ That is, $\pi_3$ is:
$$
\AxiomC{$\psi_i$}
\noLine \UnaryInfC{$Z_1[A^{k-1}], A_i \struct Z_2[A^l]$}
\RightLabel{$\land_L$} \UnaryInfC{$Z_1[A^{k-1}], A_1 \land A_2 \struct Z_2[A^l]$}
\DisplayProof
$$
for some $i \in \{1,2\}$.
By induction hypothesis, we have a derivation $\psi_i'$, for some $i \in \{1,2\}$, of 
$$
Z_1[(X \struct Y)^{k-1}], A_i \struct Z_2[(X \struct Y)^l]
$$
with $cr(\psi_i') < |A_1 \land A_2|.$ The derivation $\pi$ is then
constructed as follows:
$$
\AxiomC{$\pi_i$}
\noLine \UnaryInfC{$X \struct Y, A_i$}
\RightLabel{$\struct_L$} \UnaryInfC{$X \struct Y \struct A_i$}
\AxiomC{$\psi_i'$}
\noLine \UnaryInfC{$Z_1'[(X \struct Y)^{k-1}], A_i \struct Z_2[(X \struct Y)^l$}
\RightLabel{$cut$} \BinaryInfC{$Z_1'[(X \struct Y)^{k-1}], X \struct Y \struct Z_2[(X \struct Y)^{l-1}]$}
\DisplayProof
$$
\end{proof}

\begin{lemma}
\label{lm:cut-impl}
Suppose $\deriveshallow{\pi_1}{X, A \struct B}$ and
$\deriveshallow{\pi_2}{Z_1[(A \impl B)^k] \struct Z_2[(A \impl B)^l]}$
for some $k$-hole quasi-negative context $Z_1[\cdots]$ and $l$-hole negative
context $Z_2[\cdots]$, and the cut ranks of $\pi_1$ and $\pi_2$ are
smaller than $|A \impl B|.$ Then there exists $\pi$ such that
$\deriveshallow{\pi}{Z_1[X^k] \struct Z_2[X^l]}$ and $cr(\pi) < |A \impl B|$.
\end{lemma}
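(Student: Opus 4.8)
The plan is to prove Lemma~\ref{lm:cut-impl} by induction on $|\pi_2|$, mirroring the structure of Lemmas~\ref{lm:cut-or} and~\ref{lm:cut-and} but now exploiting the two premises of the $\impl_L$ rule. As usual, all cases in which the last rule of $\pi_2$ does not act on an occurrence of $A \impl B$ that is slated for substitution follow immediately from the induction hypothesis, since such rules commute with the substitution of $X$ for $(A \impl B)$ and do not increase the cut rank. The only genuinely interesting case is when $\pi_2$ ends with an $\impl_L$ rule applied to one of the distinguished occurrences of $A \impl B$; note that, because $Z_1[\cdots]$ is quasi-negative, the principal $A \impl B$ must sit in the antecedent, so there is no dual case to consider.

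In that case $\pi_2$ has the form
$$
\AxiomC{$\psi_1$}
\noLine \UnaryInfC{$Z_1'[(A \impl B)^{k-1}], A \impl B \struct A, U$}
\AxiomC{$\psi_2$}
\noLine \UnaryInfC{$Z_1'[(A \impl B)^{k-1}], A \impl B, B \struct U'$}
\RightLabel{$\impl_L$} \BinaryInfC{$Z_1'[(A \impl B)^{k-1}], A \impl B \struct Z_2[(A \impl B)^l]$}
\DisplayProof
$$
where the contexts $U$, $U'$ carve $Z_2[(A \impl B)^l]$ appropriately (using the display/residuation rules to bring $A$ and $B$ to the top level, exactly as in the original $\impl_L$ shape, after suitably applying Lemma~\ref{lemma:display-property-negative}). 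Applying the induction hypothesis to $\psi_1$ and $\psi_2$ yields cut-free-ranked derivations $\psi_1'$ of $Z_1'[X^{k-1}], X \struct A, U''$ and $\psi_2'$ of $Z_1'[X^{k-1}], X, B \struct U'''$, where $U''$, $U'''$ are the corresponding substituted forms of $Z_2[X^l]$. Now I would build $\pi$ by two cuts on the strictly smaller formulae $A$ and $B$: first cut $\pi_1$ (read as $X \struct A \impl B$, after $\impl_R$) against a derivation obtained from $\psi_1'$ to discharge $A$, then cut the result against $\psi_2'$ to discharge $B$; each intermediate cut has rank $< |A \impl B|$, and the new cuts on $A$ and $B$ also have rank $< |A \impl B|$, so $cr(\pi) < |A \impl B|$ as required. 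A cleaner packaging is to observe that from $\pi_1$ (a derivation of $X, A \struct B$) together with $\psi_1'$ and $\psi_2'$ one can assemble, using the residuation rules $\struct_L$, $\struct_R$, $rp_\circ$, $rp_\bullet$ to move $A$ and $B$ into cuttable position, a derivation of $Z_1'[X^{k-1}], X \struct Z_2[X^l]$ in which only cuts on $A$ and $B$ (plus the inherited cuts of rank $< |A \impl B|$) appear.

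The main obstacle I expect is bookkeeping rather than conceptual: precisely tracking how the negative context $Z_2[(A \impl B)^l]$ is split by the $\impl_L$ instance and then reconstituted after substitution, and checking that the display property (Lemma~\ref{lemma:display-property-negative}) can be invoked uniformly so that $A$ and $B$ can always be brought to the top level to serve as cut formulae, even when the relevant occurrences lie deep inside $\circ$- or $\bullet$-structures. One must also be careful that the quasi-negative/negative polarity conditions on $Z_1[\cdots]$ and $Z_2[\cdots]$ are preserved when passing to the subderivations $\psi_1$, $\psi_2$ — in particular that the newly exposed $A$ on the right in $\psi_1$ and $B$ on the left in $\psi_2$ do not interfere with the holes being substituted — but this is exactly the same polarity discipline already used in Lemmas~\ref{lm:cut-or}--\ref{lm:cut-and}, so no new idea is needed. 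Since the excerpt explicitly says the non-modal cases are ``unchanged from~\cite{gorepostniecetiu2008}'', I would present only the skeleton above and defer the routine context-splitting details to that reference.
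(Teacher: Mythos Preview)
Your overall strategy matches the paper's: induct on $|\pi_2|$, the only non-trivial case is $\impl_L$ applied to a designated occurrence of $A \impl B$, apply the induction hypothesis to both premises, and then rebuild using cuts on the strict subformulae $A$ and $B$.

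However, several details are off. First, the $\impl_L$ rule in $\lbikt$ is a \emph{shallow} rule: its premises are simply
\[
Z_1'[(A\impl B)^{k-1}] \struct A,\, Z_2[(A\impl B)^l]
\qquad\text{and}\qquad
Z_1'[(A\impl B)^{k-1}],\, B \struct Z_2[(A\impl B)^l].
\]
The principal $A \impl B$ is consumed, not retained in the premises, and there is no ``carving'' of $Z_2$ into $U, U'$ nor any need to invoke the display lemmas --- $A$ and $B$ already sit at the top level. Second, your suggestion to ``read $\pi_1$ as $X \struct A \impl B$ after $\impl_R$'' and then cut would produce a cut on $A \impl B$ itself, i.e.\ of rank $|A \impl B|$, which is exactly what must be avoided. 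The paper instead uses $\pi_1$ directly as $X, A \struct B$: writing $V_1 = Z_1'[X^{k-1}]$ and $V_2 = Z_2[X^l]$, it cuts $\pi_1$ with $\psi_2'$ on $B$ to obtain $V_1, A, X \struct V_2$, then cuts $\psi_1'$ against this on $A$ to obtain $V_1, V_1, X \struct V_2, V_2$, and finishes with $c_L; c_R$. Both new cuts have rank strictly below $|A \impl B|$, and no residuation or display steps are required. Your ``cleaner packaging'' paragraph gestures in this direction, but the concrete two-cut-plus-contraction figure is essentially the whole proof, so you should write it out rather than defer it.
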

\begin{proof}
By induction on $|\pi_2|$. The non-trivial case is when $\pi_2$ ends with $\impl_L$ on $A \impl B$:
$$
\AxiomC{$\psi_1$}
\noLine \UnaryInfC{$Z_1'[(A \impl B)^{k-1}] \struct A, Z_2[(A \impl B)^l]$}
\AxiomC{$\psi_2$}
\noLine \UnaryInfC{$Z_1'[(A \impl B)^{k-1}], B \struct Z_2[(A \impl B)^l]$}
\RightLabel{$\impl_L$} \BinaryInfC{$Z_1'[(A \impl B)^{k-1}], A \impl B \struct Z_2[(A \impl B)^l]$}
\DisplayProof
$$
By induction hypothesis, we have derivations $\psi_1'$ and $\psi_2'$
respectively of the sequents below where $cr(\psi_1') < |A \impl B|$ and $cr(\psi_2') < |A \impl B|$:
$$
Z_1'[X^{k-1}] \struct A, Z_2[X^l]
\qquad\qquad
Z_1'[X^{k-1}], B \struct Z_2[X^l]
$$
In
the following, we let $V_1$ denote $Z_1'[X^{k-1}]$
and $V_2$ denote $Z_2[X^l].$ The derivation $\pi$ is
constructed as follows:
$$
\AxiomC{$\psi_1'$}
\noLine \UnaryInfC{$V_1 \vdash A, V_2$}
\AxiomC{$\pi_1$}
\noLine \UnaryInfC{$X,A \struct B$}
\AxiomC{$\psi_2'$}
\noLine \UnaryInfC{$V_1, B \struct V_2$}
\RightLabel{$cut$} \BinaryInfC{$V_1,A,X \struct V_2$}
\RightLabel{$cut$} \BinaryInfC{$V_1, V_1, X \struct V_2, V_2$}
\RightLabel{$c_L;c_R$} \UnaryInfC{$V_1, X \struct V_2$}
\DisplayProof
$$
\end{proof}

\begin{lemma}
\label{lm:cut-excl}
Suppose $\deriveshallow{\pi_1}{X \struct Y, A}$ and
$\deriveshallow{\pi_2}{X, B \struct Y}$, and the cut ranks of $\pi_1$
and $\pi_2$ are smaller than $|A \dimpl B|.$ Suppose
$\deriveshallow{\pi_3}{Z_1[(A \dimpl B)^k] \struct Z_2[(A \dimpl B)^l]}$ 
for some $k$-hole quasi-negative context $Z_1[\cdots]$ and
$l$-hole negative context $Z_2[\cdots]$ with $cr(\pi_3) < |A \dimpl B|.$ 
Then there exists $\pi$ such that 
$\deriveshallow{\pi}{Z_1[(X \struct Y)^k] \struct Z_2[(X \struct Y)^l]}$ 
and $cr(\pi) < |A \dimpl B|$.
\end{lemma}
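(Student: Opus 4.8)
The plan is to follow the proof of Lemma~\ref{lm:cut-impl} almost verbatim, since $\dimpl$ is the exact order-theoretic dual of $\impl$ (the roles of the two sides of the sequent are simply exchanged); this is one of the cases ``unchanged from~\cite{gorepostniecetiu2008}''. I proceed by induction on $|\pi_3|$, writing $A$ and $B$ for the immediate subformulae of $A\dimpl B$, and I reduce the cut on $A\dimpl B$ to cuts on the strictly smaller $A$ and $B$. The key structural fact I rely on is that the only logical rule that can make a \emph{negative} occurrence of $A\dimpl B$ principal is $\dimpl_L$, and that the premise $A\struct B, Y$ of $\dimpl_L$ discards the left-hand context of its conclusion $X, A\dimpl B\struct Y$ (dually to the way $\impl_R$ discards the right-hand context).

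For the routine cases, if the last rule $r$ of $\pi_3$ does not act on an occurrence of $A\dimpl B$ that is to be replaced, I apply the induction hypothesis to each premise and re-apply $r$, after checking that the required polarities are preserved on passing to the premises (a hole that is negative inside a negative subcontext stays negative, by the remark in Section~\ref{sec:nested}); contraction rules are dealt with by first applying the induction hypothesis with the affected holes duplicated and then contracting. When $r$ is $\dimpl_L$ acting on an \emph{incidental} (non-hole) occurrence, I re-apply $\dimpl_L$ with its (premise-discarded) left context instantiated to the already-substituted structure. The rules $\bot_L,\top_R$ never have $A\dimpl B$ principal. For $id$: if neither principal formula is a replaced hole the conclusion is still an instance of $id$; otherwise the replaced occurrence must be the left principal formula $A\dimpl B$ (possible because $Z_1$ is only quasi-negative), and the required sequent has the shape $W_1, (X\struct Y)\struct A\dimpl B, W_2$, which I derive by applying $\dimpl_R$ to $\pi_1$ and $\pi_2$ (giving $X\struct Y, A\dimpl B$), then $\struct_L$ (giving $(X\struct Y)\struct A\dimpl B$), then $w_L$ and $w_R$.

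The only genuinely interesting case is when $\pi_3$ ends with $\dimpl_L$ acting on a replaced occurrence. Then $Z_1[(A\dimpl B)^k] = Z_1'[(A\dimpl B)^{k-1}], A\dimpl B$ and the premise $\psi$ derives $A\struct B, Z_2[(A\dimpl B)^l]$; in this sequent the left-hand context is empty, hence trivially quasi-negative, and the right-hand context $B, Z_2[\cdots]$ is an $l$-hole negative context, so the induction hypothesis applies and yields $\psi'$ deriving $A\struct B, Z_2[(X\struct Y)^l]$ with $cr(\psi') < |A\dimpl B|$. Writing $V$ for $Z_2[(X\struct Y)^l]$ and $U$ for $Z_1'[(X\struct Y)^{k-1}]$, I build $\pi$ as follows: cut $\pi_1$ against $\psi'$ on $A$, giving $X\struct Y, B, V$; cut the result against $\pi_2$ on $B$, giving $X, X\struct Y, Y, V$; contract with $c_L$ and $c_R$, giving $X\struct Y, V$; apply $\struct_L$, giving $(X\struct Y)\struct V$; and weaken in $U$ with $w_L$, giving $Z_1[(X\struct Y)^k]\struct Z_2[(X\struct Y)^l]$. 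Every cut introduced is on $A$ or $B$, both of size $< |A\dimpl B|$, and $cr(\pi_1),cr(\pi_2),cr(\psi') < |A\dimpl B|$, so $cr(\pi) < |A\dimpl B|$.

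I expect the only real work to be the context/polarity bookkeeping in the routine cases: one must check at each rule that $Z_1$ remains quasi-negative and $Z_2$ remains negative in the premises, and in particular that the premise of $\dimpl_L$ exposes $B, Z_2[\cdots]$ as a negative context, which is exactly what lets the induction hypothesis fire with $k$ replaced by $0$ and the cut-rank bound go through. There is no difficulty specific to $\dimpl$ here — unlike the modal cases this argument is just the mirror image of the one for $\impl$ already carried out in Lemma~\ref{lm:cut-impl}.
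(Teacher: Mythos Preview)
Your proposal is correct and follows essentially the same approach as the paper: induction on $|\pi_3|$, with the only non-trivial case being $\dimpl_L$ applied to a hole occurrence, reduced to two cuts on the strict subformulae $A$ and $B$ followed by contraction, $\struct_L$, and weakening. Your treatment differs from the paper only in cosmetic ways: you cut on $A$ first and then $B$ whereas the paper cuts on $B$ first and then $A$ (the result is the same sequent $X,X\struct Y,Y,Z_2[(X\struct Y)^l]$ either way), and you spell out the $id$ case and the polarity bookkeeping explicitly where the paper simply says ``most cases follow straightforwardly from the induction hypothesis.''
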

By induction on $|\pi_2|$. The non-trivial case is when $\pi_2$ ends with $\dimpl_L$ on $A \dimpl B$:

$$
\AxiomC{$\psi$}
\noLine \UnaryInfC{$A \struct B, Z_2[(A \dimpl B)^l]$}
\RightLabel{$\dimpl_L$} \UnaryInfC{$Z_1'[(A \dimpl B)^{k-1}], A \dimpl B \struct Z_2[(A \dimpl B)^l]$}
\DisplayProof
$$

By induction hypothesis, we have a derivation $\psi'$ of 
$$
A \struct B, Z_2[(X \struct Y)^l]
$$
with $cr(\psi) < |A \dimpl B|.$ The derivation $\pi$ is then
constructed as follows:

$$
\AxiomC{$\pi_1$}
\noLine \UnaryInfC{$X \struct Y, A$}
\AxiomC{$\psi'$}
\noLine \UnaryInfC{$A \struct B, Z_2[(X \struct Y)^l]$}
\AxiomC{$\pi_2$}
\noLine \UnaryInfC{$X, B \struct Y$}
\RightLabel{$cut$} \BinaryInfC{$A, X \struct Y, Z_2[(X \struct Y)^l]$}
\RightLabel{$cut$} \BinaryInfC{$X, X \struct Y, Y, Z_2[(X \struct Y)^l]$}
\RightLabel{$c_L;c_R$} \UnaryInfC{$X \struct Y, Z_2[(X \struct Y)^l]$}
\RightLabel{$\struct_L$} \UnaryInfC{$Z_1[(X \struct Y)^{k-1}], X \struct Y \struct Z_2[(X \struct Y)^l]$}
\DisplayProof
$$

\begin{lemma}
\label{lm:cut-boxf}
Suppose $\deriveshallow{\pi_1}{X \struct \circwhiteshort{A}}$ and
$\deriveshallow{\pi_2}{Z_1[(\boxf A)^k] \struct Z_2[(\boxf A)^l]}$ for
some $k$-hole quasi-negative context $Z_1[\cdots]$ and $l$-hole negative
context $Z_2[\cdots]$, and the cut ranks of $\pi_1$ and $\pi_2$ are
smaller than $|\boxf A|.$ Then there exists $\pi$ such that
$\deriveshallow{\pi}{Z_1[X^k] \struct Z_2[X^l]}$ and $cr(\pi) < |\boxf A|$.
\end{lemma}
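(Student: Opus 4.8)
The plan is to prove the lemma by induction on $|\pi_2|$, following the pattern of Lemmas~\ref{lm:cut-impl} and~\ref{lm:cut-excl}. Most cases are routine: if the last rule of $\pi_2$ is any rule whose principal formula is not a hole-occurrence of the cut formula $\boxf A$, I would apply the induction hypothesis to the premise(s) and re-apply the same rule, the (quasi-)negative status of the residual contexts being preserved since these rules commute with the substitution of $X$ for the holes; a contraction of two hole-occurrences is absorbed by the induction hypothesis followed by (iterated) $c_L$/$c_R$. The only base case to note is when $\pi_2$ is an instance of $id$ whose antecedent occurrence of $\boxf A$ is a hole: then the goal $Z_1[X^k]\struct Z_2[X^l]$ has the form $U, X \struct \boxf A, W$ and follows from $\pi_1 : X \struct \circwhiteshort{A}$ by $\boxf_R$ and weakenings. (The succedent occurrence of $\boxf A$ in an $id$ cannot be a hole, since that would make $Z_2$ neutral rather than negative.)

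The only genuinely non-trivial case is when $\pi_2$ ends with $\boxf_L$ whose principal formula is a hole-occurrence of $\boxf A$. Because $\boxf_L$ has the single formula $\boxf A$ as its \emph{entire} antecedent and a $\circ$-structure as its \emph{entire} succedent, this forces $k = 1$ and $Z_1[~] = [~]$, and it pushes the remaining $l$ holes into the succedent, so $Z_2 = \circwhite{U'}$ for some $l$-hole context $U'$ and the premise of the last step is $A \struct U'[(\boxf A)^l]$, derived by some $\psi$. Wrapping a context in a $\circ$-structure does not change the polarity of its holes, so from the hypothesis that $Z_2 = \circwhite{U'}$ is negative we get that $U'$ is a negative context; hence the induction hypothesis applies to $\psi$ --- taking $A$ as a $0$-hole (vacuously quasi-negative) context and $U'$ as the $l$-hole negative context, and noting $cr(\psi)\le cr(\pi_2) < |\boxf A|$ --- and yields $\psi' : A \struct U'[X^l]$ with $cr(\psi') < |\boxf A|$.

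It then remains to assemble $\pi$, and here I would use the residuation rule $rp_\circ$ to descend to a cut on the strictly smaller formula $A$, exactly as in the sample reduction of Figure~\ref{fig:cut}:
$$
\AxiomC{$\pi_1$}
\noLine \UnaryInfC{$X \struct \circwhiteshort{A}$}
\RightLabel{$rp_\circ$} \UnaryInfC{$\circblackshort{X} \struct A$}
\AxiomC{$\psi'$}
\noLine \UnaryInfC{$A \struct U'[X^l]$}
\RightLabel{$cut$} \BinaryInfC{$\circblackshort{X} \struct U'[X^l]$}
\RightLabel{$rp_\circ$} \UnaryInfC{$X \struct \circwhite{U'[X^l]}$}
\DisplayProof
$$
The conclusion is exactly $Z_1[X^1]\struct Z_2[X^l]$, the new cut has rank $|A| < |\boxf A|$, and both immediate subderivations ($\pi_1$ and $\psi'$) have cut rank below $|\boxf A|$, so $cr(\pi) < |\boxf A|$.

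I do not expect a conceptual obstacle here: the ``residuate, then cut on the subformula'' move is the standard display-calculus device already illustrated in Figure~\ref{fig:cut}, and $\boxf$ on the left is acted on only by $\boxf_L$. The work is bookkeeping --- checking in the routine cases that the residual contexts stay (quasi-)negative so the induction hypothesis applies, and, in the $\boxf_L$ case, reading off from the rigid shape of the rule that $k = 1$ and $Z_1 = [~]$ and that the leftover succedent context $U'$ is forced to be negative.
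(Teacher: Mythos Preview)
Your proposal is correct and matches the paper's proof essentially verbatim: induction on $|\pi_2|$, with the only non-trivial case being $\boxf_L$ applied to a hole-occurrence, handled by residuating $\pi_1$ via $rp_\circ$, cutting on $A$, and residuating back. You supply more detail on the routine cases (in particular the $id$ base case and the bookkeeping that $k=1$, $Z_1=[~]$ in the $\boxf_L$ case) than the paper does, but the argument is the same.
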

\begin{proof}
By induction on $|\pi_2|$. The non-trivial case is when $\pi_2$ ends with $\boxf_L$ on $\boxf A$ as shown below left. By induction hypothesis we have $\deriveshallow{\psi'}{A \struct Z_2'[X^l]}$ where $cr(\psi') < |\boxf A|$. The derivation $\pi$ is constructed as shown below right:
$$
\AxiomC{$\psi$}
\noLine \UnaryInfC{$A \struct Z_2'[(\boxf A)^l]$}
\RightLabel{$\boxf_L$} \UnaryInfC{$\boxf A \struct \circwhite{Z_2'[(\boxf A)^l]}$}
\DisplayProof
\qquad
\AxiomC{$\pi_1$}
\noLine \UnaryInfC{$X \struct \circwhiteshort{A}$}
\RightLabel{$rp_\circ$} \UnaryInfC{$\circblackshort{X} \struct A$}
\AxiomC{$\psi'$}
\noLine \UnaryInfC{$A \struct Z_2'[X^l]$}
\RightLabel{$cut$} \BinaryInfC{$\circblackshort{X} \struct Z_2'[X^l]$}
\RightLabel{$rp_\circ$} \UnaryInfC{$X \struct \circwhite{Z_2'[X^l]}$}
\DisplayProof
$$
\end{proof}

\begin{lemma}
\label{lm:cut-boxp}
Suppose $\deriveshallow{\pi_1}{X \struct \circblackshort{A}}$ and
$\deriveshallow{\pi_2}{Z_1[(\boxp A)^k] \struct Z_2[(\boxp A)^l]}$ for
some $k$-hole quasi-negative context $Z_1[\cdots]$ and $l$-hole negative
context $Z_2[\cdots]$, and the cut ranks of $\pi_1$ and $\pi_2$ are
smaller than $|\boxp A|.$ Then there exists $\pi$ such that
$\deriveshallow{\pi}{Z_1[X^k] \struct Z_2[X^l]}$ and $cr(\pi) < |\boxp
A|$.
\end{lemma}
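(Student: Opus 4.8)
The plan is to prove this lemma exactly as the mirror image of Lemma~\ref{lm:cut-boxf}, interchanging the white and black structural connectives and using the residuation rule $rp_\bullet$ in place of $rp_\circ$. First I would fix the derivations $\pi_1$ of $X \struct \circblackshort{A}$ and $\pi_2$ of $Z_1[(\boxp A)^k] \struct Z_2[(\boxp A)^l]$ supplied by the hypotheses, and argue by induction on $|\pi_2|$ that there is a derivation $\pi$ of $Z_1[X^k] \struct Z_2[X^l]$ with $cr(\pi) < |\boxp A|$. As in the earlier cut lemmas, nearly all cases are routine: if the last rule of $\pi_2$ does not act on one of the distinguished occurrences of $\boxp A$ to be replaced, I would apply the induction hypothesis to its premise(s) and re-apply that rule, relying on the fact that the quasi-negative/negative polarity constraints on $Z_1[\cdots]$ and $Z_2[\cdots]$ are inherited by premises — in particular that $\circwhiteshort{\cdot}$ and $\circblackshort{\cdot}$ preserve polarity, and that further nesting inside $\struct$ does not change a negative polarity — together with the closure of the $\lbikt$ rules under substitution; the base cases where a distinguished occurrence is principal in $id$, $\bot_L$ or $\top_R$ are handled using $\pi_1$, the rule $\boxp_R$, and weakening, exactly as in the atomic-cut case (Lemma~\ref{lm:cut-atm}).

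The only genuinely new case is when $\pi_2$ ends with an instance of $\boxp_L$ whose principal formula is one of the occurrences of $\boxp A$ to be replaced. Because the left-hand side of the conclusion of $\boxp_L$ is the single formula $\boxp A$, this forces $k = 1$ and $Z_1[~] = [~]$, while $Z_2[~] = \circblack{Z_2'[~]}$ for some $l$-hole context $Z_2'[\cdots]$, which is negative since $\circblackshort{\cdot}$ preserves polarity. Thus $\pi_2$ has the form
$$
\AxiomC{$\psi$}
\noLine \UnaryInfC{$A \struct Z_2'[(\boxp A)^l]$}
\RightLabel{$\boxp_L$} \UnaryInfC{$\boxp A \struct \circblack{Z_2'[(\boxp A)^l]}$}
\DisplayProof
$$
with $cr(\psi) < |\boxp A|$, and the induction hypothesis applied to $\psi$ (taking the trivial $0$-hole context on the left and the negative $l$-hole context $Z_2'[\cdots]$ on the right) yields a derivation $\psi'$ of $A \struct Z_2'[X^l]$ with $cr(\psi') < |\boxp A|$. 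I would then build $\pi$ as
$$
\AxiomC{$\pi_1$}
\noLine \UnaryInfC{$X \struct \circblackshort{A}$}
\RightLabel{$rp_\bullet$} \UnaryInfC{$\circwhiteshort{X} \struct A$}
\AxiomC{$\psi'$}
\noLine \UnaryInfC{$A \struct Z_2'[X^l]$}
\RightLabel{$cut$} \BinaryInfC{$\circwhiteshort{X} \struct Z_2'[X^l]$}
\RightLabel{$rp_\bullet$} \UnaryInfC{$X \struct \circblack{Z_2'[X^l]}$}
\DisplayProof
$$
using the invertibility of the double-line rule $rp_\bullet$ to pass from $X \struct \circblackshort{A}$ to $\circwhiteshort{X} \struct A$ and back again after the cut. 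The conclusion is exactly $Z_1[X^1] \struct Z_2[X^l]$; the new cut is on $A$ with $|A| < |\boxp A|$; and since $cr(\pi_1) < |\boxp A|$, $cr(\psi') < |\boxp A|$ and the residuation rules add no cuts, $cr(\pi) < |\boxp A|$ as required.

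I do not anticipate a real obstacle: the argument is a mechanical dualisation of Lemma~\ref{lm:cut-boxf} under the white/black symmetry of $\lbikt$, and everything that is not the $\boxp_L$ case is identical to the corresponding material from~\cite{gorepostniecetiu2008} and the already-treated modal cases. The single point that warrants care is the polarity bookkeeping in the routine cases where the last rule of $\pi_2$ is itself one of the residuation rules $s_L$, $s_R$, $\struct_L$, $\struct_R$, $rp_\circ$ or $rp_\bullet$, since these reshuffle structures across $\struct$; but, just as for Lemma~\ref{lm:cut-boxf}, this goes through because a negative hole stays negative under further nesting inside $\struct$, so the quasi-negative/negative hypotheses on $Z_1[\cdots]$ and $Z_2[\cdots]$ are never violated.
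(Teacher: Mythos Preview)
Your proposal is correct and follows exactly the paper's own argument: induction on $|\pi_2|$, with the only non-trivial case being $\boxp_L$ applied to a distinguished occurrence, handled by applying the induction hypothesis to the premise and then sandwiching a cut on $A$ between two uses of $rp_\bullet$. Your derivation in that case is literally the one the paper gives; the additional commentary you supply on polarity bookkeeping and the $id$ base case is sound extra detail that the paper leaves implicit.
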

\begin{proof}
By induction on $|\pi_2|$. The non-trivial case is when $\pi_2$ ends with $\boxp_L$ on $\boxp A$ as shown below left. By induction hypothesis we have $\deriveshallow{\psi'}{A \struct Z_2'[X^l]}$ where $cr(\psi') < |\boxp A|$. The derivation $\pi$ is constructed as shown below right:
$$
\AxiomC{$\psi$}
\noLine \UnaryInfC{$A \struct Z_2'[(\boxp A)^l]$}
\RightLabel{$\boxp_L$} \UnaryInfC{$\boxp A \struct \circblack{Z_2'[(\boxp A)^l]}$}
\DisplayProof
\qquad
\AxiomC{$\pi_1$}
\noLine \UnaryInfC{$X \struct \circblackshort{A}$}
\RightLabel{$rp_\bullet$} \UnaryInfC{$\circwhiteshort{X} \struct A$}
\AxiomC{$\psi'$}
\noLine \UnaryInfC{$A \struct Z_2'[X^l]$}
\RightLabel{$cut$} \BinaryInfC{$\circwhiteshort{X} \struct Z_2'[X^l]$}
\RightLabel{$rp_\bullet$} \UnaryInfC{$X \struct \circblack{Z_2'[X^l]}$}
\DisplayProof
$$\end{proof}

\begin{lemma}
\label{lm:cut-diaf}
Suppose $\deriveshallow{\pi_1}{X \struct A}$ and
$\deriveshallow{\pi_2}{Z_1[(\diaf A)^k] \struct Z_2[(\diaf A)^l]}$ for
some $k$-hole quasi-negative context $Z_1[\cdots]$ and $l$-hole negative
context $Z_2[\cdots]$, and the cut ranks of $\pi_1$ and $\pi_2$ are
smaller than $|\diaf A|.$ Then there exists $\pi$ such that
$\deriveshallow{\pi}{Z_1[(\circwhiteshort{X})^k] \struct Z_2[(\circwhiteshort{X})^l]}$ 
and $cr(\pi) < |\diaf A|$.
\end{lemma}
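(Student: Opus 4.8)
The plan is to mirror the proof of Lemma~\ref{lm:cut-boxf} almost line for line, with $rp_\bullet$ playing the role that $rp_\circ$ plays there: since $\diaf$ is the left residual under $rp_\bullet$ (which relates $\circwhiteshort{X}\struct Y$ with $X\struct\circblackshort{Y}$), exactly as $\boxf$ is the right residual under $rp_\circ$, the same manoeuvre works. Concretely I would argue by induction on $|\pi_2|$ that any $\lbikt$-derivation of $Z_1[(\diaf A)^k]\struct Z_2[(\diaf A)^l]$ with cut rank below $|\diaf A|$, for a $k$-hole quasi-negative $Z_1$ and an $l$-hole negative $Z_2$, can be turned into one of $Z_1[(\circwhiteshort{X})^k]\struct Z_2[(\circwhiteshort{X})^l]$ with cut rank below $|\diaf A|$. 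When the last rule of $\pi_2$ does not act on a traced occurrence of $\diaf A$, I apply the induction hypothesis to the premises and re-apply the rule, using that $\lbikt$-rules are closed under structural substitution and that filling a (quasi-)negative hole with the structure $\circwhiteshort{X}$ preserves the polarity of the remaining holes, so the side conditions on $Z_1$ and $Z_2$ are maintained. Weakening and contraction on a traced $\diaf A$ are handled as in the corresponding clauses of Lemma~\ref{lm:cut-atm}: a contraction on the formula $\diaf A$ becomes a contraction on the structure $\circwhiteshort{X}$, which is legitimate since $c_L$ and $c_R$ contract arbitrary structures. If $id$ is applied with a traced $\diaf A$ as principal formula --- which can only happen in the antecedent, since all traced occurrences are negative --- the resulting sequent is closed by applying $\diaf_R$ to $\pi_1$, yielding $\circwhiteshort{X}\struct\diaf A$, followed by top-level weakenings.

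The only genuinely interesting case is when $\pi_2$ ends with $\diaf_L$ acting on a traced occurrence of $\diaf A$. Then the antecedent of the conclusion is exactly $\diaf A$, so $k=1$ and $Z_1$ is the bare hole $[~]$, while the premise $\psi$ derives $\circwhiteshort{A}\struct Z_2[(\diaf A)^l]$. Applying the induction hypothesis to $\psi$ (with a zero-hole antecedent context) yields $\deriveshallow{\psi'}{\circwhiteshort{A}\struct Z_2[(\circwhiteshort{X})^l]}$ with $cr(\psi')<|\diaf A|$, and $\pi$ is then assembled from $\pi_1$, two applications of $rp_\bullet$, and a single cut on the strictly smaller formula $A$:
$$
\AxiomC{$\pi_1$}
\noLine \UnaryInfC{$X \struct A$}
\AxiomC{$\psi'$}
\noLine \UnaryInfC{$\circwhiteshort{A} \struct Z_2[(\circwhiteshort{X})^l]$}
\RightLabel{$rp_\bullet$} \UnaryInfC{$A \struct \circblack{Z_2[(\circwhiteshort{X})^l]}$}
\RightLabel{$cut$} \BinaryInfC{$X \struct \circblack{Z_2[(\circwhiteshort{X})^l]}$}
\RightLabel{$rp_\bullet$} \UnaryInfC{$\circwhiteshort{X} \struct Z_2[(\circwhiteshort{X})^l]$}
\DisplayProof
$$
Since $|A|<|\diaf A|$ and $cr(\pi_1),cr(\psi')<|\diaf A|$, the resulting derivation satisfies $cr(\pi)<|\diaf A|$, as required.

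I do not expect a real obstacle: the argument is the exact dual of Lemma~\ref{lm:cut-boxf} (and of Lemma~\ref{lm:cut-boxp}), with $rp_\circ$ replaced by $rp_\bullet$, the roles of $\circwhiteshort{}$ and $\circblackshort{}$ interchanged, and the substituted object being the structure $\circwhiteshort{X}$ rather than a formula. The one point needing a little care --- but which is entirely routine --- is the polarity bookkeeping for the traced occurrences when descending to the premises of $\pi_2$, so as to be sure the quasi-negative/negative hypotheses on the hole-contexts survive both that descent and the substitution of $\circwhiteshort{X}$ for $\diaf A$.
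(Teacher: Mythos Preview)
Your proposal is correct and matches the paper's proof essentially line for line: the paper also proceeds by induction on $|\pi_2|$, singles out the case where $\pi_2$ ends with $\diaf_L$ on a traced $\diaf A$, and assembles the required derivation by applying $rp_\bullet$ to $\psi'$, cutting on $A$ against $\pi_1$, and undoing with $rp_\bullet$ again --- exactly the derivation you display. Your additional remarks on the routine cases (contraction, $id$, polarity bookkeeping) are sound and more detailed than what the paper spells out.
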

\begin{proof}
By induction on $|\pi_2|$. The non-trivial case is when $\pi_2$ ends with $\diaf_L$ on $\diaf A$ as shown below left. By induction hypothesis we have $\deriveshallow{\psi'}{\circwhiteshort{A} \struct Z_2[(\circwhiteshort{X})^l]}$ where $cr(\psi') < |\diaf A|$. The derivation $\pi$ is constructed as shown below right:
$$
\AxiomC{$\psi$}
\noLine \UnaryInfC{$\circwhiteshort{A} \struct Z_2[(\diaf A)^l]$}
\RightLabel{$\diaf_L$} \UnaryInfC{$\diaf A \struct Z_2[(\diaf A)^l]$}
\DisplayProof
\qquad
\AxiomC{$\pi_1$}
\noLine \UnaryInfC{$X \struct A$}
\AxiomC{$\psi'$}
\noLine  \UnaryInfC{$\circwhiteshort{A} \struct Z_2[(\circwhiteshort{X})^l]$}
\RightLabel{$rp_\bullet$} \UnaryInfC{$A \struct \circblack{Z_2[(\circwhiteshort{X})^l}]$}
\RightLabel{$cut$} \BinaryInfC{$X \struct \circblack {Z_2[(\circwhiteshort{X})^l]}$}
\RightLabel{$rp_\bullet$} \UnaryInfC{$\circwhiteshort{X} \struct Z_2[(\circwhiteshort{X})^l]$}
\DisplayProof
$$
\end{proof}

\begin{lemma}
\label{lm:cut-diap}
Suppose $\deriveshallow{\pi_1}{X \struct A}$ and
$\deriveshallow{\pi_2}{Z_1[(\diap A)^k] \struct Z_2[(\diap A)^l]}$ for
some $k$-hole quasi-negative context $Z_1[\cdots]$ and $l$-hole negative
context $Z_2[\cdots]$, and the cut ranks of $\pi_1$ and $\pi_2$ are
smaller than $|\diap A|.$ Then there exists $\pi$ such that
$\deriveshallow{\pi}{Z_1[(\circblackshort{X})^k] \struct
  Z_2[(\circblackshort{X})^l]}$ and $cr(\pi) < |\diap A|$.
\end{lemma}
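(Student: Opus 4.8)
The plan is to prove Lemma~\ref{lm:cut-diap} as the exact dual of Lemma~\ref{lm:cut-diaf}, obtained by interchanging the white structural modality $\circ$ with the black one $\bullet$ and, correspondingly, the residuation rule $rp_\bullet$ with $rp_\circ$. The argument proceeds by induction on $|\pi_2|$, following the same case analysis as in Lemma~\ref{lm:cut-diaf}.

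First I would dispatch the routine cases. If the last rule of $\pi_2$ is not a $\diap_L$ introducing one of the displayed occurrences of $\diap A$, then either it leaves all such occurrences untouched or it acts on a \emph{positive} occurrence of $\diap A$ (via $\diap_R$), which is disjoint from the quasi-negative occurrences to be replaced; in either situation the rule commutes with the substitution $\diap A \mapsto \circblackshort{X}$, so I apply the induction hypothesis to the premises --- the cut-rank side conditions are inherited, since the premises of $\pi_2$ have cut rank at most $cr(\pi_2) < |\diap A|$ --- and reapply the same rule, introducing no new cut. The one slightly non-obvious sub-case is $id$: if the conclusion of an $id$ instance is $W, \diap A \struct \diap A, V$ with the antecedent occurrence of $\diap A$ among those to be replaced, then after substitution I rebuild it by applying $\diap_R$ to $\pi_1 \colon X \struct A$ to obtain $\circblackshort{X} \struct \diap A$ and then weakening, exactly as in the $id$ case of Lemma~\ref{lm:cut-atm}.

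The one genuinely interesting case is when $\pi_2$ ends with $\diap_L$ acting on an occurrence of $\diap A$ to be replaced. Since $\diap_L$ is a shallow rule whose conclusion has $\diap A$ as the \emph{entire} antecedent, this forces $k = 1$ and $Z_1[\cdots]$ to be the trivial one-hole context, so $\pi_2$ has premise $\circblackshort{A} \struct Z_2[(\diap A)^l]$. Treating the antecedent $\circblackshort{A}$ as a zero-hole context, the induction hypothesis yields $\deriveshallow{\psi'}{\circblackshort{A} \struct Z_2[(\circblackshort{X})^l]}$ with $cr(\psi') < |\diap A|$. I then build $\pi$ by residuating $\psi'$ with $rp_\circ$ to get $A \struct \circwhite{Z_2[(\circblackshort{X})^l]}$, cutting against $\pi_1 \colon X \struct A$ on the strictly smaller formula $A$ to obtain $X \struct \circwhite{Z_2[(\circblackshort{X})^l]}$, and residuating back with $rp_\circ$ to obtain $\circblackshort{X} \struct Z_2[(\circblackshort{X})^l]$, which is the required $Z_1[(\circblackshort{X})^1] \struct Z_2[(\circblackshort{X})^l]$. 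The only new cut is on $A$ and $cr(\pi_1), cr(\psi') < |\diap A|$, so $cr(\pi) < |\diap A|$.

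I do not expect any real obstacle: the lemma is a line-by-line mirror of Lemma~\ref{lm:cut-diaf}. The only points needing attention are bookkeeping ones --- checking that it is $rp_\circ$, not $rp_\bullet$, that carries a $\bullet$-structure across the turnstile into a $\circ$-structure, and confirming that the shallow rule $\diap_L$ really does collapse $Z_1$ to the trivial context, so that the induction hypothesis is invoked with the correct hole counts.
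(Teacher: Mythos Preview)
Your proposal is correct and follows exactly the paper's own argument: induction on $|\pi_2|$, with the only non-trivial case being $\diap_L$ on a displayed occurrence, handled by applying the induction hypothesis to the premise $\circblackshort{A} \struct Z_2[(\diap A)^l]$, residuating via $rp_\circ$ to expose $A$, cutting against $\pi_1$ on $A$, and residuating back. Your additional remarks on the $id$ case and on why $Z_1$ collapses to the trivial one-hole context are accurate elaborations of points the paper leaves implicit.
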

\begin{proof}
By induction on $|\pi_2|$. The non-trivial case is when $\pi_2$ ends with $\diap_L$ on $\diap A$ as shown below left. By induction hypothesis we have $\deriveshallow{\psi'}{\circblackshort{A} \struct Z_2[(\circblackshort{X})^l]}$ where $cr(\psi') < |\diap A|$. The derivation $\pi$ is constructed as shown below right:

$$
\AxiomC{$\psi$}
\noLine \UnaryInfC{$\circblackshort{A} \struct Z_2[(\diap A)^l]$}
\RightLabel{$\diap_L$} \UnaryInfC{$\diap A \struct Z_2[(\diap A)^l]$}
\DisplayProof
\qquad
\AxiomC{$\pi_1$}
\noLine \UnaryInfC{$X \struct A$}
\AxiomC{$\psi'$}
\noLine  \UnaryInfC{$\circblackshort{A} \struct Z_2[(\circblackshort{X})^l]$}
\RightLabel{$rp_\circ$} \UnaryInfC{$A \struct \circwhite{Z_2[(\circblackshort{X})^l]}$}
\RightLabel{$cut$} \BinaryInfC{$X \struct \circwhite{Z_2[(\circblackshort{X})^l]}$}
\RightLabel{$rp_\circ$} \UnaryInfC{$\circblackshort{X} \struct Z_2[(\circblackshort{X})^l]$}
\DisplayProof
$$
\end{proof}

\begin{lemma}
\label{lm:cut-non-atm}
Let $A$ be a non-atomic formula. 
Suppose $\deriveshallow{\pi_1}{A, X \struct Y}$ and 
$\deriveshallow{\pi_2}{Z_1[A^k] \struct Z_2[A^l]}$
where $Z_1[\cdots]$ is a $k$-hole positive context, $Z_2[\cdots]$ is
an $l$-hole quasi-positive context, and the cut ranks of $\pi_1$ and $\pi_2$
are smaller than $|A|$. Then there exists $\pi$ such that
$\deriveshallow{\pi}{Z_1[(X \struct Y)^k] \struct Z_2[(X \struct Y)^l]}$ 
and $cr(\pi) < |A|$.
\end{lemma}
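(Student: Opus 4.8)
The plan is to read Lemma~\ref{lm:cut-non-atm} as the \emph{left-principal} companion of Lemmas~\ref{lm:cut-or}--\ref{lm:cut-diap}: those lemmas push an occurrence of the cut formula that is produced by a right rule through a derivation in which the cut formula occurs \emph{negatively}, whereas here we push an occurrence that is placed on the left (by $\pi_1$) through a derivation in which it occurs \emph{positively}. I would prove the lemma by induction on $|\pi_2|$ with a case analysis on the last rule of $\pi_2$; the only genuinely new work is the case in which that rule introduces a displayed occurrence of $A$, which I reduce to the matching right-principal lemma with $\pi_1$ supplying a negative occurrence of $A$.

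For the routine cases, note first that every rule of $\lbikt$ preserves the polarity of each substructure occurrence it does not consume: the residuation and structural rules only re-bracket around $\struct$, and neither ``$,$'' nor $\circ$ nor $\bullet$ alters the closest enclosing $\struct$. Hence if the last rule of $\pi_2$ does not take a displayed occurrence of $A$ as principal, the displayed occurrences remain in positive positions in the premises, their multiplicities possibly changed by $w_L$, $w_R$, $c_L$, $c_R$, which act on whole substructures and therefore duplicate or erase any holes inside those substructures consistently. One then applies the induction hypothesis to the premises --- smaller height, cut rank $\le cr(\pi_2) < |A|$, and the relevant contexts again positive, resp.\ quasi-positive --- and re-applies the last rule, using the structural rules on whole copies of $X \struct Y$ wherever $\pi_2$ used them on copies of $A$. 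For the leaves: the same axiom proves the substituted $id$/$\top_R$/$\bot_L$ instance, except when a displayed $A$ is the succedent formula of an $id$ or $\top_R$ instance --- in which case, that occurrence being negative, the antecedent context has no holes and the substituted sequent is of the form $U \struct (X \struct Y), V$, obtained from $\pi_1$ by $\struct_R$ followed by $w_L$ and $w_R$.

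The key case is that the last rule of $\pi_2$ is the right-introduction rule for the main connective of $A$, acting on a displayed occurrence. Take $A = A_1 \land A_2$; the other seven cases go the same way with $\lor_R$, $\impl_R$, $\dimpl_R$, $\boxf_R$, $\diaf_R$, $\boxp_R$, $\diap_R$ and the corresponding reduction lemma. Because the introduced occurrence lies at the top level of the succedent, $Z_2[\cdots]$ has the form $([\,], Z_2'[\cdots])$ and the premises of $\land_R$ prove $Z_1[A^k] \struct A_j, Z_2'[A^{l-1}]$ for $j=1,2$; since $Z_1$ is still a $k$-hole positive context and $(A_j, Z_2'[\cdots])$ is an $(l-1)$-hole quasi-positive context, the induction hypothesis supplies derivations $\psi^j$ of $Z_1[(X\struct Y)^k] \struct A_j, Z_2'[(X\struct Y)^{l-1}]$ of cut rank $< |A|$. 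Now apply Lemma~\ref{lm:cut-and} with $\psi^1, \psi^2$ in the role of its two right-rule premises and with $\pi_1$ in the role of the derivation in which the cut formula occurs negatively --- $\pi_1$'s end-sequent $A, X \struct Y$ matches under the one-hole quasi-negative context $([\,], X)$ and the zero-hole negative context $Y$, and all cut-rank side conditions read ``$< |A_1 \land A_2| = |A|$''. Lemma~\ref{lm:cut-and} then replaces that negative occurrence of $A$ by the structure $S := \bigl(Z_1[(X\struct Y)^k] \struct Z_2'[(X\struct Y)^{l-1}]\bigr)$, yielding a derivation of cut rank $< |A|$ of $(S, X) \struct Y$, which the display postulates $rp_\struct^R$ and $rp_\struct^L$ (rules of $DP$, hence cut-free) turn into the required $Z_1[(X\struct Y)^k] \struct (X\struct Y), Z_2'[(X\struct Y)^{l-1}]$. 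The variations are minor: for $\impl_R$, which weakens its succedent context in, $Z_2'$ is absent from the premise and is restored by $w_R$ at the end; for $\dimpl_R$ one uses Lemma~\ref{lm:cut-excl}, whose two premises have different shapes; and for the four modalities $Z_2'$ is empty, the invoked reduction lemma (one of Lemmas~\ref{lm:cut-boxf}--\ref{lm:cut-diap}) replaces $A$ by a $\circ$- or $\bullet$-structure internally, and a single application of $rp_\struct^R$ finishes.

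I expect the main obstacle to be the bookkeeping inside this key case: for each of the eight connectives one must verify that, once the principal hole is peeled off, the premise of the right rule still presents the antecedent side as a positive and the remainder as a quasi-positive multi-hole context (so the induction hypothesis is applicable), and --- the display-calculus-flavoured part --- that the sequent delivered by the invoked reduction lemma is $DP$-equivalent to the goal; this final re-bracketing is routine but must be carried out case by case. A lesser nuisance is keeping the hole multiplicities honest through the structural-rule cases, where weakening and contraction act on whole substructures that may carry holes.
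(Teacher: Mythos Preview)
Your proposal is correct and follows essentially the same approach as the paper: induction on $|\pi_2|$, with the key case being a right-introduction of $A$, handled by first applying the induction hypothesis to the non-principal occurrences and then invoking the matching right-principal lemma (one of Lemmas~\ref{lm:cut-or}--\ref{lm:cut-diap}) with $\pi_1$ supplying the single quasi-negative occurrence, followed by a short structural repair. The paper's presentation differs only cosmetically---it organises the case split by the main connective of $A$ rather than by the last rule of $\pi_2$, and in the two cases it spells out ($\boxf$, $\diaf$) the succedent context $Z_2'$ happens to be empty, so a single $\struct_R$ suffices where you use $rp_\struct^R$ followed by $rp_\struct^L$.
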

\begin{proof}
By induction on $|\pi_2|$ and case analysis on $A.$ The non-trivial
case is when $\pi_2$ ends with a right-introduction rule on $A.$ That
is, in this case, we have $Z_2[A^l] = (Z_2'[A^{l-1}], A$) for some
quasi-positive context $Z_2'[\cdots].$ We distinguish several cases
depending on $A$. We show here the cases where $A$ is either 
$\boxf C$ or $\diaf C$.
\begin{itemize}

\item Suppose $A = \boxf C$ and $\pi_2$ is as below left. 
By induction hypothesis, we have a derivation $\psi'$ of 
$Z_1[(X \struct Y)^k] \struct \circwhiteshort{C}$. 
Then the derivation $\pi$ is constructed as shown below right, 
with $\theta$ obtained by applying Lemma~\ref{lm:cut-boxf} 
to $\psi'$ and $\pi_1$.
$$
\AxiomC{$\psi$}
\noLine \UnaryInfC{$Z_1[(\boxf C)^k] \struct \circwhiteshort{C}$}
\RightLabel{$\boxf_R$} \UnaryInfC{$Z_1[(\boxf C)^k] \struct \boxf C$}
\DisplayProof
\qquad
\AxiomC{$\theta$}
\noLine \UnaryInfC{$Z_1[(X \struct Y)^k], X \struct Y$}
\RightLabel{$\struct_R$} \UnaryInfC{$Z_1[(X \struct Y)^k] \struct X \struct Y$}
\DisplayProof
$$

\item Suppose $A = \diaf C$ and $\pi_2$ is as below left. By induction hypothesis, we have a derivation $\psi'$ of $Z_1'[(X \struct Y)^k] \struct C$. Then the derivation $\pi$ is constructed as shown below right, with $\theta$ obtained by applying Lemma~\ref{lm:cut-diaf} 
to $\psi'$ and $\pi_1$.

$$
\AxiomC{$\psi$}
\noLine \UnaryInfC{$Z_1'[(\diaf C)^k] \struct C$}
\RightLabel{$\diaf_R$} \UnaryInfC{$\circwhite{Z_1'[(\diaf C)^k]} \struct \diaf C$}
\DisplayProof
\qquad
\AxiomC{$\theta$}
\noLine \UnaryInfC{$\circwhite{Z_1'[(X \struct Y)^k]}, X \struct Y$}
\RightLabel{$\struct_R$} \UnaryInfC{$\circwhite{Z_1'[(X \struct Y)^k]} \struct X \struct Y$}
\DisplayProof
$$

\end{itemize}
The other cases are treated analogously, using Lemmas~\ref{lm:cut-or}, \ref{lm:cut-and}, \ref{lm:cut-impl}, \ref{lm:cut-excl}, \ref{lm:cut-boxp} and \ref{lm:cut-diap}.
\end{proof}

\begin{theorem}[Cut elimination for $\lbikt$]
\label{thm:cut-elim}
If $X \struct Y$ is $\lbikt$-derivable then it is also
$\lbikt$-derivable without using cut.
\end{theorem}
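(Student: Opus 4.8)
The plan is to prove cut-elimination by the usual double induction on (cut rank, height of derivation), with the non-trivial work already delegated to Lemmas~\ref{lm:cut-atm} and~\ref{lm:cut-or}--\ref{lm:cut-non-atm}. Concretely, I would proceed by a main induction on the cut rank $cr(\pi)$ of a given derivation $\pi$ of $X \struct Y$. If $cr(\pi) = 0$ there is nothing to do. Otherwise, pick a topmost instance of $cut$ in $\pi$ whose cut rank equals $cr(\pi)$; by choosing it topmost, both immediate subderivations $\pi_1, \pi_2$ of the two premises $X_1 \struct Y_1, A$ and $A, X_2 \struct Y_2$ have cut rank strictly less than $|A|$. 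The goal is to replace this single cut by a derivation of $X_1, X_2 \struct Y_1, Y_2$ whose cut rank is also $< |A|$; iterating this over all maximal-rank cuts lowers $cr(\pi)$, and the outer induction then finishes the job.

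\textbf{Eliminating one cut.} For the single-cut reduction I would split on whether $A$ is atomic or not. If $A$ is atomic, I first use the display property (Lemma~\ref{lemma:display-property-negative} or its positive counterpart, plus closure under substitution) to bring the derivation of $X_1 \struct Y_1, A$ into a shape where $A$ sits as a displayed succedent, and dually for $A, X_2 \struct Y_2$; then Lemma~\ref{lm:cut-atm} performs the actual proof-substitution, replacing the occurrences of $p = A$ by the structure $X \struct Y$ coming from the other premise. If $A$ is non-atomic, I instead trace the cut formula in $\pi_2$ (the right premise, where $A$ occurs in the antecedent) down to the point where it is introduced by its left rule, using the display lemmas to normalise the context into a quasi-negative/negative $k$-hole context of the form $Z_1[A^k] \struct Z_2[A^l]$; this is exactly the hypothesis shape of Lemma~\ref{lm:cut-non-atm}. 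Feeding $\pi_1$ (which witnesses $A, X \struct Y$ after the analogous normalisation on the other side) and this reshaped $\pi_2$ into Lemma~\ref{lm:cut-non-atm} yields the desired cut-reduced derivation of $Z_1[(X\struct Y)^k] \struct Z_2[(X\struct Y)^l]$, from which the original endsequent is recovered (again via the invertible $DP$-derivations). Lemma~\ref{lm:cut-non-atm} in turn dispatches each connective by calling the matching connective-specific lemma (\ref{lm:cut-or}, \ref{lm:cut-and}, \ref{lm:cut-impl}, \ref{lm:cut-excl}, \ref{lm:cut-boxf}, \ref{lm:cut-boxp}, \ref{lm:cut-diaf}, \ref{lm:cut-diap}), each of which introduces only cuts on proper subformulae of $A$, hence of rank $< |A|$.

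\textbf{Main obstacle.} The delicate point is the bookkeeping around the display/substitution interface: the connective lemmas are stated for cuts where the cut formula occupies a whole displayed antecedent or succedent and the other side is a multi-hole (quasi-)negative context, whereas a generic cut in $\pi$ has the cut formula buried in an arbitrary top-level position of $X_1 \struct Y_1$ and $X_2 \struct Y_2$. Justifying that the display lemmas let us always reach the required normal form --- and that the polarity constraints (positive vs.\ quasi-positive, negative vs.\ quasi-negative) in the lemma statements are exactly what the display property delivers --- is where the real care lies; but since all $DP$ rules are invertible and the shallow rules are closed under substitution, each such reshaping is a derivation in both directions and adds no cuts. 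Once this interface is set up, the theorem follows by the outer induction on cut rank, reducing the maximal-rank cuts one topmost instance at a time until the derivation is cut-free.
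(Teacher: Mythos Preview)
Your overall strategy---outer induction on cut rank, reduce a topmost maximal-rank cut via Lemmas~\ref{lm:cut-atm} and~\ref{lm:cut-non-atm}, iterate---is exactly the paper's. But you have overcomplicated the interface to those lemmas and mixed up which side they operate on.

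No display reshaping is needed at the theorem level. In $\lbikt$ the cut rule already presents the cut formula at the top level of each premise: $X_1 \struct Y_1, A$ and $A, X_2 \struct Y_2$. The first of these is \emph{literally} an instance of $Z_1[A^k] \struct Z_2[A^l]$ with $k=0$, $l=1$, $Z_1 = X_1$ (a $0$-hole positive context) and $Z_2[\,] = (Y_1, [\,])$ (a $1$-hole quasi-positive context). Hence Lemma~\ref{lm:cut-non-atm} applies directly, with its $\pi_1$ taken to be your $\pi_2$ (the premise $A, X_2 \struct Y_2$) and its $\pi_2$ taken to be your $\pi_1$. The output is a derivation of $X_1 \struct Y_1, (X_2 \struct Y_2)$ of cut rank $< |A|$; a single $s_R$ yields $X_1, X_2 \struct Y_1, Y_2$. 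The atomic case is identical, using Lemma~\ref{lm:cut-atm}. That is the entire proof in the paper.

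You have also swapped the polarities. Lemma~\ref{lm:cut-non-atm} works with \emph{positive}/quasi-positive contexts and traces the cut formula upward to its \emph{right}-introduction rule; it is the auxiliary Lemmas~\ref{lm:cut-or}--\ref{lm:cut-diap}, invoked \emph{inside} the proof of Lemma~\ref{lm:cut-non-atm}, that use quasi-negative/negative contexts and trace to the left-introduction. All of that tracing is already encapsulated in the lemmas; the theorem merely invokes them once. Your ``main obstacle''---reshaping an arbitrary cut into the lemma's normal form---is therefore not an obstacle: the multi-hole contexts in the lemma statements exist so that the \emph{internal} inductions (where the cut formula gets copied by contraction or moved by residuation) go through, not because the top-level cut itself ever needs massaging.
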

\begin{proof}
As typical in cut elimination proofs, we remove topmost cuts in succession. 
Let $\pi$ be a derivation of $\lbikt$ with a topmost cut instance
$$
\AxiomC{$\pi_1$}
\noLine \UnaryInfC{$X_1 \struct Y_1, A$}
\AxiomC{$\pi_2$}
\noLine \UnaryInfC{$A, X_2 \struct Y_2$}
\RightLabel{cut} \BinaryInfC{$X_1, X_2 \struct Y_1, Y_2$}
\DisplayProof
$$

Note that $\pi_1$ and $\pi_2$ are both cut-free since this is a
topmost instance in $\pi.$ We use induction on the size of $A$
to eliminate  this topmost instance of cut.  

If $A$ is an atomic formula $p$ then the cut free derivation is
constructed as follows where $\psi$ is obtained from applying
Lemma~\ref{lm:cut-atm} to $\pi_2$ and $\pi_1$:

$$
\AxiomC{$\psi$}
\noLine \UnaryInfC{$X_1 \struct Y_1, (X_2 \struct Y_2)$}
\RightLabel{$s_R$} \UnaryInfC{$X_1, X_2 \struct Y_1, Y_2$}
\DisplayProof
$$

If $A$ is non-atomic, using Lemma~\ref{lm:cut-non-atm} we get the
following derivation $\pi'$:
$$
\AxiomC{$\psi$}
\noLine \UnaryInfC{$X_1 \struct Y_1, (X_2 \struct Y_2)$}
\RightLabel{$s_R$} \UnaryInfC{$X_1, X_2 \struct Y_1, Y_2$}
\DisplayProof
$$
We have $cr(\pi') < |A|$ by Lemma~\ref{lm:cut-non-atm}, therefore by
induction hypothesis, we can remove all the cuts in $\pi'$ to get a
cut-free derivation of $X_1, X_2 \struct Y_1, Y_2.$
\end{proof}

\section{Equivalence between $\lbikt$ and $\dbikt$}
\label{sec:equiv}

We now show that $\lbikt$ and $\dbikt$ are equivalent.
We first show that every derivation in $\dbikt$ can be mimicked
by a cut-free derivation in $\lbikt.$
The interesting cases involve showing that the propagation rules of
$\dbikt$ are derivable in $\lbikt$ using residuation. This is not 
surprising since the residuation rules in display calculi are used
exactly for the purpose of displaying and un-displaying sub-structures
so that inference rules can be applied to them.

\begin{theorem}
\label{thm:soundness}
For any $X$ and $Y$, if $\derivedeep{\pi}{X \struct Y}$ 
then $\deriveshallow{\pi'}{X \struct Y}$.
\end{theorem}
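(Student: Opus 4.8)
I would argue by induction on the height $|\pi|$ of the given $\dbikt$-derivation, showing for every $\dbikt$ rule that its conclusion is $\lbikt$-derivable whenever all its premises are; since cut-elimination for $\lbikt$ (Theorem~\ref{thm:cut-elim}) does not mention $\dbikt$, there is no circularity in using it afterwards to make the resulting derivation cut-free. The engine of the whole argument is the display property. Lemmas~\ref{lemma:display-property-neutral}--\ref{lemma:display-property-negative}, together with the invertibility of the $DP$-rules and closure of $\lbikt$-rules under substitution, give: for any context $\Sigma[]$ and any structure $W$, the sequent $\Sigma[W]$ is inter-derivable by a $DP$-derivation with a \emph{displayed form} in which $W$ occurs as the whole antecedent (if the hole of $\Sigma[]$ is negative), the whole succedent (if positive), or either (if neutral) -- and, crucially, the displaced remainder $V$ depends only on $\Sigma[]$, not on $W$. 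So to simulate a $\dbikt$-step with conclusion $\Sigma[W_c]$ and premises $\Sigma[W_1],\dots,\Sigma[W_n]$ it suffices to: $DP$-transform the inductively given derivations of the premises into derivations of the displayed forms $V \struct W_i$ (or $W_i \struct V$); $\lbikt$-derive the displayed conclusion $V \struct W_c$ (or $W_c \struct V$) from them; and $DP$-transform back to $\Sigma[W_c]$.

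For the base cases $id$, $\bot_L$, $\top_R$ one displays the filler, which is a nested sequent such as $(X,A)\struct(A,Y)$, uses $rp_\struct^L$ or $rp_\struct^R$ to expose its turnstile, and applies the corresponding $\lbikt$ axiom. For the logical rules $\lor_L,\lor_R,\land_L,\land_R,\dimpl_L,\impl_R,\impl_L,\dimpl_R$ each premise keeps the principal formula, so after displaying I apply the matching $\lbikt$ logical rule, possibly twice, and absorb the duplicates with $c_L$ or $c_R$: e.g.\ deep $\lor_L$ on displayed premises $A\lor B, A \struct V$ and $A\lor B, B \struct V$ becomes $\lor_L$ (giving $A\lor B, A\lor B \struct V$) followed by $c_L$; deep $\land_L$ becomes two applications of $\land_L$ followed by two $c_L$; and so on. No cut is needed here because the shallow logical rules tolerate an arbitrary side context. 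The strict-context provisos on $\impl_L$ and $\dimpl_R$ only restrict where the hole lies and are harmless, since strict contexts are covered by the same display lemmas.

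The real work is in the modal rules, including the ``shallow-style'' ones $\diaf_L,\boxf_R,\diap_L,\boxp_R$ of $\dbikt$ and the propagation rules, which I would turn into derived rules of $\lbikt$ -- the role residuation plays in display calculi. Unlike $\lor_L$, the shallow modal rules allow no side context (e.g.\ shallow $\diaf_L$ reads $\circwhiteshort A \struct X / \diaf A \struct X$, with $\diaf A$ the whole antecedent), so the extra principal formula that the deep versions carry cannot be absorbed by contraction alone; instead one cuts against a small ``interaction'' derivation. Thus for deep $\diaf_L$ one uses that $\circwhiteshort A \struct \diaf A$ is derivable ($id$ then $\diaf_R$) and cuts it against the displayed premise, then contracts; for $\boxf_{L1}$ one uses that $\circblackshort{\boxf A} \struct A$ is derivable ($id$, $\boxf_L$, $rp_\circ$) and cuts it (together with a $w_L$ to restore $X$ inside the $\bullet$-structure) against the displayed premise; and analogously for $\diaf_{R1},\boxp_{L1},\diap_{R1}$ and the remaining shallow-style modal rules. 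The $\struct$-propagation rules $\struct_{L1},\struct_{R1},\struct_{L2},\struct_{R2}$ are exactly those of the BiInt fragment and their $\lbikt$-derivability -- displaying, shuffling with $s_L,s_R,\struct_L,\struct_R$, re-exposing, contracting -- is the argument of~\cite{postniece2009}. The hardest rules are $\boxp_{L2},\diaf_{R2},\boxf_{L2},\diap_{R2}$: because they refer to both sides of a nested sequent through $\struct$, their simulations must alternate residuation on the outer $\struct$ with $\circ/\bullet$-residuation so as to line up $\boxp A$ (resp.\ $\diaf A$, etc.) with a copy of the required structure before a shallow modal rule and a single contraction finish the job.

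I expect these four ``mixed'' modal propagation rules to be the main obstacle: the residuation bookkeeping must be arranged so that, after the reshuffling, the shallow modal rule fires on exactly the intended principal formula and precisely one duplicate structure is left for one contraction to delete. A secondary point needing care, rule by rule, is checking that the displaced remainders attached to the premises and to the conclusion of each step genuinely coincide; this is precisely where it matters that the display lemmas' output depends only on the surrounding context, not on the filler, since that is what lets the inductively given displayed-premise derivations be plugged in directly above the displayed conclusion. Every cut introduced above is on a formula occurring in the $\dbikt$-rule instance, so the construction is finite, and a final appeal to Theorem~\ref{thm:cut-elim} eliminates all of them, giving the promised cut-free $\lbikt$-derivation of $X \struct Y$.
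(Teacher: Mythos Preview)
Your approach is correct and close in spirit to the paper's—both go through the display property to reduce each deep rule to a top-level derivability check in $\lbikt$. The substantive difference is that the paper's proof is entirely \emph{cut-free}: every deep rule is shown derivable using only residuation, weakening, contraction, and the shallow logical rules (see the explicit derivations given for $\boxf_{L2}$, $\struct_{R1}$, and $\boxp_{L1}$, none of which invokes cut). Your route through cut against interaction lemmas like $\circwhiteshort A \struct \diaf A$ or $\circblackshort{\boxf A} \struct A$, followed by a global appeal to Theorem~\ref{thm:cut-elim}, is legitimate but unnecessary: the same cases admit short cut-free derivations once one residuates the side context out of the way before applying the shallow modal rule (e.g.\ for deep $\diaf_L$ with displayed premise $\diaf A, \circwhiteshort A \struct V$: apply $\struct_R$, then shallow $\diaf_L$, then $s_R$ and $c_L$). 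What your approach buys is a uniform, easy-to-state pattern; what the paper's buys is a direct cut-free simulation, so that Theorem~\ref{thm:soundness} in fact yields a cut-free $\pi'$ without relying on Theorem~\ref{thm:cut-elim}—not demanded by the statement, but the paper explicitly advertises it in the prose preceding the theorem. (A small slip: your $\diaf_L$ sketch stops at $\circwhiteshort A \struct V$ after cut and contraction, one shallow $\diaf_L$ short of the displayed target $\diaf A \struct V$; the $\boxf_{L1}$ sketch, by contrast, does land on the right sequent.)
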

\begin{proof}
We show that each deep inference rule $\rho$ is derivable in the
shallow system. This is done by case analysis of the context
$\Sigma[~]$ in which the deep rule $\rho$ applies. 
Note that if a deep inference rule $\rho$ is applicable
to $X \struct Y$, then the context $\Sigma[~]$ in this case
is either $[~]$, a positive context or a negative context. 
In the first case, it is easy to show that each valid
instance of $\rho$ where $\Sigma[~] = [~]$ is 
derivable in the shallow system.
For the case where $\Sigma[~]$ is either positive or negative,
we use the display property. We show here the case where
$\rho$ is a rule with a single premise; the other cases
are similar. 
Suppose $\rho$ is as below left. By the display properties, we need only to show that
the rule shown below right is derivable in the shallow system
for some structure $X'$:
$$
\AxiomC{$\Sigma^+[U]$}
\RightLabel{$\rho$} \UnaryInfC{$\Sigma^+[V]$}
\DisplayProof
\qquad
\AxiomC{$X' \struct U$}
\RightLabel{} \UnaryInfC{$X' \struct V$}
\DisplayProof
$$
For example, to show soundness of $\boxf_{L2}$
it is enough to show that the following are derivable:
$$
\AxiomC{$R \struct (\boxf A, X \struct \circwhite{A \struct Y}, Z)$}
\RightLabel{} \UnaryInfC{$R \struct (\boxf A, X \struct \circwhiteshort{Y}, Z)$}
\DisplayProof
\qquad
\AxiomC{$(\boxf A, X \struct \circwhite{A \struct Y}, Z) \struct R$}
\RightLabel{} \UnaryInfC{$(\boxf A, X \struct \circwhiteshort{Y}, Z) \struct R$}
\DisplayProof
$$
Both reduce to showing that the following rule shown below left is derivable; the derivation 
below right gives the required:
$$
\AxiomC{$\boxf A, X \struct \circwhite{A \struct Y}, Z$}
\RightLabel{} \UnaryInfC{$\boxf A, X \struct \circwhiteshort{Y}, Z$}
\DisplayProof
\qquad
\AxiomC{$\boxf A, X \struct \circwhite{A \struct Y}, Z$}
\RightLabel{$\struct_L$} \UnaryInfC{$\boxf A, X \struct Z \struct \circwhite{A \struct Y}$}
\RightLabel{$rp_\circ$} \UnaryInfC{$\circblack{\boxf A, X \struct Z} \struct (A \struct Y)$}
\RightLabel{$s_R$} \UnaryInfC{$A, \circblack{\boxf A, X \struct Z} \struct Y$}
\RightLabel{$\struct_R$} \UnaryInfC{$A \struct \circblack{\boxf A, X \struct Z} \struct Y$}
\RightLabel{$\boxf_L$} \UnaryInfC{$\boxf A \struct \circwhite{\circblack{\boxf A, X \struct Z} \struct Y}$}
\RightLabel{$w_L$} \UnaryInfC{$\boxf A, X \struct W \struct \circwhite{\circblack{\boxf A, X \struct Z} \struct Y}$}
\RightLabel{$rp_\circ$} \UnaryInfC{$\circblack{\boxf A, X \struct Z} \struct \circblack{\boxf A, X \struct Z} \struct Y$}
\RightLabel{$s_R$} \UnaryInfC{$\circblack{\boxf A, X \struct Z}, \circblack{\boxf A, X \struct Z} \struct Y$}
\RightLabel{$c_L$} \UnaryInfC{$\circblack{\boxf A, X \struct Z} \struct Y$}
\RightLabel{$rp_\circ$} \UnaryInfC{$(\boxf A, X \struct Z) \struct \circwhiteshort{Y}$}
\RightLabel{$s_L$} \UnaryInfC{$\boxf A, X \struct \circwhiteshort{Y}, Z$}
\DisplayProof
$$
Below are some other cases, the rest are similar or
easier:
$$
\AxiomC{$Z \struct (X \struct Y, A), A, W$}
\RightLabel{$\struct_{R1}$} \UnaryInfC{$Z \struct (X \struct Y, A), W$}
\DisplayProof
\qquad
\leadsto
\qquad
\AxiomC{$Z \struct (X \struct Y, A), A, W$}
\RightLabel{$\struct_L$} \UnaryInfC{$(Z \struct W) \struct (X \struct Y, A), A$}
\RightLabel{$s_R$} \UnaryInfC{$(Z \struct W), X \struct Y, A, A$}
\RightLabel{$c_R$} \UnaryInfC{$(Z \struct W), X \struct Y, A$}
\RightLabel{$\struct_R$} \UnaryInfC{$(Z \struct W) \struct (X \struct Y, A)$}
\RightLabel{$s_L$} \UnaryInfC{$Z \struct (X \struct Y, A), W$}
\DisplayProof
$$

$$
\AxiomC{$Z, A, \circwhite{\boxp A, X} \struct Y$}
\RightLabel{$\boxp_{L1}$} \UnaryInfC{$Z, \circwhite{\boxp A, X} \struct Y$}
\DisplayProof
\qquad
\leadsto
\qquad
\AxiomC{$Z, A, \circwhite{\boxp A, X} \struct Y$}
\RightLabel{$\struct_R$} \UnaryInfC{$A, \circwhite{\boxp A, X} \struct (Z \struct Y)$}
\RightLabel{$\struct_R$} \UnaryInfC{$A \struct (\circwhite{\boxp A, X} \struct (Z \struct Y))$}
\RightLabel{$\boxp_L$} \UnaryInfC{$\boxp A \struct \circblack{\circwhite{\boxp A, X} \struct (Z \struct Y)}$}
\RightLabel{$w_L$} \UnaryInfC{$\boxp A, X \struct \circblack{\circwhite{\boxp A, X} \struct (Z \struct Y)}$}
\RightLabel{$rp_\bullet$} \UnaryInfC{$\circwhite{\boxp A, X} \struct (\circwhite{\boxp A, X} \struct (Z \struct Y))$}
\RightLabel{$s_R$} \UnaryInfC{$\circwhite{\boxp A, X}, \circwhite{\boxp A, X} \struct (Z \struct Y)$}
\RightLabel{$c_{L}$} \UnaryInfC{$\circwhite{\boxp A, X} \struct (Z \struct Y)$}
\RightLabel{$s_R$} \UnaryInfC{$Z, \circwhite{\boxp A, X} \struct Y$}
\DisplayProof
$$
\end{proof}

We now show that any {\em cut-free} $\lbikt$-derivation
can be transformed into a cut-free $\dbikt$-derivation.
This requires proving cut-free admissibility 
of various structural rules in $\dbikt$.
The admissibility of general weakening and
{\em formula} contraction (not general contraction,
which we will show later) are straightforward
by induction on the height of derivations.

\begin{lemma}[Admissibility of general weakening]\label{lm:adm-weak}
For any structures $X$ and $Y$: if $\derivedeep{\pi}{\Sigma[X]}$ 
then $\derivedeep{\pi'}{\Sigma[X, Y]}$ such that $|\pi'| \leq |\pi|$.
\end{lemma}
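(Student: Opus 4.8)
The plan is to prove admissibility of general weakening in $\dbikt$ by induction on the height $|\pi|$ of the derivation of $\Sigma[X]$, showing simultaneously that the transformed derivation $\pi'$ of $\Sigma[X, Y]$ satisfies $|\pi'| \leq |\pi|$. The structure $Y$ being added is arbitrary, so at each step we weaken into whatever sub-structure the last rule of $\pi$ acts on. Since $\dbikt$ has no residuation rules and all rules act on contexts, weakening is ``local'' in a strong sense: adding $Y$ alongside $X$ does not disturb the applicability of any rule applied elsewhere in the nested sequent.

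First I would set up the case analysis on the last rule $\rho$ applied in $\pi$. If $\rho$ acts in a part of $\Sigma[X]$ disjoint from the hole position where $X$ sits — which is the generic case — then the context of $\rho$ can simply be enlarged to include $Y$, the induction hypothesis is applied to the premise(s), and $\rho$ is reapplied; the height does not increase. The cases requiring a little care are those where the principal structure or context of $\rho$ is $X$ itself or overlaps the region into which we are weakening. For the zero-premise rules $id$, $\bot_L$, $\top_R$, the conclusion $\Sigma[X,Y]$ is still an instance of the same axiom (the displayed formulae $A \struct A$, $\bot$, or $\top$ are still present), so $\pi'$ is again an axiom with $|\pi'| = 1 \leq |\pi|$. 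For the propagation rules and logical rules, one observes that each rule's schematic form contains a structure variable (an $X$, $Y$, $Z$, or $W$ in the figure) in precisely the multiset position where extra material may freely appear; weakening $Y$ in is absorbed into that variable, and we invoke the IH on the premise with the enlarged structure. The provisos on $\impl_L$ and $\dimpl_R$ (strict context) are preserved because adding $Y$ to a multiset sitting at a $\struct$-child or $\circ/\bullet$-child does not change whether that child is an immediate child of $\struct$, $\circ$, or $\bullet$.

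The one genuinely delicate point is polarity: several $\dbikt$ rules ($\struct_{L1}$, $\lor_L$, $\land_L$, $\dimpl_L$, $\impl_L$, the $\boxf_{L1}$-family, etc.) carry a side condition that the context $\Sigma^{-}[\,]$ or $\Sigma^{+}[\,]$ be negative or positive. I must check that weakening $Y$ into the sequent does not change the polarity of the context of $\rho$. Because polarity is determined purely by the pattern of $\struct$ connectives on the path from the hole to the root (and, per the remark after the polarity definition, further nesting inside a negative $\struct$ does not flip it), and weakening only inserts $Y$ as an extra comma-separated sibling somewhere — never introducing or removing a $\struct$ on any relevant path — all polarities are preserved. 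This is where I expect the main obstacle to lie: one must be careful that when $X$ itself occupies, say, the antecedent of a $\struct$ appearing in $\rho$'s schema, the added $Y$ goes to the correct side so that $\Sigma[X,Y]$ is still literally $\Sigma'[\,\cdots X, Y\cdots]$ matching $\rho$'s conclusion pattern, rather than something that would require a residuation step (which $\dbikt$ lacks). Handling this amounts to noting that the statement quantifies over the context $\Sigma$, so we may always re-parse $\Sigma[X,Y]$ as $\Sigma''[X]$ for a suitable $\Sigma''$ and push the weakening down to the relevant premise. With these observations the induction goes through routinely, and the height bound $|\pi'| \leq |\pi|$ follows since no case adds a rule application.
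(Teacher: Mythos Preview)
Your proposal is correct and follows exactly the approach the paper intends: the paper states only that admissibility of general weakening is ``straightforward by induction on the height of derivations'' without giving details, and your height-induction with case analysis on the last rule, together with the observations about polarity preservation and the strict-context proviso, is precisely the routine argument being alluded to.
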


\begin{lemma}[Admissibility of formula contraction]\label{lm:adm-contr-fml}
For any structure $X$ and formula $A$: if $\derivedeep{\pi}{\Sigma[X, A, A]}$ 
then $\derivedeep{\pi'}{\Sigma[X, A]}$ such that $|\pi'| \leq |\pi|$.
\end{lemma}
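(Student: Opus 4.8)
The plan is to prove the statement by induction on the height $|\pi|$ of the given $\dbikt$-derivation of $\Sigma[X, A, A]$, with a case analysis on the last rule $\rho$, producing a derivation $\pi'$ of $\Sigma[X, A]$ with $|\pi'| \le |\pi|$. Since $\dbikt$ contains no explicit contraction rules, $\rho$ is either an axiom ($id$, $\bot_L$, $\top_R$), a propagation rule, or a logical rule. The base cases are immediate: the two occurrences of $A$ to be contracted both sit in a single comma-multiset (the filler $X, A, A$ of the hole of $\Sigma[~]$), while each axiom scheme requires its principal material on a designated side of some $\struct$ (a matching formula on each side for $id$, a $\bot$ in an antecedent for $\bot_L$, a $\top$ in a succedent for $\top_R$); hence deleting one copy of $A$ leaves an instance of the same axiom, so $\Sigma[X, A]$ is derivable in one step.

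For the inductive step the decisive distinction is whether one of the two contracted occurrences of $A$ is the principal formula of $\rho$. If it is not, then $\rho$ and the contraction commute: no $\dbikt$ rule splits or deletes parametric material — propagation and logical rules only add formulae — so the sub-multiset consisting of the two copies of $A$ survives unchanged into every premise of $\rho$. We apply the induction hypothesis to each premise to contract that pair, then reapply $\rho$. Here one must check that $\rho$ remains applicable, but its side conditions — the polarity of the active context for rules such as $\lor_L$, $\land_R$ or $\boxf_{L1}$, and the strictness requirement on the context of $\impl_L$ and $\dimpl_R$ — depend only on the tree of structural connectives along the path to the active node, which deleting a formula from an unrelated multiset leaves untouched.

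If one of the contracted occurrences of $A$ is the principal formula of $\rho$, we exploit the fact that in $\dbikt$ every logical rule keeps its principal formula in the premises and every propagation rule keeps all of its material. Hence the second, non-principal copy of $A$ still occurs in each premise, alongside the copy the rule manipulates; we apply the induction hypothesis to each premise to collapse these to a single occurrence, and then reapply $\rho$ using the surviving copy of $A$ as principal formula, obtaining $\Sigma[X, A]$. For example, if $\rho = \lor_L$ and the principal formula is $A = A_1 \lor A_2$, the premises have the form $\Sigma'[X, A, A, A_1]$ and $\Sigma'[X, A, A, A_2]$, which the induction hypothesis turns into $\Sigma'[X, A, A_1]$ and $\Sigma'[X, A, A_2]$, and $\lor_L$ applies again; if $\rho$ is a propagation rule such as $\struct_{L1}$, $\boxf_{L1}$ or $\boxp_{L2}$, the occurrence the rule produces lies in a structural position disjoint from the contracted pair, so exactly the same manoeuvre goes through. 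The cases for $\impl_L$ and $\dimpl_R$ are analogous, using that these rules also retain the principal formula and do not alter the surrounding (strict) context.

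In every case $\pi'$ is obtained by applying the height-preserving induction hypothesis to the premises of $\rho$ and then reinstating at most the single rule instance $\rho$ that was removed, so $|\pi'| \le |\pi|$ follows immediately. The only labour is the bookkeeping: tracking, for each of the rule schemes of $\dbikt$, where the pair of occurrences of $A$ sits relative to the active part of the rule and verifying that $\rho$ may be reinstated with its polarity and strictness provisos intact. No genuine obstacle arises — in particular, unlike the general contraction result proved later, no appeal to Lemma~\ref{lm:adm-weak} is required here.
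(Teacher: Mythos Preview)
Your proposal is correct and follows exactly the approach the paper indicates: the paper does not spell out a proof of this lemma but simply states that admissibility of weakening and of formula contraction ``are straightforward by induction on the height of derivations.'' Your height induction with case analysis on the last rule, exploiting that every $\dbikt$ rule retains its principal formula and leaves parametric multisets intact, is precisely the intended routine argument.
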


Once weakening and formula contraction are shown admissible, it remains
to show that the residuation rules of $\lbikt$ are also admissible.
In contrast to the case with the deep inference system for bi-intuitionistic 
logic, the combination of modal and intuitionistic structural connectives
complicates the proof of this admissibility. 
It seems crucial to first show ``deep'' admissibility
of certain forms of residuation for $\struct$. 
We state the required lemmas below.

Unless stated otherwise, all lemmas in this section are proved by
induction on $|\pi|$, and $\pi_1'$ is obtained from $\pi_1$ using the
induction hypothesis. 
We label a dashed line with the lemma used to
obtain the conclusion from the premise.

\begin{lemma}[Deep admissibility of structural rules]
\label{lm:deep-adm}
The following statements hold for $\dbikt$:
\begin{enumerate}
\item \label{itm:adm-sl}
{\em Deep admissibility of $s_L$.} 
If $\derivedeep{\pi}{\Sigma[(X \struct Y), Z \struct W]}$ 
then $\derivedeep{\pi'}{\Sigma[X, Z \struct Y, W]}$ such that $|\pi'| \leq |\pi|$.

\item \label{itm:adm-sr} 
{\em Deep admissibility of $s_R$.}
If $\derivedeep{\pi}{\Sigma[X \struct Y, (Z \struct W)]}$ then 
$\derivedeep{\pi'}{\Sigma[X, Z \struct Y, W]}$ such that $|\pi'| \leq |\pi|$.

\item \label{itm:adm-structl}
{\em Deep admissibility of $\struct_L$.}
  If $\derivedeep{\pi}{\Sigma[X \struct Y, Z]}$ and $\Sigma$ is either
  the empty context $[~]$ or a negative context $\Sigma_1^-[~]$, then
  $\derivedeep{\pi'}{\Sigma[(X \struct Y) \struct Z]}$.

\item \label{itm:adm-structr}
{\em Deep admissibility of $\struct_R$.}
  If $\derivedeep{\pi}{\Sigma[X, Y \struct Z]}$ and $\Sigma$ is either
  the empty context $[~]$ or a positive context $\Sigma_1^+[~]$, then
  $\derivedeep{\pi'}{\Sigma[X \struct (Y \struct Z)]}$.
\end{enumerate}
\end{lemma}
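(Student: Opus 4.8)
The plan is to prove all four statements simultaneously by a single induction on $|\pi|$, since the admissibilities of $s_L$ and $s_R$ will be needed when proving those of $\struct_L$ and $\struct_R$, and conversely the residuation moves interact. For each statement, the base cases are the zero-premise rules $id$, $\bot_L$, $\top_R$: here one checks that rearranging a $\struct$-structure as prescribed cannot destroy an axiom, because the axiom condition only requires some $X',A \struct A,Y'$ or $\bot$ on the left or $\top$ on the right somewhere in the nested sequent, and the transformations in (1)--(4) only move structures around within a context, preserving left/right membership of every formula occurrence (this is exactly the point made earlier that nesting a negative context inside $\struct$ does not change polarity). For the inductive step, one performs a case analysis on the last rule $\rho$ of $\pi$. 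The routine subcase is when $\rho$ acts in a part of the sequent disjoint from the structure being rewritten: then $\rho$ commutes with the transformation, we apply the induction hypothesis to the premise(s), and reapply $\rho$, keeping the height bound in (1) and (2). The interesting subcases are when the redex of the transformation overlaps the principal structure of $\rho$.

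For (1) and (2) (deep $s_L$, $s_R$), since $s_L$ and $s_R$ only merge two antecedents or two succedents that were separated by a $\struct$, the only genuinely delicate situation is when $\rho$ is a propagation rule or a logical rule whose principal context is precisely the $\struct$-structure being flattened, e.g. $\struct_{L2}$, $\struct_{R2}$, $\boxp_{L2}$, $\boxf_{L2}$, $\diaf_{R2}$, $\diap_{R2}$, or $\impl_L$/$\dimpl_R$ with the strict-context proviso. In each such case one checks that, after applying the height-preserving induction hypothesis to the premise, the same rule (or a simpler instance of it) still applies to yield the transformed conclusion; because $s_L$ and $s_R$ are themselves derivable in $\lbikt$ via display postulates and the calculus is ``closed under'' such rearrangements, no new structural machinery is required. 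The height does not increase because we merely reapply $\rho$ once.

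For (3) and (4) (deep $\struct_L$, $\struct_R$), the restriction to the empty or appropriately-signed context is essential: $\struct_L$ turns $X \struct Y, Z$ into $(X\struct Y)\struct Z$, which changes a negative occurrence's ambient context, and this is only sound when the target position is itself negative (or top-level), matching the polarity bookkeeping in the display lemmas. Here I would reduce, where possible, to the shallow display rules $rp_\struct^L$, $rp_\struct^R$: using Lemmas~\ref{lemma:display-property-negative} and \ref{lemma:display-property-positive}, one can in principle bring the relevant substructure to a strict position, apply the shallow residuation, and push back; but since we are working inside $\dbikt$ (which has \emph{no} residuation rules), we must instead argue directly by induction, using parts (1) and (2) already proved in the same induction to ``undo'' any $s_L$/$s_R$-shaped obstruction, and using admissibility of weakening (Lemma~\ref{lm:adm-weak}) and formula contraction (Lemma~\ref{lm:adm-contr-fml}) where a propagation rule has duplicated material. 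Note that (3) and (4) do not claim a height bound, which gives us the slack to invoke the induction hypothesis on a premise after first applying (1) or (2) to it.

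The main obstacle I anticipate is the overlap cases for (3) and (4) where the last rule of $\pi$ is one of the propagation rules $\struct_{L2}$, $\struct_{R2}$, $\boxp_{L2}$, $\boxf_{L2}$, $\diaf_{R2}$, $\diap_{R2}$ acting on the very $\struct$-node that $\struct_L$/$\struct_R$ wants to create or dissolve: there the principal formula's structural proxy ``straddles'' both sides of a nested sequent, so flattening or un-flattening the $\struct$ can change which propagation instances are available. Handling these will require carefully matching the premise of the propagation rule against the transformed target, possibly after an auxiliary application of deep $s_L$/$s_R$ and weakening, and checking that the polarity proviso on $\struct_L$/$\struct_R$ guarantees the propagation rule's own context-polarity side condition is still met. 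I expect this bookkeeping — rather than any single clever construction — to be the bulk of the work.
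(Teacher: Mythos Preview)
Your overall strategy---induction on $|\pi|$, case analysis on the last rule, with the propagation rules supplying the non-routine cases---matches the paper. Two points of divergence are worth flagging.

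First, the paper does \emph{not} use a simultaneous induction. Each of the four items is proved by its own induction on $|\pi|$: items (1)--(2) need only height-preserving formula contraction (Lemma~\ref{lm:adm-contr-fml}) in the boundary cases, and items (3)--(4) need only height-preserving weakening (Lemma~\ref{lm:adm-weak}). There is no cross-dependency among the four items, so your claim that ``the residuation moves interact'' and force a simultaneous induction is not borne out.

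Second, and more importantly, your plan for (3) and (4)---apply (1) or (2) to the premise and then invoke the induction hypothesis---is not the paper's technique, and in the critical cases it is not clear it applies at all. Take the representative case for (4) where $\pi$ ends with $\boxp_{L2}$: the premise is $X',\boxp A, Y \struct Z_1, \circblack{A \struct Z_2}$, which exhibits no $s_L$/$s_R$-pattern to flatten. The paper's move is instead: apply the induction hypothesis for (4) to the premise to get $X',\boxp A \struct (Y \struct Z_1,\circblack{A \struct Z_2})$; then \emph{weaken} a fresh copy of $\boxp A$ into the inner antecedent; apply $\boxp_{L2}$ \emph{inside} to collapse $\circblack{A \struct Z_2}$ to $\circblackshort{Z_2}$; and finally use the propagation rule $\struct_{L2}$ to discharge the weakened-in copy. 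This ``weaken a duplicate in, re-fire the propagation rule one level deeper, clean up with $\struct_{L1}/\struct_{L2}$'' pattern is the actual engine for (3) and (4), and is the reason those items drop the height bound. Your proposal does mention weakening and contraction in passing, but it does not identify this mechanism; without it the overlap cases you correctly anticipate as ``the bulk of the work'' will not close.
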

\begin{proof}
We prove item (\ref{itm:adm-sl}) and item (\ref{itm:adm-structr});
the other two items can be proved symmetrically. Both are proved by
induction on the height of $\pi.$

\noindent (\ref{itm:adm-sl}): 
The only interesting cases to consider are ones in which $\pi$ ends 
with a propagation rule that moves a formula into or out of 
the structure $(X \struct Y)$, or, one which moves a formula
across from $X$ to $Y$ or vice versa. 
We show here one case for each of these movements, involving the 
propagation rule $\struct_{L1}$ and $\struct_{L2}$. For simplicity, we omit the
context $\Sigma[~]$ as it is not affected by the propagation rule.

Suppose $(X \struct Y) = (A, X' \struct Y)$ and $\pi$ is
as given below left. Then $\pi$ is transformed to 
$\pi'$ (below right). Note that since formula contraction
is height-preserving admissible, we have that 
$|\pi'| \leq |\pi_1'| \leq |\pi_1| < |\pi|.$
$$
\AxiomC{$\pi_1$}
\noLine \UnaryInfC{$(A, X' \struct Y), A, Z \struct W$}
\RightLabel{$\struct_{L1}$} 
\UnaryInfC{$(A, X' \struct Y), Z \struct W$}
\DisplayProof
\qquad
\leadsto
\qquad
\AxiomC{$\pi_1'$}
\noLine \UnaryInfC{$A, X', A, Z \struct Y, W$}
\RightLabel{Lm.~\ref{lm:adm-contr-fml}}
\dashedLine
\UnaryInfC{$A, X', Z \struct Y, W$}
\DisplayProof
$$
Suppose $(X \struct Y) = (X', A \struct (Y_1 \struct Y_2), Y_3)$
and $\pi$ is as given below left. Then $\pi'$ is given below right. 
$$
\AxiomC{$\pi_1$}
\noLine \UnaryInfC{$(X', A \struct (A, Y_1 \struct Y_2),Y_3), Z \struct W$}
\RightLabel{$\struct_{L2}$} \UnaryInfC{$(X', A \struct (Y_1 \struct Y_2), Y_3), Z \struct W$}
\DisplayProof
\qquad
\leadsto
\qquad
\AxiomC{$\pi_1'$}
\noLine \UnaryInfC{$X', A, Z \struct (A, Z, Y_1 \struct Y_2), W$}
\RightLabel{$\struct_{L2}$} \UnaryInfC{$X', A, Z \struct (Y_1 \struct Y_2), Y_3, W$}
\DisplayProof
$$

\noindent (\ref{itm:adm-structr}):  
The only non-trivial cases are when $\pi$ ends with a propagation
rule that moves a formula across the context $\Sigma[~]$ and 
the structure $(X,Y \struct Z)$; or across $(X,Y)$ and $Z.$
We show here one interesting proof transformation involving the latter:
Suppose $(X,Y\struct Z) = (X', \boxp A, Y \struct Z_1, \bullet Z_2)$
and $\pi$ is as given below left. Then $\pi'$ is as given below right.
$$
\AxiomC{$\pi_1$}
\noLine \UnaryInfC{$X', \boxp A, Y \struct Z_1, \circblack{A \struct Z_2}$}
\RightLabel{$\boxp_{L2}$} \UnaryInfC{$X', \boxp A, Y \struct Z_1, \circblackshort{Z_2}$}
\DisplayProof
\ \ 
\leadsto
\ \ 
\AxiomC{$\pi_1'$} 
\noLine \UnaryInfC{$X', \boxp A \struct (Y \struct Z_1, \circblack{A \struct Z_2})$}
\dashedLine \RightLabel{Lm.~\ref{lm:adm-weak}} 
\UnaryInfC{$X', \boxp A \struct (\boxp A, Y \struct Z_1, \circblack{A \struct Z_2})$}
\RightLabel{$\boxp_{L2}$} \UnaryInfC{$X', \boxp A \struct (\boxp A, Y \struct Z_1, \circblackshort{Z_2})$}
\RightLabel{$\struct_{L2}$} \UnaryInfC{$X', \boxp A \struct (Y \struct Z_1, \circblackshort{Z_2})$}
\DisplayProof
$$
\end{proof}

We now show that the residuation rules of $\lbikt$ for $\circ$- and
$\bullet$-structures are admissible in $\dbikt$, i.e., they can be
simulated by the propagation rules of $\dbikt$. 
First we prove a more general admissibility of residuation, which is needed for 
the induction hypothesis of the specific cases. 

\begin{lemma}\label{lm:adm-rpbulletright}
If $\derivedeep{\pi}{X \struct Y, \circblackshort{Z}}$ then 
$\derivedeep{\pi'}{\circwhite{X \struct Y} \struct Z}$.
\end{lemma}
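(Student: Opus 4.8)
The plan is to prove the statement by induction on the height $|\pi|$, with a case analysis on the last rule $\rho$ of $\pi$ and on where the redex of $\rho$ sits in the endsequent $X \struct Y, \circblackshort{Z}$ relative to the three displayed pieces: the antecedent structure $X$, the succedent structure $Y$, and the $\bullet$-node $\circblackshort{Z}$. Throughout, the derivation of the premise of $\rho$ is shorter, so the induction hypothesis is available for it.

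The easy cases are when $\pi$ is a single axiom ($id$, $\bot_L$, $\top_R$) or the redex of $\rho$ lies entirely inside one of $X$, $Y$ or $Z$. The point is that $X$, $Y$ and $Z$ all reappear unchanged as substructures of the target $\circwhite{X \struct Y} \struct Z$, and, using the remark from Section~\ref{sec:nested} that further nesting of a structure within $\struct$ does not change its polarity, a position that is negative (resp.\ positive, neutral) inside $X$ (resp.\ inside $Y$, inside $Z$) keeps that polarity in $\circwhite{X \struct Y} \struct Z$ — inside $X$ it stays negative, inside $Y$ and inside $Z$ it stays positive. Hence every side condition of $\rho$ (a propagation rule firing in a positive or negative context; $\impl_L$ or $\dimpl_R$ requiring a strict context) is still met at the relocated redex, so I would apply the induction hypothesis to the premise(s) of $\rho$ and reapply $\rho$, and re-instantiate axioms directly.

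Since $\circblackshort{Z}$ sits in a positive position, the only rules that genuinely interact with it are $\diaf_{R1}$ (with $Z$ of the form $\diaf A, X'$ and $\circblackshort{Z}$ its principal $\bullet$-structure) and $\boxp_{L2}$ (with $X$ of the form $\boxp A, W$); in both the premise is again of the shape $X \struct Y', \circblackshort{Z'}$, so the induction hypothesis still applies. For $\boxp_{L2}$ the premise is $X \struct Y, \circblack{A \struct Z}$ and the hypothesis yields $\circwhite{X \struct Y} \struct (A \struct Z)$; I would then use deep admissibility of the structural rules (Lemma~\ref{lm:deep-adm}, specifically $s_R$) to get $\circwhite{X \struct Y}, A \struct Z$, and discharge the extra antecedent $A$ against the $\boxp A$ that lives inside $X$ via $\boxp_{L1}$, together with the height-preserving weakening and formula contraction of Lemmas~\ref{lm:adm-weak} and~\ref{lm:adm-contr-fml}; the $\diaf_{R1}$ case is dual, using $\diaf_{R2}$ in place of $\boxp_{L1}$.

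The hard part will be exactly these two boundary cases: after the induction hypothesis the residual copy of $A$ (resp.\ of $\boxp A$, $\diaf A$) ends up one level too deep, buried in the antecedent (resp.\ succedent) of the internal nested sequent $X \struct Y$ inside $\circwhite{\cdot}$ rather than directly as a sibling within the $\circ$, so $\boxp_{L1}$ (resp.\ $\diaf_{R2}$) does not fire as stated; re-exposing it requires permuting it outward with the deep structural admissibilities of Lemma~\ref{lm:deep-adm}, and orchestrating this permutation sequence while keeping the height under control is the delicate point. I expect the cleanest way to make the induction close is to prove, simultaneously with this lemma, the companion deep-admissibility statements for generalised forms of $\boxp_{L1}$ and $\diaf_{R2}$ in which the propagated formula need not be the whole side of the relevant internal sequent — these are precisely what the boundary cases need, and they are proved by the same induction.
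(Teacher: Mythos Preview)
Your plan is essentially the paper's own argument: induction on $|\pi|$, the redex-inside-$X$/$Y$/$Z$ cases are routine by polarity preservation, and the only genuine boundary cases are $\boxp_{L2}$ (pushing $A$ into $\circblack{Z}$ from the $\boxp A$ in $X$) and $\diaf_{R1}$ (pulling $A$ out of $\circblack{\diaf A, Z'}$ into $Y$).  You have correctly diagnosed that after the induction hypothesis and the use of Lemma~\ref{lm:deep-adm}(\ref{itm:adm-sr}) the residual $A$ and the companion $\boxp A$/$\diaf A$ are separated by an extra level of $\struct$, so that $\boxp_{L1}$ respectively $\diaf_{R2}$ does not fire on the nose.

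Where you overcomplicate matters is in the cure.  No \emph{new} generalised forms of $\boxp_{L1}$ or $\diaf_{R2}$, and no simultaneous induction, are needed; everything already proved suffices.  For the $\boxp_{L2}$ case the paper simply weakens in a fresh copy of $\boxp A$ as a comma-sibling of the nested sequent inside $\circ$ (Lemma~\ref{lm:adm-weak}), applies $\boxp_{L1}$, and then uses the ordinary propagation rule $\struct_{L1}$ of $\dbikt$ to re-absorb that extra copy of $\boxp A$ back into the antecedent of the inner $\struct$:
\[
\infer[\struct_{L1}]
{\circwhite{X', \boxp A \struct Y} \struct Z}
{\infer[\boxp_{L1}]
 {\circwhite{\boxp A, (X', \boxp A \struct Y)} \struct Z}
 {\infer[\mathrm{wk}]
  {A, \circwhite{\boxp A, (X', \boxp A \struct Y)} \struct Z}
  {A, \circwhite{X', \boxp A \struct Y} \struct Z}}}
\]
Note that formula contraction (Lemma~\ref{lm:adm-contr-fml}) is \emph{not} what is used here, since the two copies of $\boxp A$ are not siblings; the point is that $\struct_{L1}$ already does the absorption.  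Dually, for $\diaf_{R1}$ one uses Lemma~\ref{lm:deep-adm}(\ref{itm:adm-structl}) to reshape $\circwhite{X \struct Y, A}$ into $\circwhite{(X \struct Y)\struct A}$, after which $\diaf_{R2}$ fires directly.  Finally, the worry about ``keeping the height under control'' is moot: this lemma does not assert height preservation, so you are free to spend extra steps on weakening and the propagation rules.
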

\begin{proof}
The proof is by induction on $|\pi|.$
If $\pi$ ends with a deep inference rule acting on a substructure
inside $X$, $Y$ or $Z$, then $\pi'$ can be constructed straightforwardly
from the induction hypothesis. We look at the more interesting cases
where $\pi$ ends with a propagation rule that moves a formula
across $X$, $Y$ or $Z.$
We give here a representative case; it is not difficult to 
prove the other cases. 
Suppose $X = (X', \boxp A)$ and $\pi$ is the derivation below left.
Then $\pi'$ is constructed as shown below right, where $\pi_1'$
is obtained by applying the induction hypothesis to $\pi_1.$
$$
\AxiomC{$\pi_1$}
\noLine \UnaryInfC{$X', \boxp A  \struct Y, \circblack{A \struct Z}$}
\RightLabel{$\boxp_{L2}$} \UnaryInfC{$X', \boxp A \struct Y, \circblack{Z}$}
\DisplayProof
\qquad
\leadsto
\qquad
\AxiomC{$\pi_1'$}
\noLine \UnaryInfC{$\circwhite{X', \boxp A \struct Y} \struct (A \struct Z)$}
\dashedLine\RightLabel{Lm~\ref{lm:deep-adm}(\ref{itm:adm-sr})} \UnaryInfC{$A, \circwhite{X', \boxp A \struct Y} \struct Z$}
\dashedLine \RightLabel{Lm~\ref{lm:adm-weak}} \UnaryInfC{$A, \circwhite{\boxp A, (X', \boxp A \struct Y)} \struct Z$}
\RightLabel{$\boxp_{L1}$} \UnaryInfC{$\circwhite{\boxp A, (X', \boxp A \struct Y)} \struct Z$}
\RightLabel{$\struct_{L1}$} \UnaryInfC{$\circwhite{X', \boxp A \struct Y} \struct Z$}
\DisplayProof
$$
\end{proof}

\begin{lemma}\label{lm:adm-rpcircleft}
If $\derivedeep{\pi}{\circwhiteshort{X}, Y \struct Z}$ then $\derivedeep{\pi'}{X \struct \circblack{Y \struct Z}}$.
\end{lemma}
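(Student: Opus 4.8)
The plan is to prove this by induction on $|\pi|$, following the proof of Lemma~\ref{lm:adm-rpbulletright}, of which this statement is the mirror image: it interchanges the two sides of the sequent and the roles of $\circwhiteshort{\cdot}$ and $\circblackshort{\cdot}$. The reorganisation at issue unpacks the $\circ$-structure wrapping $X$ and re-nests the remaining antecedent $Y$ together with the succedent $Z$ inside a $\bullet$-structure on the right. As with the other admissibility lemmas in this section, $\pi_1'$ will denote the derivation obtained by applying the induction hypothesis to the premise derivation $\pi_1$ of the last rule of $\pi$.

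First I would dispatch the routine cases. If $\pi$ is an axiom ($id$, $\bot_L$, $\top_R$), the principal material lies inside $X$, $Y$ or $Z$ and is preserved by the reorganisation, so $X \struct \circblack{Y \struct Z}$ is again an instance of the same axiom. If the last rule of $\pi$ acts on a substructure strictly inside $X$, $Y$ or $Z$, or on the top-level structure in a way that touches neither the $\circ$-structure being unpacked nor the left/right split across the top-level $\struct$, then I would apply the induction hypothesis to the premise(s) and re-apply the same rule. The only point needing care here is that the reorganisation changes neither the polarity of any position inside $X$, $Y$ or $Z$ (it merely re-parenthesises and transports $Y$ across, while a position's polarity is determined by its nearest enclosing $\struct$) nor the strictness of any context; hence the side conditions on $\impl_L$, $\dimpl_R$ and the polarity-indexed propagation rules remain met.

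The genuinely interesting cases are the propagation rules that move a formula across the $\circ$-boundary or between $Y$ and the $\circ$-structure, namely $\boxp_{L1}$ (copying a formula out of $\circwhiteshort{X}$) and $\diaf_{R2}$ (copying a formula of $Z$ into $\circwhiteshort{X}$); these mirror the $\boxp_{L2}$ case displayed in the proof of Lemma~\ref{lm:adm-rpbulletright}. As the representative case, take $\pi$ ending with $\diaf_{R2}$, so that $Z = Z_0, \diaf A$ and $\pi_1$ derives $\circwhite{X \struct A}, Y \struct Z_0, \diaf A$. Applying the induction hypothesis to $\pi_1$, viewing $X \struct A$ as the contents of the $\circ$-structure, gives a $\dbikt$-derivation of $(X \struct A) \struct \circblack{Y \struct Z_0, \diaf A}$. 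Deep admissibility of $s_L$, Lemma~\ref{lm:deep-adm}(\ref{itm:adm-sl}), then yields $X \struct A, \circblack{Y \struct Z_0, \diaf A}$; height-preserving weakening, Lemma~\ref{lm:adm-weak}, inserts a copy of $\diaf A$ at the top level of the $\bullet$-structure; $\diaf_{R1}$ deletes the copy of $A$ sitting outside the $\bullet$-structure; and finally $\struct_{R1}$ deletes the now-redundant $\diaf A$ at the top level of the $\bullet$-structure, leaving exactly $X \struct \circblack{Y \struct Z_0, \diaf A} = X \struct \circblack{Y \struct Z}$. The $\boxp_{L1}$ case is handled dually, pushing the propagated formula back in with weakening, Lemma~\ref{lm:deep-adm}(\ref{itm:adm-structr}), $\boxp_{L2}$ and $\struct$-residuation; the remaining propagation rules (e.g.\ those working entirely within $Z$ and a $\bullet$-structure it contains) reduce to the routine cases once polarity preservation is noted, using Lemma~\ref{lm:adm-contr-fml} where a propagation rule duplicates a formula.

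The step I expect to be the main obstacle is exactly this last family: tracking which copies of the propagated formula remain available once the $\circ$-structure has been unpacked and $Y$ and $Z$ re-nested under $\circblackshort{\cdot}$, and verifying that the deep residuation admissibilities of Lemma~\ref{lm:deep-adm} together with height-preserving weakening and formula contraction suffice to reconstruct precisely the target sequent rather than a strictly weaker one. This interplay is the reason Lemma~\ref{lm:deep-adm} and the weakening/contraction lemmas must be in place before this proof can be carried out.
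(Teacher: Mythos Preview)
Your proposal is correct and follows essentially the same inductive strategy as the paper, which displays the $\boxp_{L1}$ case as its representative (handled there via the induction hypothesis, Lemma~\ref{lm:deep-adm}(\ref{itm:adm-structr}), and $\boxp_{L2}$ alone---weakening and ``$\struct$-residuation'' are not needed for that case) rather than your $\diaf_{R2}$. Your detailed treatment of $\diaf_{R2}$ via $s_L$-admissibility, weakening, $\diaf_{R1}$ and $\struct_{R1}$ is correct and is indeed the natural construction for that case.
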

\begin{proof}
The proof is by induction on $|\pi|.$
As with the proof of Lemma~\ref{lm:adm-rpbulletright},
the only intersting cases are when $\pi$ ends with a propagation rule
that moves a formula across $X$, $Y$ or $Z.$
We show here one case; the others can be proved similarly.
Suppose $X = (X', \boxp A)$ and $\pi$ is as given below left.
Then $\pi'$ is as given below right, where $\pi_1'$ is obtained
by applying the induction hypothesis to $\pi_1.$
$$
\AxiomC{$\pi_1$}
\noLine \UnaryInfC{$A, \circwhite{X', \boxp A} , Y \struct Z$}
\RightLabel{$\boxp_{L1}$} \UnaryInfC{$\circwhite{X', \boxp A}, Y \struct Z$}
\DisplayProof
\qquad
\leadsto
\qquad
\AxiomC{$\pi_1'$}
\noLine
\UnaryInfC{$X', \boxp A \struct \circblack{A, Y \struct Z)}$}
\dashedLine \RightLabel{Lm~\ref{lm:deep-adm}.\ref{itm:adm-structr}} 
\UnaryInfC{$X', \boxp A \struct \circblack{A \struct (Y \struct Z)}$}
\RightLabel{$\boxp_{L2}$}  \UnaryInfC{$X', \boxp A \struct \circblack{Y \struct Z}$}
\DisplayProof
$$
\end{proof}
The following two lemmas are symmetric to Lemma~\ref{lm:adm-rpbulletright} 
and Lemma~\ref{lm:adm-rpcircleft} and can be proved similarly. 
\begin{lemma}\label{lm:adm-rpcircright}
If $\derivedeep{\pi}{X \struct Y, \circwhiteshort{Z}}$ then $\derivedeep{\pi'}{\circblack{X \struct Y} \struct Z}$.
\end{lemma}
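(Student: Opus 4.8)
The plan is to prove Lemma~\ref{lm:adm-rpcircright} by induction on $|\pi|$, mirroring the proof of Lemma~\ref{lm:adm-rpbulletright} with the black and white structural connectives (and the modal pairs $\boxp,\diap$ versus $\boxf,\diaf$) interchanged throughout. The routine cases are those in which $\pi$ ends with a deep inference rule acting on a substructure lying strictly inside $X$, $Y$ or $Z$: such a rule commutes with the desired transformation, so $\pi'$ is built by applying the induction hypothesis to the premise(s) and re-applying that rule inside $\circblack{X \struct Y} \struct Z$.

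The interesting cases are those in which $\pi$ ends with a propagation rule that moves a formula into or out of the displayed $\circwhiteshort{Z}$, or across the partition of $X \struct Y$; these are the ``white'' rules $\boxf_{L1}$, $\boxf_{L2}$, $\diaf_{R1}$, $\diaf_{R2}$ together with $\struct_{L1}$, $\struct_{L2}$, $\struct_{R1}$ and $\struct_{R2}$. I would present the representative case, the white analogue of the $\boxp_{L2}$ case of Lemma~\ref{lm:adm-rpbulletright}: suppose $X = (X', \boxf A)$ and $\pi$ ends with
$$
\infer[\boxf_{L2}]{X', \boxf A \struct Y, \circwhiteshort{Z}}{X', \boxf A \struct Y, \circwhite{A \struct Z}}
$$
The premise again has the shape $\cdot \struct \cdot, \circwhiteshort{\cdot}$ (with $A \struct Z$ in the role of $Z$), so the induction hypothesis supplies a $\dbikt$-derivation of $\circblack{X', \boxf A \struct Y} \struct (A \struct Z)$. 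From it I would proceed as follows: (i) by deep admissibility of $s_R$ (Lemma~\ref{lm:deep-adm}(\ref{itm:adm-sr})) reach $A, \circblack{X', \boxf A \struct Y} \struct Z$; (ii) by general weakening (Lemma~\ref{lm:adm-weak}) reinstate a copy of $\boxf A$ as a sibling of $X', \boxf A \struct Y$ under $\bullet$, reaching $A, \circblack{\boxf A, (X', \boxf A \struct Y)} \struct Z$; (iii) apply $\boxf_{L1}$ in the negative context $[~] \struct Z$ to erase the top-level $A$; and (iv) apply $\struct_{L1}$ to erase the weakened-in copy of $\boxf A$, landing on $\circblack{X', \boxf A \struct Y} \struct Z$, i.e.\ $\circblack{X \struct Y} \struct Z$. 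The remaining propagation cases ($\diaf_{R2}$ pulling a formula out of $\circwhiteshort{Z}$, and $\struct_{L1/L2}$, $\struct_{R1/R2}$ redistributing a formula within $X \struct Y$) follow the same recipe and are routine.

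I expect the main obstacle to be precisely the structural bookkeeping in that representative case: the induction hypothesis returns the $\bullet$-structure wrapped around the nested sequent $X', \boxf A \struct Y$, whereas the propagation rule $\boxf_{L1}$ needed to absorb the surplus copy of $A$ fires only when $\boxf A$ stands directly beside $A$ under $\bullet$. This is exactly why the auxiliary deep-admissibility lemmas for $s_R$ and $\struct_R$ (Lemma~\ref{lm:deep-adm}) had to be established first, and it is where the polarity side-conditions of the propagation rules must be verified; here they are met because every hole in question sits to the left of the top-level $\struct$ and is therefore negative. As with Lemma~\ref{lm:adm-rpbulletright}, no height bound is claimed, so the only care needed with derivation heights is that the induction hypothesis is invoked on strictly shorter subderivations.
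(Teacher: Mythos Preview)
Your proposal is correct and follows exactly the approach the paper intends: the paper states that Lemma~\ref{lm:adm-rpcircright} is symmetric to Lemma~\ref{lm:adm-rpbulletright} and ``can be proved similarly,'' and you have carried out precisely that symmetric argument, with the representative $\boxf_{L2}$ case worked out in full detail matching the paper's $\boxp_{L2}$ case step for step. One minor remark: the propagation rules that actually interact with the displayed $\circwhiteshort{Z}$ in positive position are $\boxf_{L2}$ and $\diap_{R1}$ (not $\diaf_{R1}/\diaf_{R2}$, which act on $\circwhiteshort$ in negative position), but this does not affect the correctness of your argument or the representative case you chose.
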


\begin{lemma}\label{lm:adm-rpbulletleft}
If $\derivedeep{\pi}{\circblackshort{X}, Y \struct Z}$ then $\derivedeep{\pi'}{X \struct \circwhite {Y \struct Z}}$.
\end{lemma}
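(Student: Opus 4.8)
The plan is to establish Lemma~\ref{lm:adm-rpbulletleft} by induction on $|\pi|$, mirroring the argument already given for Lemma~\ref{lm:adm-rpcircleft}. The cleanest way to see that this works is to observe that $\dbikt$ is invariant under the colour-swapping renaming that simultaneously interchanges $\circ$ with $\bullet$, $\boxf$ with $\boxp$, and $\diaf$ with $\diap$; at the level of rules this swaps $\boxf_{L1}\leftrightarrow\boxp_{L1}$, $\diaf_{R1}\leftrightarrow\diap_{R1}$, $\boxf_{L2}\leftrightarrow\boxp_{L2}$, $\diaf_{R2}\leftrightarrow\diap_{R2}$, $\diaf_L\leftrightarrow\diap_L$, $\boxf_R\leftrightarrow\boxp_R$, and fixes every rule mentioning only $\struct$ and the additive connectives, so it maps the rule set of $\dbikt$ bijectively onto itself. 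Since Lemma~\ref{lm:adm-weak}, Lemma~\ref{lm:adm-contr-fml} and Lemma~\ref{lm:deep-adm}(\ref{itm:adm-structr}) are themselves fixed by this renaming, every derivation transformation used in the proof of Lemma~\ref{lm:adm-rpcircleft} transports unchanged to the present statement.

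Carrying this out directly: if the last rule of $\pi$ acts entirely inside $X$, $Y$, or $Z$, then $\pi'$ is obtained from the induction hypothesis by re-applying that rule, exactly as in Lemma~\ref{lm:adm-rpcircleft}. The interesting cases are those in which the last rule is a propagation rule straddling the boundary of $\circblackshort{X}$ or moving a formula between $Y$ and $Z$. The representative case is $X = (X', \boxf A)$ with $\pi$ ending in $\boxf_{L1}$, so that $\pi_1$ derives $A, \circblack{X', \boxf A}, Y \struct Z$. Reading this as $\circblack{X', \boxf A}, (A, Y) \struct Z$ and applying the induction hypothesis yields $\pi_1'$ deriving $X', \boxf A \struct \circwhite{A, Y \struct Z}$; a deep application of Lemma~\ref{lm:deep-adm}(\ref{itm:adm-structr}) turns this into $X', \boxf A \struct \circwhite{A \struct (Y \struct Z)}$, and a single $\boxf_{L2}$ then gives the required $X', \boxf A \struct \circwhite{Y \struct Z}$. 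This is literally the mirror image of the $\boxp_{L1}$ and $\boxp_{L2}$ step used for Lemma~\ref{lm:adm-rpcircleft}, and the remaining propagation cases (for instance the one for $\diap_{R2}$, and those in which the moved formula crosses between $Y$ and $Z$) are dispatched in the same mechanical way.

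I expect the main obstacle to be not any individual case but the bookkeeping shared by all of them: before the induction hypothesis can be applied the sequent must be put literally into the shape $\circblackshort{X}, Y \struct Z$, which forces a freshly propagated formula to be absorbed into the appropriate component (above, into $Y$), and afterwards the conclusion of the induction hypothesis has the ``wrong'' internal grouping under the $\circ$ and must be re-bracketed with the deep $\struct$-admissibility lemmas before the closing propagation rule can fire. One should also verify, each time Lemma~\ref{lm:deep-adm}(\ref{itm:adm-structr}) is invoked, that the context in which it is used is empty or positive as its statement requires; in every case the newly introduced $\struct$ sits immediately under a $\circ$ in the succedent, which is a positive position, so the proviso is met. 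No height bound is claimed here (as for Lemma~\ref{lm:adm-rpcircleft}), and the induction on $|\pi|$ only uses that the subderivations fed to the induction hypothesis are strictly shorter, so nothing further is needed.
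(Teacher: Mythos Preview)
Your proposal is correct and takes essentially the same approach as the paper: the paper simply notes that Lemma~\ref{lm:adm-rpbulletleft} is symmetric to Lemma~\ref{lm:adm-rpcircleft} and can be proved similarly, and you have made that symmetry explicit via the colour-swapping involution on $\dbikt$ and carried out the representative case in detail. Your worked example (the $\boxf_{L1}$/$\boxf_{L2}$ case) is precisely the image of the paper's $\boxp_{L1}$/$\boxp_{L2}$ case for Lemma~\ref{lm:adm-rpcircleft} under that involution, and your check of the positivity proviso for Lemma~\ref{lm:deep-adm}(\ref{itm:adm-structr}) is appropriate.
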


We are now ready to prove the main lemma about admissibility of residuation rules. 
\begin{lemma}
[Admissibility of residuation]
\label{lm:adm-rp}
The following statements hold in $\dbikt$:
\begin{enumerate}

\item 
{\em Admissibility of $rp_\bullet$.}
\label{itm:adm-rpbulletrightspec}
If $\derivedeep{\pi}{X \struct \circblackshort{Z}}$ then $\derivedeep{\pi'}{\circwhiteshort{X} \struct Z}$.

\item \label{itm:adm-rpcircleftspec}
{\em Admissibility of $rp_\bullet$.} 
If $\derivedeep{\pi}{\circwhiteshort{X} \struct Z}$ then 
$\derivedeep{\pi'}{X \struct \circblackshort{Z}}$.

\item\label{itm:adm-rpcircrightspec}
{\em Admissibility of $rp_\circ$.}
If $\derivedeep{\pi}{X \struct \circwhiteshort{Z}}$ then $\derivedeep{\pi'}{\circblackshort{X} \struct Z}$.

\item \label{itm:adm-rpbulletleftspec}
{\em Admissibility of $rp_\circ$.}
If $\derivedeep{\pi}{\circblackshort{X} \struct Z}$ then $\derivedeep{\pi'}{X \struct \circwhiteshort{Z}}$.

\end{enumerate}
\end{lemma}
\begin{proof}
This is straightforward given Lemma~\ref{lm:adm-rpbulletright} -- \ref{lm:adm-rpbulletleft}. 
We outline the proofs for item \ref{itm:adm-rpbulletrightspec} and \ref{itm:adm-rpcircleftspec}; 
the rest can be proved symmetrically. 

\noindent (\ref{itm:adm-rpbulletrightspec}): 
The only interesting cases are when $\pi$ ends with a propagation rule
that moves a formula into or out of $\bullet Z.$
\begin{itemize}
\item Suppose $Z = (Z', \diaf A)$ and $\pi$ is as given below left. 
Lemma~\ref{lm:adm-rpbulletright} gives $\pi_2$ from $\pi_1$, and
$\pi'$ is as below right:
$$
\AxiomC{$\pi_1$}
\noLine \UnaryInfC{$X \struct A, \circblack{Z, \diaf A}$}
\RightLabel{$\diaf_{R1}$} \UnaryInfC{$X \struct \circblack{Z, \diaf A}$}
\DisplayProof
\qquad
\leadsto
\qquad
\AxiomC{$\pi_2$}
\noLine \UnaryInfC{$\circwhite{X \struct A} \struct Z, \diaf A$}
\RightLabel{$\diaf_{R2}$}  \UnaryInfC{$\circwhite{X} \struct Z, \diaf A$}
\DisplayProof
$$

\item Suppose $X = (X', \boxp A)$ and $\pi$ is as given below left. 
Then $\pi'$ is shown below right, where $\pi_1'$ is the result of
applying the induction hypothesis to $\pi_1.$
$$
\AxiomC{$\pi_1$}
\noLine \UnaryInfC{$X', \boxp A  \struct \circblack{A \struct Z}$}
\RightLabel{$\boxp_{L2}$} \UnaryInfC{$X', \boxp A \struct \circblack{Z}$}
\DisplayProof
\qquad
\leadsto
\qquad
\AxiomC{$\pi_1'$}
\noLine \UnaryInfC{$\circwhite{X', \boxp A} \struct (A \struct Z)$}
\dashedLine \RightLabel{Lemma~\ref{lm:deep-adm}(\ref{itm:adm-sr})} \UnaryInfC{$A, \circwhite{X', \boxp A} \struct Z$}
\RightLabel{$\boxp_{L1}$} \UnaryInfC{$\circwhite{X', \boxp A} \struct Z$}
\DisplayProof
$$
\end{itemize}

\noindent (\ref{itm:adm-rpcircleftspec}): 
The only interesting cases are when $\pi$ ends with a propagation rule
that moves a formula into or out of $\circ X.$
\begin{itemize}
\item Suppose $X = (X', \boxp A)$ and $\pi$ is as given below left. 
Lemma~\ref{lm:adm-rpcircleft} gives $\pi_2$ from $\pi_1$, and
$\pi'$ is as below right:
$$
\AxiomC{$\pi_1$}
\noLine \UnaryInfC{$\circ(X',\boxp A), A \struct Z$}
\RightLabel{$\boxp_{L1}$} \UnaryInfC{$\circ(X', \boxp A) \struct Z$}
\DisplayProof
\qquad
\leadsto
\qquad
\AxiomC{$\pi_2$}
\noLine \UnaryInfC{$X', \boxp A \struct \bullet (A \struct Z)$}
\RightLabel{$\boxp_{L2}$}  \UnaryInfC{$X', \boxp A \struct \bullet Z$}
\DisplayProof
$$

\item Suppose $Z = (Z', \diaf A)$ and $\pi$ is as given below left. 
Then $\pi'$ is shown below right, where $\pi_1'$ is the result of
applying the induction hypothesis to $\pi_1.$
$$
\AxiomC{$\pi_1$}
\noLine \UnaryInfC{$\circ (X \struct A)  \struct Z', \diaf A$}
\RightLabel{$\diaf_{R2}$} \UnaryInfC{$\circ X \struct Z', \diaf A$}
\DisplayProof
\qquad
\leadsto
\qquad
\AxiomC{$\pi_1'$}
\noLine 
\UnaryInfC{$(X \struct A) \struct \bullet(Z', \diaf A)$}
\dashedLine \RightLabel{Lemma~\ref{lm:deep-adm}(\ref{itm:adm-sl})} 
\UnaryInfC{$X \struct A, \bullet(Z', \diaf A)$}
\RightLabel{$\diaf_{R1}$} 
\UnaryInfC{$X \struct \bullet (Z', \diaf A)$}
\DisplayProof
$$

\end{itemize}

\end{proof}

\paragraph{Admissibility of general contraction}

To prove the admissibility of structure contraction, we need
to prove several distribution properties among structural connectives. 
These are stated in the following four lemmas.

\begin{lemma}\label{lm:adm-contr-white}
If $\derivedeep{\pi}{\Sigma^+[\circwhite{X \struct Y}, \circwhiteshort{Y}]}$ 
and contraction on structures is admissible
for all derivations $\pi_1$ such that $|\pi_1| \leq |\pi|$ then
$\derivedeep{\pi'}{\Sigma^+[\circwhite{X \struct Y}]}$.
\end{lemma}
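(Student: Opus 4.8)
The plan is to prove this by induction on $|\pi|$, in the same style as the preceding admissibility lemmas: inspect the last rule $\rho$ of $\pi$, and in the routine cases obtain $\pi'$ by applying the induction hypothesis to the premise(s) of $\rho$ and re-applying $\rho$. This covers every case in which $\rho$ acts entirely inside the surrounding context $\Sigma^+[~]$, inside $X$, or inside the copy of $Y$ nested in $\circwhite{X \struct Y}$ without crossing the $\circwhiteshort{}$ boundary, since deleting the sibling $\circwhiteshort{Y}$ does not disturb such a rule; likewise when $\rho$ is a propagation rule that only rearranges material strictly inside $\circwhite{X \struct Y}$. The base cases $id$, $\bot_L$, $\top_R$ are immediate: if the witnessing (sub)formula of the axiom lies in the standalone $\circwhiteshort{Y}$, it occurs with the same polarity in the nested copy of $Y$ inside $\circwhite{X \struct Y}$, so $\Sigma^+[\circwhite{X \struct Y}]$ is again an instance of that axiom.

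The first genuinely non-routine family is when $\rho$ decomposes or rearranges a formula inside the \emph{standalone} $Y$ (e.g.\ $\lor_R$, $\land_R$, $\impl_R$, or a $\struct_{Ri}$ propagation rule acting within $\circwhiteshort{Y}$): the premise then has the form $\Sigma^+[\circwhite{X \struct Y}, \circwhiteshort{Y'}]$ with $Y'\neq Y$, so the two $\circwhiteshort{}$-blocks no longer match the shape demanded by the induction hypothesis. The remedy is to re-synchronise the copies using height-preserving weakening (Lemma~\ref{lm:adm-weak}) — and, where a propagation rule has instead duplicated a subformula, height-preserving formula contraction (Lemma~\ref{lm:adm-contr-fml}) — to bring the nested copy to $\circwhite{X \struct Y'}$, apply the induction hypothesis to $\Sigma^+[\circwhite{X \struct Y'}, \circwhiteshort{Y'}]$ (height $\le |\pi_1| < |\pi|$), and then re-apply $\rho$ to the surviving block. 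When $\rho$ is one of the propagation rules $\boxf_{L2}$ or $\diaf_{R2}$ acting on $\circwhite{X \struct Y}$ itself, the $\circwhiteshort{}$-block re-emerges after the induction hypothesis in a re-associated form such as $\circwhite{A \struct (X \struct Y)}$; here we interleave the deep admissibility of $s_R$ and $\struct_R$ (Lemma~\ref{lm:deep-adm}(\ref{itm:adm-sr}), (\ref{itm:adm-structr})), which are height-preserving and apply inside a positive context such as the interior of a $\circwhiteshort{}$ in a positive context, to re-nest the $\struct$'s around the use of the induction hypothesis; the residuation admissibility lemmas (Lemma~\ref{lm:adm-rp}) play the same role when $\rho$ moves a formula across a $\circwhiteshort{}$ boundary between the filler level and $\Sigma^+$.

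The step I expect to be the main obstacle is the case where $\rho$ is one of the top-level modal propagation rules $\boxp_{L1}$ or $\diap_{R1}$ whose principal $\circwhiteshort{}$-block is precisely the standalone $\circwhiteshort{Y}$: then $Y$ has the shape $(\boxp B, W)$ (resp.\ $(\diap B, W)$), a formula $B$ sits at the filler level, and $\rho$ absorbs $B$ into $\circwhiteshort{Y}$. Deleting $\circwhiteshort{Y}$ removes the target of the propagation, while in $\circwhite{X \struct Y}$ the witness $\boxp B$ (resp.\ $\diap B$) is buried one $\struct$-level deep, so neither $\boxp_{L1}$/$\diap_{R1}$ nor the residuation lemmas reabsorb $B$ directly; a naive use of the induction hypothesis leaves a sequent of the form $\Sigma^+[B, \circwhite{X \struct (\boxp B, W)}]$ from which the now-superfluous $B$ must still be removed — morally re-proving the Ewald-axiom-9 pattern ($B \impl \boxf \diap B$) deeply. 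This is exactly where the side condition that structure contraction is admissible for all derivations of height $\le |\pi|$ is used: $B$ is weakened into the interior of $\circwhite{X \struct Y}$, the copies are re-synchronised, and structure contraction (together, where needed, with the companion distribution lemmas for the $\circblackshort{}$ and negative cases) is invoked to merge them. In other words, this lemma, its three companions, and the structure-contraction lemma must be established by a single simultaneous induction on derivation height, which is why the contraction hypothesis appears as a proviso in the statement.
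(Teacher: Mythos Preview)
Your overall inductive scheme matches the paper's, and your treatment of $\boxf_{L2}$ acting on $\circwhite{X \struct Y}$ (use height-preserving $s_R$, then the induction hypothesis, then $\struct_R$ and re-apply $\boxf_{L2}$) is exactly the paper's first interesting case. The gap is in your diagnosis of where the structure-contraction hypothesis is actually needed.

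First, $\boxp_{L1}$ cannot have the standalone $\circwhiteshort{Y}$ as its principal block: that rule requires a negative context, whereas $\circwhiteshort{Y}$ sits in the positive context $\Sigma^+[\circwhite{X\struct Y},\,-\,]$. Second, the $\diap_{R1}$ case you flag as the ``main obstacle'' is in fact routine: its premise is $\Sigma^+[B,\circwhite{X\struct Y},\circwhiteshort{Y}]$, which still has the lemma's shape with the larger positive context $\Sigma^+[B,\,-\,]$, so the induction hypothesis yields $\Sigma^+[B,\circwhite{X\struct Y}]$; now weaken a copy of $\diap B$ to the top level inside $\circwhiteshort{}$, apply $\diap_{R1}$, and finish with $\struct_{R1}$ to re-absorb that $\diap B$ into the right of the nested $\struct$. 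No appeal to the side condition is required.

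The case you are missing --- and the one the paper singles out --- is $\boxf_{L2}$ acting on the \emph{standalone} $\circwhiteshort{Y}$. There the premise is $\Sigma^+[\circwhite{X\struct Y},\,\circwhite{A\struct Y}]$, and the two $\circwhiteshort{}$-blocks no longer match the lemma's pattern. Your ``first family'' remedy (weaken one copy to re-synchronise, then use the induction hypothesis) fails here: turning $\circwhite{X\struct Y}$ into $\circwhite{X'\struct(A\struct Y)}$ requires the admissibility of $\struct_R$ from Lemma~\ref{lm:deep-adm}(\ref{itm:adm-structr}), which is \emph{not} stated as height-preserving (only items~\ref{itm:adm-sl} and~\ref{itm:adm-sr} carry the bound $|\pi'|\le|\pi|$), so after using it you can no longer invoke the induction hypothesis. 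The paper's fix is different: weaken \emph{both} blocks to the common form $\circwhite{A,X\struct Y}$, obtaining two identical structures, and then invoke the structure-contraction hypothesis (valid for derivations of height $\le|\pi_1|<|\pi|$) to merge them; only afterwards is $\struct_R$ used, where height no longer matters. So the side condition enters precisely here, not in the $\diap_{R1}$/$\boxp_{L1}$ cases.
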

\begin{proof}
By induction on the height of $\pi$. The interesting cases are when
$\pi$ ends with a propagation rule that moves a formula into either
$\circwhite{X \struct Y}$ or $\circwhiteshort{Y}$:
\begin{itemize}
\item Suppose $\pi$ ends as below left. Then by
Lemma~\ref{lm:deep-adm}(\ref{itm:adm-sr}), there is a
derivation $\pi_2$ of $\boxf A \struct \circwhite{A, X \struct
Y}, \circwhiteshort{Y}$ such that $|\pi_2| \leq |\pi_1|$. Then
we can apply the induction hypothesis to $\pi_2$ to obtain a
derivation $\pi_3$ of $\boxf A \struct \circwhite{A, X \struct
Y}$. Then the derivation below right gives the required:
$$
\AxiomC{$\pi_1$}
\noLine \UnaryInfC{$\boxf A \struct \circwhite{A \struct (X \struct Y)}, \circwhiteshort{Y}$}
\RightLabel{$\boxf_{L2}$} \UnaryInfC{$\boxf A \struct \circwhite{X \struct Y}, \circwhiteshort{Y}$}
\DisplayProof
\qquad
\AxiomC{$\pi_3$}
\noLine \UnaryInfC{$\boxf A \struct \circwhite{A, X \struct Y}$}
\dashedLine \RightLabel{Lemma~\ref{lm:deep-adm}(\ref{itm:adm-structr})} 
\UnaryInfC{$\boxf A \struct \circwhite{A \struct (X \struct Y)}$}
\RightLabel{$\boxf_{L2}$} \UnaryInfC{$\boxf A \struct \circwhite{X \struct Y}$}
\DisplayProof
$$	
\item Suppose $\pi$ ends as below left. Then applying Lemma~\ref{lm:adm-weak} twice, 
we obtain a derivation $\pi_2$ of 
$\boxf A \struct \circwhite{A, X \struct Y}, \circwhite{A, X \struct Y}$ 
such that $|\pi_2| \leq |\pi_1|$. 
Then we apply the assumption of this lemma to $\pi_2$ to obtain a derivation 
$\pi_3$ of $\boxf A \struct \circwhite{A, X \struct Y}$. 
Then the derivation below right gives the required:
$$
\AxiomC{$\pi_1$}
\noLine \UnaryInfC{$\boxf A \struct \circwhite{X \struct Y}, \circwhite{A \struct Y}$}
\RightLabel{$\boxf_{L2}$} \UnaryInfC{$\boxf A \struct \circwhite{X \struct Y}, \circwhiteshort{Y}$}
\DisplayProof
\qquad
\AxiomC{$\pi_3$}
\noLine \UnaryInfC{$\boxf A \struct \circwhite{A, X \struct Y}$}
\dashedLine \RightLabel{Lemma~\ref{lm:deep-adm}(\ref{itm:adm-structr})} 
\UnaryInfC{$\boxf A \struct \circwhite{A \struct (X \struct Y)}$}
\RightLabel{$\boxf_{L2}$} \UnaryInfC{$\boxf A \struct \circwhite{X \struct Y}$}
\DisplayProof
$$	
\end{itemize}
\end{proof}

\begin{lemma}\label{lm:adm-contr-white-left}
If $\derivedeep{\pi}{\Sigma^-[\circwhite{X \struct Y}, \circwhiteshort{X}]}$  
then $\derivedeep{\pi'}{\Sigma^-[\circwhite{X \struct Y}]}$.
\end{lemma}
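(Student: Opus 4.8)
The plan is to prove this by induction on $|\pi|$, as the exact mirror of Lemma~\ref{lm:adm-contr-white} (the dual statement for positive contexts). There $\circwhiteshort{Y}$ is subsumed by $\circwhite{X \struct Y}$ on the succedent side because, reading off the formula translation, $\tau^+(Y)$ entails $\tau^-(X)\impl\tau^+(Y)$; here $\circwhiteshort{X}$ is subsumed by $\circwhite{X \struct Y}$ on the antecedent side because $\tau^-(X)\dimpl\tau^+(Y)$ entails $\tau^-(X)$ (monotonicity of $\tau^-(X)$). So all the structural massaging needed is the mirror image of what is done in Lemma~\ref{lm:adm-contr-white}, and the role of the residuation lemmas for $\struct$ is played by the $s_L$/$\struct_L$ clauses rather than the $s_R$/$\struct_R$ clauses.

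For the routine cases, if the last rule $\rho$ of $\pi$ acts entirely outside the two marked structures, or strictly inside $Y$, we invoke the induction hypothesis on the premise(s) and re-apply $\rho$. If $\rho$ acts on a substructure of the antecedent part $X$, which occurs in both $\circwhite{X \struct Y}$ and $\circwhiteshort{X}$, then the premise contains a modified $X'$ only inside $\circwhite{X'\struct Y}$, so we first use height-preserving weakening (Lemma~\ref{lm:adm-weak}) to turn $\circwhiteshort{X}$ into $\circwhiteshort{X'}$, apply the induction hypothesis, and re-apply $\rho$ (which, being a propagation rule, stays applicable once $\circwhiteshort{X}$ is deleted). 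The same pattern handles the cases where $\rho$ ($\struct_{L1}$, $\struct_{L2}$, $\boxp_{L1}$, \ldots) propagates a formula out of $\circwhiteshort{X}$ or out of the antecedent part of $\circwhite{X \struct Y}$ into $\Sigma^-[~]$: apply the induction hypothesis to the premise, then re-apply the rule using the copy of the relevant formula still residing inside $\circwhite{X \struct Y}$.

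The genuinely interesting case is when $\rho$ is a propagation rule that moves a formula into one of the marked structures across its $\struct$ boundary; the representative rule is $\diaf_{R2}$, dual to the $\boxf_{L2}$ case treated in Lemma~\ref{lm:adm-contr-white}. Here $\pi$ ends by turning $\circwhiteshort{X}$ into $\circwhite{X \struct A}$, with a matching $\diaf A$ elsewhere in the sequent. The plan is to use Lemma~\ref{lm:adm-weak} to enlarge the succedent part of $\circwhite{X \struct Y}$ in the premise to match $\circwhite{X \struct A}$, then use Lemma~\ref{lm:deep-adm}(\ref{itm:adm-structl}) and~(\ref{itm:adm-structr}) to bring the resulting $\struct$-structures into the shape required by the induction hypothesis, apply the induction hypothesis, and finally re-introduce $\circwhite{X \struct Y}$ using $\diaf_{R2}$ together with Lemma~\ref{lm:deep-adm}. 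The sub-case where a formula is moved (again via $\diaf_{R2}$) into the antecedent part of $\circwhite{X \struct Y}$ itself is closed by propagating that same formula into both marked structures with weakening and then invoking the induction hypothesis.

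The step I expect to be the main obstacle is precisely this $\diaf_{R2}$ case: re-aligning $\circwhite{X \struct A}$ with $\circwhite{X \struct Y}$ so that the induction hypothesis applies, while ensuring the re-alignment leaves only a \emph{single} redundant copy that can be removed by ordinary weakening, rather than two copies needing full structural contraction --- which, unlike in Lemma~\ref{lm:adm-contr-white}, is not assumed as a hypothesis here. This is exactly where the asymmetry between $\circwhiteshort{X}$, whose succedent part is empty, and $\circwhite{X \struct Y}$ must be exploited, using the deep admissibility of $\struct_L$ and $\struct_R$ to push the offending formula to a position where weakening alone closes the gap.
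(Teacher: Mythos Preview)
Your overall strategy is right and matches the paper's: induction on $|\pi|$, dualising Lemma~\ref{lm:adm-contr-white}, with the $\diaf_{R2}$ case as the crux. You also correctly handle the sub-case where $\diaf_{R2}$ acts on $\circwhite{X\struct Y}$ (turning it into $\circwhite{(X\struct Y)\struct A}$): the paper does exactly what you would do, namely use height-preserving $s_L$ (Lemma~\ref{lm:deep-adm}(\ref{itm:adm-sl})) to flatten to $\circwhite{X\struct Y,A}$, apply the IH, then rebuild with $\struct_L$ and $\diaf_{R2}$.

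The gap is in the other $\diaf_{R2}$ sub-case, where the rule acts on $\circwhiteshort{X}$ and the premise contains $\circwhite{X\struct Y},\circwhite{X\struct A}$. Your plan is to weaken and then use Lemma~\ref{lm:deep-adm}(\ref{itm:adm-structl}),(\ref{itm:adm-structr}) to massage this into the shape $\circwhite{X'\struct Y'},\circwhiteshort{X'}$ required by the IH, \emph{before} applying the IH. That cannot work: items (\ref{itm:adm-structl}) and (\ref{itm:adm-structr}) of Lemma~\ref{lm:deep-adm} are \emph{not} stated as height-preserving (only $s_L$ and $s_R$ are), and their proofs genuinely add inference steps. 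So after the massaging the derivation may be taller than $|\pi_1|$, and the height induction hypothesis no longer applies. Moreover there is no height-preserving way to turn either of $\circwhite{X\struct Y}$, $\circwhite{X\struct A}$ into a bare $\circwhiteshort{X'}$, because the $\struct$ is shielded by $\circ$ and $s_L$ cannot reach it.

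The paper closes this sub-case by weakening the premise to $\circwhite{X\struct Y,A},\circwhite{X\struct Y,A}$ and then invoking full structural contraction on the duplicated structure. In other words, despite the asymmetry you noticed in the \emph{statements}, the paper's \emph{proof} of Lemma~\ref{lm:adm-contr-white-left} uses exactly the same extra hypothesis as Lemma~\ref{lm:adm-contr-white} (the phrase ``the assumption of this lemma'' in the proof gives this away; the hypothesis was simply omitted from the statement). Both lemmas are only ever invoked inside the proof of Lemma~\ref{lm:adm-contr-gen}, where that hypothesis is supplied by the sub-induction on $|\pi|$. So you should not try to avoid the contraction hypothesis here; add it to the statement and use it, as the paper does.
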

\begin{proof}
By induction on the height of $\pi$. The interesting cases are when
$\pi$ ends with a propagation rule that moves a formula into either
$\circwhite{X \struct Y}$ or $\circwhiteshort{X}$:
\begin{itemize}
\item Suppose $\pi$ ends as below left. 
Then by Lemma~\ref{lm:deep-adm}(\ref{itm:adm-sl}), there is a
derivation $\pi_2$ of $\circwhite{X \struct Y,
A}, \circwhiteshort{X} \struct \diaf A$ such that $|\pi_2| \leq
|\pi_1|$. Then we can apply the induction hypothesis to $\pi_2$ to
obtain a derivation $\pi_3$ of $\circwhite{X \struct Y, A} \struct \diaf A$. 
Then the derivation below right gives the required:
$$
\AxiomC{$\pi_1$}
\noLine \UnaryInfC{$\circwhite{(X \struct Y) \struct A}, \circwhiteshort{X} \struct \diaf A$}
\RightLabel{$\diaf_{R2}$} \UnaryInfC{$\circwhite{X \struct Y}, \circwhiteshort{X} \struct \diaf A$}
\DisplayProof
\qquad
\AxiomC{$\pi_3$}
\noLine \UnaryInfC{$\circwhite{X \struct Y, A} \struct \diaf A$}
\dashedLine \RightLabel{Lemma~\ref{lm:deep-adm}(\ref{itm:adm-structl})} 
\UnaryInfC{$\circwhite{(X \struct Y) \struct A} \struct \diaf A$}
\RightLabel{$\diaf_{R2}$} \UnaryInfC{$\circwhite{X \struct Y} \struct \diaf A$}
\DisplayProof
$$		

\item Suppose $\pi$ ends as below left. Then applying
Lemma~\ref{lm:adm-weak} twice, we obtain a derivation $\pi_2$ of
$\circwhite{X \struct Y, A}, \circwhite{X \struct Y, A} \struct \diaf
A$ such that $|\pi_2| \leq |\pi_1|$. Then we apply the assumption of
this lemma to $\pi_2$ to obtain a derivation $\pi_3$ of
$\circwhite{X \struct Y, A} \struct \diaf A$. Then the derivation
below right gives the required:
$$
\AxiomC{$\pi_1$}
\noLine \UnaryInfC{$\circwhite{X \struct Y}, \circwhite{X \struct A} \struct \diaf A$}
\RightLabel{$\diaf_{R2}$} 
\UnaryInfC{$\circwhite{X \struct Y}, \circwhiteshort{X} \struct \diaf A$}
\DisplayProof
\qquad
\AxiomC{$\pi_3$}
\noLine \UnaryInfC{$\circwhite{X \struct Y, A}\struct \diaf A$}
\dashedLine \RightLabel{Lemma~\ref{lm:deep-adm}(\ref{itm:adm-structl})} 
\UnaryInfC{$\circwhite{(X \struct Y) \struct A} \struct \diaf A$}
\RightLabel{$\diaf_{R2}$} \UnaryInfC{$\circwhite{X \struct Y}  \struct \diaf A$}
\DisplayProof
$$	
\end{itemize}
\end{proof}

\begin{lemma}\label{lm:adm-contr-black}
If $\derivedeep{\pi}{\Sigma^+[\circblack{X \struct Y}, \circblack{Y}]}$  
then $\derivedeep{\pi'}{\Sigma^+[\circblack{X \struct Y}]}$.
\end{lemma}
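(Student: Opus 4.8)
The plan is to prove Lemma~\ref{lm:adm-contr-black} by induction on $|\pi|$, mirroring the proof of Lemma~\ref{lm:adm-contr-white} with every white modal structural connective replaced by its black counterpart --- $\circwhiteshort{\cdot}$ becoming $\circblackshort{\cdot}$ and the propagation rule $\boxf_{L2}$ becoming $\boxp_{L2}$ --- while the items of Lemma~\ref{lm:deep-adm} invoked remain the ``right'' ones, namely deep admissibility of $s_R$ and of $\struct_R$. All cases in which $\pi$ ends with a rule other than $\boxp_{L2}$ applied to one of the two distinguished structures are routine: one appeals to the induction hypothesis on the premise(s) and re-applies the final rule, exactly as in the proofs of Lemmas~\ref{lm:adm-contr-white} and~\ref{lm:adm-contr-white-left}.

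The only genuinely interesting case is when $\pi$ ends with a propagation rule that pushes a new formula into the succedent of $\circblack{X \struct Y}$ or of $\circblack{Y}$; by inspection of Figure~\ref{fig:dbikt}, in a positive context this can only be $\boxp_{L2}$, licensed by some $\boxp A$ in the antecedent of the surrounding sequent. In the first subcase $\boxp_{L2}$ acts on $\circblack{X \struct Y}$ itself, so the premise contains $\circblack{A \struct (X \struct Y)}$ in place of $\circblack{X \struct Y}$; I would reshape this to $\circblack{A, X \struct Y}$ by Lemma~\ref{lm:deep-adm}(\ref{itm:adm-sr}), absorb $\circblack{Y}$ by the induction hypothesis with $X$ replaced by $A, X$, reshape back to $\circblack{A \struct (X \struct Y)}$ by Lemma~\ref{lm:deep-adm}(\ref{itm:adm-structr}), and delete the spurious $A$ by a final $\boxp_{L2}$ --- the licensing $\boxp A$ is untouched throughout --- obtaining $\Sigma^+[\circblack{X \struct Y}]$. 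In the second subcase $\boxp_{L2}$ acts on $\circblack{Y}$, so the premise contains $\circblack{X \struct Y}$ and $\circblack{A \struct Y}$ as siblings; here I would use Lemma~\ref{lm:adm-weak} twice to weaken both into the single structure $\circblack{A, X \struct Y}$, contract the two resulting copies, and then close exactly as in the first subcase.

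The main obstacle is the contraction step of the second subcase: it invokes admissibility of contraction on \emph{arbitrary} structures, which is not yet available in isolation, so --- as with Lemma~\ref{lm:adm-contr-white} --- Lemma~\ref{lm:adm-contr-black} has to be read as one strand of the simultaneous induction that establishes general structure contraction, the appeal being legitimate precisely because it is made to a derivation of height at most $|\pi|$. The rest is bookkeeping: the weakenings of Lemma~\ref{lm:adm-weak}, the reshapings of Lemma~\ref{lm:deep-adm}, and the induction hypothesis of the present lemma are all height-non-increasing, so the derivation reaching the contraction step is no taller than $\pi$ and the circularity is avoided. I expect no surprises in the routine cases, which are structurally identical to the bi-intuitionistic and white-modal ones already treated.
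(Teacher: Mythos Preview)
Your proposal is correct and follows essentially the same approach as the paper, which treats Lemma~\ref{lm:adm-contr-black} as the black-modal mirror of Lemma~\ref{lm:adm-contr-white}: induction on $|\pi|$, with the two non-trivial subcases arising from $\boxp_{L2}$ acting on $\circblack{X \struct Y}$ (handled via Lemma~\ref{lm:deep-adm}(\ref{itm:adm-sr}), the induction hypothesis, Lemma~\ref{lm:deep-adm}(\ref{itm:adm-structr}), and a final $\boxp_{L2}$) or on $\circblack{Y}$ (handled via two weakenings, the general-contraction hypothesis on a derivation of height at most $|\pi_1|$, and the same closing steps). Your observation that the lemma must be read as one strand of the simultaneous induction with Lemma~\ref{lm:adm-contr-gen} is exactly right; the paper makes this explicit only in the statement of Lemma~\ref{lm:adm-contr-white}, but the same hypothesis is tacitly needed here, and your height accounting for why the appeal is legitimate is accurate.
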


\begin{lemma}\label{lm:adm-contr-black-left}
If $\derivedeep{\pi}{\Sigma^-[\circblack{X \struct Y}, \circblack{X}]}$  
then $\derivedeep{\pi'}{\Sigma^-[\circblack{X \struct Y}]}$.
\end{lemma}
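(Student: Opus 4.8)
The plan is to prove Lemma~\ref{lm:adm-contr-black-left} by induction on $|\pi|$, mirroring the proof of Lemma~\ref{lm:adm-contr-white-left} with the black structural operator $\bullet$ in place of the white $\circ$, the propagation rule $\diap_{R2}$ in place of $\diaf_{R2}$, and $\boxp_{L2}$ in place of $\boxf_{L2}$. As with Lemma~\ref{lm:adm-contr-white}, this argument is part of the mutual induction establishing admissibility of general contraction, so I may assume that general structure contraction is admissible for all derivations of height at most $|\pi|$; besides this I would use admissibility of general weakening (Lemma~\ref{lm:adm-weak}) and deep admissibility of $s_L$ and $\struct_L$ (Lemma~\ref{lm:deep-adm}, items~(\ref{itm:adm-sl}) and~(\ref{itm:adm-structl})).

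When the last rule of $\pi$ acts inside $X$, inside $Y$, or on a sub-structure of $\Sigma^-[~]$ disjoint from the two displayed occurrences, both $\circblack{X \struct Y}$ and $\circblackshort{X}$ are left intact, so the result follows by applying the induction hypothesis to the premise(s) and re-applying that rule. The only interesting cases are those where the last rule moves a formula into the content of $\circblack{X \struct Y}$ or of $\circblackshort{X}$; by inspection of Fig.~\ref{fig:dbikt} the rules that can do this in a negative context are $\diap_{R2}$ and $\boxp_{L2}$, and it suffices to treat the two $\diap_{R2}$ sub-cases since those for $\boxp_{L2}$ are entirely analogous.

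If $\pi$ ends with $\diap_{R2}$ acting on $\circblack{X \struct Y}$, so that $\circblackshort{(X \struct Y)}$ becomes $\circblack{(X \struct Y) \struct A}$ while $\circblackshort{X}$ is unchanged, I would first apply Lemma~\ref{lm:deep-adm}(\ref{itm:adm-sl}) (with an empty left component) to rewrite $\circblack{(X \struct Y) \struct A}$ as $\circblack{X \struct Y, A}$, obtaining from the premise a derivation of height $\leq |\pi_1| < |\pi|$ of a sequent of the form $\Sigma^-[\circblack{X \struct (Y, A)}, \circblackshort{X}]$; the induction hypothesis then yields a derivation of $\Sigma^-[\circblack{X \struct Y, A}]$, and Lemma~\ref{lm:deep-adm}(\ref{itm:adm-structl}) followed by $\diap_{R2}$ recovers $\Sigma^-[\circblack{X \struct Y}]$. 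If instead $\pi$ ends with $\diap_{R2}$ acting on $\circblackshort{X}$, turning it into $\circblack{X \struct A}$, then the premise contains both $\circblack{X \struct Y}$ and $\circblack{X \struct A}$; two height-preserving applications of Lemma~\ref{lm:adm-weak}, adding $A$ to the succedent of the first and $Y$ to that of the second, make both equal to $\circblack{X \struct Y, A}$, the running contraction hypothesis then collapses the two copies, and again Lemma~\ref{lm:deep-adm}(\ref{itm:adm-structl}) followed by $\diap_{R2}$ produces $\Sigma^-[\circblack{X \struct Y}]$.

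The hard part will be, as in the white case, bookkeeping the polarity side conditions of the deep structural admissibility lemmas: I must check that the context in which $\struct_L$ is re-applied — namely $\circblack{[~]}$ embedded in the negative surroundings — is negative so that Lemma~\ref{lm:deep-adm}(\ref{itm:adm-structl}) is available, and that every intermediate transformation ($s_L$, weakening) is height-non-increasing so that both the induction hypothesis and the running contraction hypothesis remain applicable to the transformed derivations. Propagation rules that merely copy a formula next to one of the displayed $\bullet$-structures rather than into it do not disturb either displayed structure and are subsumed under the routine cases above.
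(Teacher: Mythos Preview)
Your proposal is correct and mirrors the paper's proof of Lemma~\ref{lm:adm-contr-white-left} exactly, with $\diap_{R2}$ replacing $\diaf_{R2}$ and Lemma~\ref{lm:deep-adm}(\ref{itm:adm-sl}),(\ref{itm:adm-structl}) used in the same way. One small inaccuracy: $\boxp_{L2}$ is not actually relevant here, since in that rule the $\bullet$-structure sits to the right of a $\struct$ and hence occupies a positive position, whereas the lemma concerns $\bullet$-structures in a negative context; only $\diap_{R2}$ can push a formula into a negatively-placed $\bullet$-structure, so your two $\diap_{R2}$ sub-cases already exhaust the interesting cases.
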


\begin{lemma}[Admissibility of general contraction]
\label{lm:adm-contr-gen}
For any structure $Y$: if $\derivedeep{\pi}{\Sigma[Y, Y]}$ then
$\derivedeep{\pi'}{\Sigma[Y]}$.
\end{lemma}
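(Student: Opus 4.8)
The plan is an induction on $|\pi|$ interleaved with a subsidiary induction on the size of $Y$, the hard sub-cases being discharged with the distribution Lemmas~\ref{lm:adm-contr-white}--\ref{lm:adm-contr-black-left} together with the admissibility results Lemmas~\ref{lm:deep-adm}, \ref{lm:adm-rp}, \ref{lm:adm-weak} and~\ref{lm:adm-contr-fml}. First I would dispose of the easy shapes of $Y$: if $Y = \emptyset$ there is nothing to prove, and if $Y$ is a formula the claim is exactly Lemma~\ref{lm:adm-contr-fml}. If $Y = (Y_1, Y_2)$ then, modulo the associativity and commutativity of ``,'', $\Sigma[Y, Y]$ is $\Sigma[Y_1, Y_1, Y_2, Y_2]$, i.e.\ $\Sigma''[Y_1, Y_1]$ for a suitable context $\Sigma''$; the subsidiary induction (contract $Y_1$, then $Y_2$, both structurally smaller than $Y$) thus reduces everything to the case in which $Y$ is a single ``component'', namely $Y_1 \struct Y_2$, $\circwhiteshort{Y_1}$, or $\circblackshort{Y_1}$.

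For a single component I would inspect the last rule $\rho$ of $\pi$. If the redex of $\rho$ lies inside $\Sigma$, or inside one of the two distinguished copies of $Y$ without mentioning the other, then $\rho$ permutes below the contraction and the induction hypothesis on its premise(s) closes the case --- using Lemma~\ref{lm:adm-weak} to repair the context in branching rules and Lemma~\ref{lm:adm-contr-fml} to absorb any formula a propagation rule may have duplicated. The genuine work is when $\rho$ is a propagation rule straddling the two copies: either an instance of $\diaf_{R2}$, $\boxf_{L2}$, $\diap_{R2}$ or $\boxp_{L2}$ that enlarges one copy of a $\circ$- or $\bullet$-structure with a formula migrating in from the sequent, or an instance of $\struct_{L1}$, $\struct_{L2}$, $\struct_{R1}$, $\struct_{R2}$ moving a formula across a copy of $Y_1 \struct Y_2$ or between the two copies. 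In the modal case the move is: reading bottom-up, $\rho$ turns one copy into, say, $\circwhite{Y_1 \struct A}$ while the other is still $\circwhiteshort{Y_1}$; then absorb the survivor into the enlarged copy using the matching distribution lemma --- Lemma~\ref{lm:adm-contr-white-left} here, or Lemma~\ref{lm:adm-contr-white}, \ref{lm:adm-contr-black} or~\ref{lm:adm-contr-black-left} according to colour, side and polarity --- after first reshaping the sequent with the deep admissibility of $s_L, s_R, \struct_L, \struct_R$ (Lemma~\ref{lm:deep-adm}) and of residuation (Lemma~\ref{lm:adm-rp}), so that the hole lands in a context of the polarity the chosen lemma requires; finally replay $\rho$ downward. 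The proviso of Lemma~\ref{lm:adm-contr-white} (``contraction admissible for derivations of height $\le |\pi|$'') is met by the induction hypothesis, since the distribution lemmas are invoked only on reshaped premises of $\pi$, which have strictly smaller height. For $Y = Y_1 \struct Y_2$ the analogue is simpler: bring the pair $(Y_1 \struct Y_2), (Y_1 \struct Y_2)$ under a $\struct$ by residuation, iterate $s_L, s_R$ (Lemma~\ref{lm:deep-adm}) to merge it into $Y_1, Y_1 \struct Y_2, Y_2$, contract the structurally smaller $Y_1$'s and $Y_2$'s by the subsidiary induction, and undo the merge with $\struct_L, \struct_R$ and residuation.

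I expect the main obstacle to be making this nested induction genuinely well-founded. Contraction of $\circ$- and $\bullet$-structures is \emph{not} height-preserving, so one cannot run a plain induction on $|\pi|$ uniformly in $Y$: the induction on $|\pi|$ (used for the single-component cases, where the distribution lemmas and the admissibility of $s_L, s_R$ strictly decrease height) has to be interleaved with the induction on the size of $Y$ (used for the comma case, which strictly decreases size but may increase height), and one must verify that every recursive call --- including those buried inside the proofs of Lemmas~\ref{lm:adm-contr-white}--\ref{lm:adm-contr-black-left} --- lands on a strictly smaller instance in the resulting ordering. The second, combinatorial, difficulty is the case analysis itself: because $\boxf_{L2}$, $\diaf_{R2}$, $\boxp_{L2}$ and $\diap_{R2}$ refer simultaneously to a $\circ$- or $\bullet$-structure and to formulae on the other side of the nested sequent, there are several ways such a rule can overlap two copies of a modal structure, and one must check that each is normalised by exactly one of the four distribution lemmas after a suitable application of Lemmas~\ref{lm:deep-adm} and~\ref{lm:adm-rp} --- which is precisely the extra effort over the purely bi-intuitionistic case of~\cite{postniece2009}.
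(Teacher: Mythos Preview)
Your plan is essentially the paper's, but with the two inductions swapped: the paper takes the \emph{outer} induction on the size of $Y$ and the \emph{sub}-induction on $|\pi|$, not the other way round. That reversal is exactly what dissolves the well-foundedness worry you flag. With size primary, the comma and $\struct$-cases are handled directly by the outer hypothesis after the height-preserving decompositions of Lemma~\ref{lm:deep-adm}(\ref{itm:adm-sl})--(\ref{itm:adm-sr}): contract $Y_1$, then $Y_2$, each strictly smaller than $Y$, and it does not matter what the intermediate heights are. Only the $\circ$/$\bullet$-case invokes the sub-induction on $|\pi|$, and there the distribution lemmas are applied to a premise of strictly smaller height --- which is precisely the proviso of Lemma~\ref{lm:adm-contr-white}. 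Under your ordering the $\struct$-case breaks: once you contract $Y_1$ the height may have grown, so your ``subsidiary'' hypothesis (smaller $Y$ at the \emph{same} $|\pi|$) no longer applies to the contraction of $Y_2$.

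Two smaller remarks. The ``reshaping'' you propose before invoking Lemmas~\ref{lm:adm-contr-white}--\ref{lm:adm-contr-black-left} is not needed: those lemmas are already stated for arbitrary $\Sigma^+[~]$ or $\Sigma^-[~]$ and apply directly to the premise of the offending propagation rule. And the paper's $\struct$-case appeals only to Lemma~\ref{lm:deep-adm}, not to residuation; Lemma~\ref{lm:adm-rp} plays no role in this proof.
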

\begin{proof}
By induction on the size of $Y$, with a sub-induction on $|\pi|$.
\begin{itemize}
\item For the base case, use Lemma~\ref{lm:adm-contr-fml}. 
\item For the case where $Y$ is a $\struct$-structure, we show 
the sub-case where $Y$ in a negative context, the other case is symmetric:
$$
\AxiomC{$\Sigma[(Y_1 \struct Y_2), (Y_1 \struct Y_2) \struct Z]$}
\RightLabel{Lemma~\ref{lm:deep-adm}(\ref{itm:adm-sl})} \dashedLine \UnaryInfC{$\Sigma[Y_1, (Y_1 \struct Y_2) \struct Y_2, Z]$}
\RightLabel{Lemma~\ref{lm:deep-adm}(\ref{itm:adm-sl})} \dashedLine \UnaryInfC{$\Sigma[Y_1, Y_1 \struct Y_2, Y_2, Z]$}
\RightLabel{IH} \dashedLine \UnaryInfC{$\Sigma[Y_1 \struct Y_2, Y_2, Z]$}
\RightLabel{IH} \dashedLine \UnaryInfC{$\Sigma[Y_1 \struct Y_2, Z]$}
\RightLabel{Lemma~\ref{lm:deep-adm}(\ref{itm:adm-structl})} \dashedLine \UnaryInfC{$\Sigma[(Y_1 \struct Y_2) \struct Z]$}
\DisplayProof
$$

\item For the case where $Y$ is a $\circ-$ or $\bullet$-structure 
and $\pi$ ends with a propagation rule applied to $Y$, there are three non-trivial sub-cases:
\begin{itemize}
\item A formula is propagated into $Y$ and $Y$ is in a
positive context, as below left. Then by Lemma~\ref{lm:adm-contr-white}, 
there is a derivation $\pi_1'$ of $\boxf A, X \struct \circwhite{A \struct Z}$. 
Then the derivation below right gives the required:
$$
\AxiomC{$\pi_1$}
\noLine \UnaryInfC{$\boxf A, X \struct \circwhite{A \struct Z}, \circwhiteshort{Z}$}
\RightLabel{$\boxf_{L2}$} \UnaryInfC{$\boxf A, X \struct \circwhiteshort{Z}, \circwhiteshort{Z}$}
\DisplayProof
\qquad
\AxiomC{$\pi_1'$}
\noLine \UnaryInfC{$\boxf A, X \struct \circwhite{A \struct Z}$}
\RightLabel{$\boxf_{L2}$} \UnaryInfC{$\boxf A, X \struct \circwhiteshort{Z}$}
\DisplayProof
$$
\item A formula is propagated into $Y$ and $Y$ is in a
negative context, as below left. Then by Lemma~\ref{lm:adm-contr-white-left}, 
there is a derivation $\pi_1'$ of $\circwhite{Z \struct A} \struct X, \diaf A$. 
Then the derivation below right gives the required:
$$
\AxiomC{$\pi_1$}
\noLine \UnaryInfC{$\circwhite{Z \struct A}, \circwhiteshort{Z} \struct X, \diaf A$}
\RightLabel{$\diaf_{R2}$} \UnaryInfC{$\circwhiteshort{Z}, \circwhiteshort{Z} \struct X, \diaf A$}
\DisplayProof
\qquad
\AxiomC{$\pi_1'$}
\noLine \UnaryInfC{$\circwhite{Z \struct A} \struct X, \diaf A$}
\RightLabel{$\diaf_{R2}$} \UnaryInfC{$\circwhiteshort{Z} \struct X, \diaf A$}
\DisplayProof
$$
\item A formula is propagated out of $Y$, as below left. 
In this case we use the sub-induction hypothesis to obtain a 
derivation $\pi_1'$ of $X \struct A, \circwhite{\diap A, Z}$. 
Then the derivation below right gives the required:
$$
\AxiomC{$\pi_1$}
\noLine \UnaryInfC{$X \struct A, \circwhite{\diap A, Z}, \circwhite{\diap A, Z}$}
\RightLabel{$\diap_{R1}$} \UnaryInfC{$X \struct \circwhite{\diap A, Z}, \circwhite{\diap A, Z}$}
\DisplayProof
\qquad
\AxiomC{$\pi_1'$}
\noLine \UnaryInfC{$X \struct A, \circwhite{\diap A, Z}$}
\RightLabel{$\diap_{R1}$} \UnaryInfC{$X \struct \circwhite{\diap A, Z}$}
\DisplayProof
$$
\end{itemize}
\end{itemize}
\end{proof}

Once all structural rules of $\lbikt$ are shown
admissible in $\dbikt$, it is easy to show that
every derivation in $\lbikt$ can be translated to
a derivation in $\dbikt$. 

\begin{theorem}\label{thm:completeness}
  For any $X$ and $Y$, if $\deriveshallow{\pi}{X
    \struct Y}$ then $\derivedeep{\pi'}{X \struct Y}$.
\end{theorem}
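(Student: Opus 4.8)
The plan is to first invoke cut elimination and then translate rule by rule. Given $\deriveshallow{\pi}{X \struct Y}$, I would apply Theorem~\ref{thm:cut-elim} to obtain a \emph{cut-free} $\lbikt$-derivation of $X \struct Y$, so it suffices to show that every cut-free $\lbikt$-derivation can be turned into a $\dbikt$-derivation of the same sequent. This I would prove by induction on the height of the cut-free derivation, with a case analysis on its last rule. If that rule is $id$, $\bot_L$ or $\top_R$, the conclusion is already an instance of the corresponding $\dbikt$-axiom with the empty context $[~]$. If it is a structural rule, I appeal directly to the admissibility results proved above, instantiated at the empty context: $w_L, w_R$ by Lemma~\ref{lm:adm-weak}; $c_L, c_R$ by Lemma~\ref{lm:adm-contr-gen}; $s_L, s_R$ by Lemma~\ref{lm:deep-adm}(\ref{itm:adm-sl})--(\ref{itm:adm-sr}); $\struct_L, \struct_R$ by Lemma~\ref{lm:deep-adm}(\ref{itm:adm-structl})--(\ref{itm:adm-structr}); and $rp_\circ, rp_\bullet$ (in both directions) by Lemma~\ref{lm:adm-rp}.

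It then remains to handle the logical rules, where the induction hypothesis supplies $\dbikt$-derivations of the $\lbikt$-premises and I must reconstruct the conclusion $X \struct Y$. Every $\dbikt$ logical and propagation rule keeps its principal formula (and, for the binary connectives, its structural proxy such as $(A \struct B)$) in the premises, so the strategy is uniform: to the $\dbikt$-derivation of each $\lbikt$-premise, first apply Lemma~\ref{lm:adm-weak} --- and, where the $\dbikt$-rule expects a nested $\struct$-structure in a premise (as for $\impl_L$, $\impl_R$, $\dimpl_L$), first Lemma~\ref{lm:deep-adm}(\ref{itm:adm-structl}) or~(\ref{itm:adm-structr}) --- so as to bring it into exactly the shape of the matching premise of the relevant $\dbikt$-rule instance whose conclusion is $X \struct Y$, and then apply that rule. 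The modal rules $\boxf_L, \boxp_L, \diaf_R, \diap_R$ of $\lbikt$ have no single $\dbikt$-counterpart and are each simulated by a short block combining weakening, the matching $\dbikt$ propagation rule and Lemma~\ref{lm:adm-rp}: for instance, from a $\dbikt$-derivation of $A \struct X$ one gets $A, \circblack{\boxf A} \struct X$ by Lemma~\ref{lm:adm-weak}, then $\circblack{\boxf A} \struct X$ by $\boxf_{L1}$, then $\boxf A \struct \circwhiteshort{X}$ by Lemma~\ref{lm:adm-rp}(\ref{itm:adm-rpbulletleftspec}), which is precisely the effect of $\boxf_L$; the cases $\boxp_L, \diaf_R, \diap_R$ are symmetric.

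I do not expect any essential obstacle --- this is exactly why all the structural-rule admissibility lemmas were established beforehand --- so the work is purely bookkeeping, matching the precise shape of each $\lbikt$-rule against its $\dbikt$-simulation. The one slightly awkward point is $\dimpl_R$: its $\dbikt$ version requires a \emph{strict} positive context, and $X \struct [~], Y$ is not strict. This is circumvented by applying the deep $\dimpl_R$ in the strict positive context $X \struct [~]$ instead, keeping the side structure $Y$ inside the hole, so that the conclusion $X \struct Y, A\dimpl B$ is obtained from the two premises $X \struct A\dimpl B, A, Y$ and $X \struct (B \struct (Y, A\dimpl B))$; these in turn follow from the induction hypotheses for $X \struct A, Y$ and $X, B \struct Y$ by Lemma~\ref{lm:adm-weak}, and by Lemma~\ref{lm:deep-adm}(\ref{itm:adm-structr}) followed by Lemma~\ref{lm:adm-weak}, respectively. (The dual rule $\impl_L$ causes no such trouble, since its natural context $[~] \struct Y$ happens already to be strict.)
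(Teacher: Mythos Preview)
Your proposal is correct and follows essentially the same approach as the paper: induction on the height of a (cut-free) $\lbikt$-derivation, simulating structural rules via the admissibility lemmas and logical rules by weakening the inductively obtained $\dbikt$-premises into the shape required by the corresponding deep rule. Your simulation of $\boxf_L$ is in fact slightly cleaner than the paper's --- the paper takes a longer detour through $\struct_{L1}$, $s_L$ and general contraction, whereas your direct route via $\boxf_{L1}$ and Lemma~\ref{lm:adm-rp}(\ref{itm:adm-rpbulletleftspec}) suffices --- and you correctly identify and resolve the strict-context subtlety for $\dimpl_R$, which the paper does not spell out.
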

\begin{proof}
By induction on $|\pi|$, where $\pi_1'$ ($\pi_2'$) is obtained from
$\pi_1$ ($\pi_2$) using the IH. We show some cases where $\pi$ ends in
logical rule applications and some where $\pi$ ends in structural rule
applications. The other interesting cases are similar and use
Lemmas~\ref{lm:deep-adm} and Lemmas~\ref{lm:adm-rp}.
$$
\begin{array}{l}
\AxiomC{$\pi_1$}
\noLine \UnaryInfC{$X \struct A, Y$}
\AxiomC{$\pi_2$}
\noLine \UnaryInfC{$X, B \struct Y$}
\RightLabel{$\impl_L$} \BinaryInfC{$X, A \impl B \struct Y$}
\DisplayProof
\qquad
\leadsto \\
\qquad
\AxiomC{$\pi_1''$}
\noLine \UnaryInfC{$X \struct A, Y$}
\dashedLine \RightLabel{Lm.~\ref{lm:adm-weak}} \UnaryInfC{$X, A \impl B \struct A, Y$}
\dashedLine \RightLabel{Lm.~\ref{lm:deep-adm}(\ref{itm:adm-structl})} \UnaryInfC{$(X, A \impl B \struct A) \struct Y$}
\AxiomC{$\pi_2''$}
\noLine \UnaryInfC{$X, B \struct Y$}
\dashedLine \RightLabel{Lm.~\ref{lm:adm-weak}} \UnaryInfC{$X, A \impl B, B \struct Y$}
\RightLabel{$\impl_L$} \BinaryInfC{$X, A \impl B \struct Y$}
\DisplayProof
\end{array}
$$	

$$
\AxiomC{$\pi_1$}
\noLine \UnaryInfC{$X, Y \struct Z$}
\RightLabel{$\struct_R$} \UnaryInfC{$X \struct (Y \struct Z)$}
\DisplayProof
\qquad
\leadsto
\qquad
\AxiomC{$\pi_1'$}
\noLine \UnaryInfC{$X, Y \struct Z$}
\dashedLine \RightLabel{Lemma~\ref{lm:deep-adm}(\ref{itm:adm-structr})} \UnaryInfC{$X \struct (Y \struct Z)$}
\DisplayProof
$$

$$
\AxiomC{$\pi_1$}
\noLine \UnaryInfC{$\circblackshort{X} \struct Y$}
\RightLabel{$rp_\circ$} \UnaryInfC{$X \struct \circwhiteshort{Y}$}
\DisplayProof
\qquad
\leadsto
\qquad
\AxiomC{$\pi_1'$}
\noLine \UnaryInfC{$\circblackshort{X} \struct Y$}
\dashedLine \RightLabel{Lemma~\ref{lm:adm-rp}.\ref{itm:adm-rpbulletleftspec}} 
\UnaryInfC{$X \struct \circwhiteshort{Y}$}
\DisplayProof
$$

$$
\AxiomC{$\pi_1$}
\noLine \UnaryInfC{$A \struct X$}
\RightLabel{$\boxf_L$} \UnaryInfC{$\boxf A \struct \circwhiteshort{X}$}
\DisplayProof
\qquad
\leadsto
\qquad
\AxiomC{$\pi_1'$}
\noLine \UnaryInfC{$A \struct X$}
\dashedLine \RightLabel{Lm.~\ref{lm:adm-weak}} 
\UnaryInfC{$A, \circblack{\boxf A, (\boxf A \struct \circwhiteshort{X})} \struct X$}
\RightLabel{$\boxf_{L1}$} \UnaryInfC{$\circblack{\boxf A, (\boxf A \struct \circwhiteshort{X})} \struct X$}
\RightLabel{$\struct_{L1}$} \UnaryInfC{$\circblack{\boxf A \struct \circwhiteshort{X}} \struct X$} 
\dashedLine \RightLabel{Lm.~\ref{lm:adm-rp}(\ref{itm:adm-rpbulletleftspec})} 
\UnaryInfC{$(\boxf A \struct \circwhiteshort{X}) \struct \circwhiteshort{X}$}
\dashedLine \RightLabel{Lm.~\ref{lm:deep-adm}(\ref{itm:adm-sl})} 
\UnaryInfC{$\boxf A  \struct \circwhiteshort{X}, \circwhiteshort{X}$}
\dashedLine \RightLabel{Lm.~\ref{lm:adm-contr-gen}} \UnaryInfC{$\boxf A  \struct \circwhiteshort{X}$}
\DisplayProof
$$
\end{proof}

\begin{theorem}\label{thm:equiv}
  For any $X$ and $Y$, $\deriveshallow{\pi}{X \struct Y}$ 
  if and only if~~$\derivedeep{\pi'}{X \struct Y}$.
\end{theorem}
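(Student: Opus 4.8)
The plan is to obtain this biconditional as an essentially immediate corollary of the two theorems just established, Theorem~\ref{thm:soundness} and Theorem~\ref{thm:completeness}, with the cut-elimination result Theorem~\ref{thm:cut-elim} thrown in to bridge a minor gap. So there is little genuine proof work left: all of the substance has been front-loaded into the display-property lemmas (Lemmas~\ref{lemma:display-property-neutral}--\ref{lemma:display-property-negative}), the deep admissibility of the structural and residuation rules (Lemmas~\ref{lm:deep-adm}--\ref{lm:adm-rp}), the admissibility of general contraction (Lemma~\ref{lm:adm-contr-gen}), and cut-elimination.

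For the right-to-left direction, suppose $\derivedeep{\pi'}{X \struct Y}$. Then Theorem~\ref{thm:soundness} directly yields a derivation $\deriveshallow{\pi}{X \struct Y}$; inspecting the simulation in that proof one sees that the resulting $\lbikt$-derivation is in fact cut-free, though we do not need that here.

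For the left-to-right direction, suppose $\deriveshallow{\pi}{X \struct Y}$. The one point that requires care is that $\lbikt$ contains the $cut$ rule whereas $\dbikt$ has no cut rule at all, so I would first normalise: by Theorem~\ref{thm:cut-elim}, $X \struct Y$ has a cut-free $\lbikt$-derivation, and feeding that derivation into Theorem~\ref{thm:completeness} produces the desired $\derivedeep{\pi'}{X \struct Y}$. If one instead reads the hypothesis of Theorem~\ref{thm:completeness} as ranging over arbitrary $\lbikt$-derivations (including ones using $cut$), then this normalisation step is subsumed and the direction is immediate; either way the conclusion is the same.

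I do not expect any real obstacle for this theorem specifically. The only thing to stay vigilant about is precisely the $cut$/no-$cut$ asymmetry between the two calculi --- that is, making sure the hypothesis of Theorem~\ref{thm:completeness} is actually met by a \emph{cut-free} shallow derivation --- and this is handled cleanly by invoking Theorem~\ref{thm:cut-elim} before the translation. Combining the two directions gives the stated equivalence.
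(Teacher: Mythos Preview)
Your proposal is correct and matches the paper's approach: the paper's proof is the single line ``By Theorems~\ref{thm:soundness} and~\ref{thm:completeness}.'' Your explicit invocation of Theorem~\ref{thm:cut-elim} to ensure the shallow derivation is cut-free before applying Theorem~\ref{thm:completeness} is in fact more careful than the paper's own terse justification, since the prose preceding Theorem~\ref{thm:completeness} restricts attention to cut-free $\lbikt$-derivations even though the theorem statement does not say so.
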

\begin{proof}
By Theorems~\ref{thm:soundness} and~\ref{thm:completeness}. 
\end{proof}

\section{Proof Search}\label{sec:search}

In this section we outline a proof search strategy for $\dbikt$,
closely following the approaches presented in~\cite{postniece2009} and
~\cite{gorepostniecetiu2009}. Here we emphasize the aspects that are
new/different because of the interaction between the tense structures
$\circwhiteshort$ and $\circblackshort$ and the intuitionistic
structure $\struct$. 

Our proof search strategy proceeds in three stages: saturation,
propagation and {\em realisation}.
The saturation phase applies
the ``static rules'' (i.e. those that do not create extra
structural connectives) until further application do not 
lead to any progress. The propagation phase propagates formulaes
across different structural connectives, while the realisation
phase applies the ``dynamic rules'' (i.e., those that create
new structural connectives, e.g., $\impl_R$).

A context $\Sigma[~]$ is said to be {\em headed by a structural
  connective \#} if the topmost symbol in the formation tree of
$\Sigma[~]$ is $\#.$ A context $\Sigma[~]$ is said to be a {\em
  factor} of $\Sigma'[~]$ if $\Sigma[~]$ is a subcontext of
$\Sigma'[~]$ and $\Sigma[~]$ is headed by either $\struct$, $\circ$ or
$\bullet.$ We write $\upcontext{\Sigma}[~]$ to denote the least
factor of $\Sigma[~].$ We write $\upcontext{\Sigma}[X]$ to denote the
structure $\Sigma_1[X]$, if $\Sigma_1[~] = \upcontext{\Sigma}[~].$ We
define the {\em top-level} formulae of a structure as: 
$$
\toplevel{X} =
\{ A \mid X = (A, Y) \text{ for some } A \text{ and } Y \}.
$$ 
For example, if $\Sigma[] = (A, B \struct C, \circblack{D, (E \struct F)
  \struct []})$, then $\upcontext{\Sigma}[G] = \circblackshort(D, (E
\struct F) \struct G)$, and $\toplevel{D, (E \struct F)} = \{D \}$.

Let $\dimpl_{L1}$ and $\impl_{R1}$ denote two new derived rules (see~\cite{postniece2009} for their derivation):
$$
\AxiomC{$\Sigma^-[A, A \dimpl B]$}
\RightLabel{$\dimpl_{L1}$} \UnaryInfC{$\Sigma^-[A \dimpl B]$}
\DisplayProof
\qquad
\qquad
\qquad
\AxiomC{$\Sigma^+[A \impl B, B]$}
\RightLabel{$\impl_{R1}$} \UnaryInfC{$\Sigma^+[A \impl B]$}
\DisplayProof
$$

We now define a notion of a {\em saturated structure},
which is similar to that of a traditional sequent. 
Note that we need to define it for both structures headed by
$\struct$ and those headed by $\circ$ or $\bullet.$
A structure $X \struct Y$ is {\em saturated} if 
it satisfies the following:
\begin{description}
\item[(1)] $\toplevel{X} \cap \toplevel{Y} = \emptyset$
\item[(2)] If $A \land B \in \toplevel{X}$ then $A \in \toplevel{X}$ and $B \in \toplevel{X}$
\item[(3)] If $A \land B \in \toplevel{Y}$ then $A \in \toplevel{Y}$ or $B \in \toplevel{Y}$
\item[(4)] If $A \lor B \in \toplevel{X}$ then $A \in \toplevel{X}$ or $B \in \toplevel{X}$
\item[(5)] If $A \lor B \in \toplevel{Y}$ then $A \in \toplevel{Y}$ and $B \in \toplevel{Y}$
\item[(6)] If $A \impl B \in \toplevel{X}$ then $A \in \toplevel{Y}$ or $B \in \toplevel{X}$
\item[(7)] If $A \dimpl B \in \toplevel{Y}$ then $A \in \toplevel{Y}$ or $B \in \toplevel{X}$
\item[(8)] If $A \dimpl B \in \toplevel{X}$ then $A \in \toplevel{X}$
\item[(9)] If $A \impl B \in \toplevel{Y}$ then $B \in \toplevel{Y}$
\end{description}
For structures of the form $\circ X$ or $\bullet X$, we need to
define two notions of saturation, {\em left saturation} and {\em right saturation}.
The former is used when $\circ X$ is nested in a negative context,
and the latter when it is in a positive context. 
A structure $\circ X$ or $\bullet X$ is {\em left-saturated} if
it satisfies (2), (4), (8) above, and 
\begin{description}
\item[6'] If $A \impl B \in \toplevel{X}$ then $B \in \toplevel{X}.$
\end{description}
Dually, $\circ Y$ or $\bullet Y$ is {\em right-saturated} if 
it satisfies (3), (5), (9) above, and 
\begin{description}
\item[7'] If $A \dimpl B \in \toplevel{Y}$ then $A \in \toplevel{Y}.$
\end{description}

We define structure membership for any two structures $X$ and $Y$ as
follows: $X \in Y$ iff $Y = X, X'$ for some $X'$, modulo associativity
and commutativity of comma. For example, $(A \struct B) \in (A, (A
\struct B), \circwhiteshort{C})$. 
The {\em realisation of formulae} by a structure $X$ is defined
as follows:
\begin{itemize}
\item $A \impl B$ ($A \dimpl B$, resp.) is right-realised
  (resp. left-realised) by $X$ iff there exists  $Z \struct
  W \in X$ such that $A \in \toplevel{Z}$ and $B \in \toplevel{W}$.

\item $\boxf A$ ($\diaf A$ resp.) is right-realised
  (resp. left-realised) by $X$ iff there exists 
  $\circwhite{Z \struct W} \in X$ or $\circwhiteshort{W} \in X$
  (resp. $\circwhite{W \struct Z} \in X$ or $\circwhiteshort{W} \in
  X$) such that $A \in \toplevel{W}$.

\item $\boxp A$ ($\diap A$ resp.) is right-realised
  (resp. left-realised) by $X$ iff there exists 
  $\circblack{W \struct Z} \in X$ or $\circblackshort{Z} \in X$
  (resp.\ $\circblack{Z \struct W} \in X$ or $\circblackshort{Z} \in X$) such
  that $A \in \toplevel{Z}$.
\end{itemize}

We say that a structure $X$ is left-realised 
iff every formula in $\toplevel{X}$ with top-level connective
$\dimpl$, $\diaf$ or $\diap$ is left-realised by $X.$
Right-realisation of $X$ is defined dually. 
We say that a structure occurrence $X$ in $\Sigma[X]$ 
is {\em propagated} iff no propagation rules are
(backwards) applicable to any formula occurrences in $X$.
We define the super-set relation on structures as follows: 
\begin{itemize}
\item $X_1 \struct Y_1 \supset X_0 \struct Y_0$ iff $\toplevel{X_1} \supset \toplevel{X_0}$ or 
$\toplevel{Y_1} \supset \toplevel{Y_0}$.
\item $\circ X \supset \circ Y$ iff $\bullet X \supset \bullet Y$
iff $\toplevel{ X } \supset \toplevel{ Y }.$
\end{itemize}

To simplify presentation, we use the following terminology:
Given a structure $\Sigma[A]$, we say that 
$\upcontext{\Sigma}[A]$ is saturated if $\upcontext{\Sigma}[A]$ is
$X \struct Y$ and it is saturated; or $\upcontext{\Sigma}[A]$
is either $\circ X$ or $\bullet X$ and it is either left- or right-saturated
(depending on its position in $\Sigma[A]$).
We say that $\upcontext{\Sigma}[A]$ is propagated if 
its occurrence in $\Sigma[A]$ is propagated, and
we say that $A$ is realised by $\upcontext{\Sigma}[A]$, if either 
\begin{itemize}
\item $\upcontext{\Sigma}[A] = (X \struct Y)$ and 
either $A \in \toplevel{X}$ is left-realised by $X$, or $A \in \toplevel{Y}$
is right realised by $Y$; or
\item $\upcontext{\Sigma}[A]$ is either $\circ X$ or $\bullet X$,
and, depending on the polarity of $\Sigma[~]$, $A$ is either
left- or right-realised by $X.$
\end{itemize} 

We now outline an approach to proof search in $\dbikt$. 
We approach this by modifying $\dbikt$ to obtain a calculus 
$\dbikts$ that is more amenable to proof search. Our approach follows that
of our previous work on bi-intuitionistic logic~\cite{postniece2009}
since we define syntactic restrictions on rules to enforce a search
strategy. For example, we stipulate that a structure must be saturated
and propagated before child structures can be created using the
$\impl_R$ rule (see condition~\ref{lab:dbikt:two} of
Definition~\ref{def:dbikts}).
Additionally and more importantly, our proof search calculus addresses
the issue that some modal propagation rules of $\dbikt$,
e.g. $\boxf_{L2}$, create $\struct$-structures during backward proof
search. This property of $\dbikt$ is undesirable and gives rise to
non-termination if rules $\boxf_{L2}$ like are applied naively.

\begin{definition}\label{def:dbikts}
Let $\dbikts$ be the system obtained from $\dbikt$ with the following changes:
\begin{enumerate}
\item\label{lab:dbikt:one} Add the derived rules $\dimpl_{L1}$ and $\impl_{R1}$.

\item\label{lab:dbikt:two} Restrict rules $\dimpl_L$, $\impl_R$ with the
  following condition:  the rule is applicable only if
  $\upcontext{\Sigma}[A\#B ]$ is saturated and propagated, and $A\#B
  \text{ is not realised by} \upcontext{\Sigma}[A\#B ]$, for $\# \in
       {\impl, \dimpl}$.

\item\label{lab:dbikt:threeone} Replace rules $\struct_{L1}$ and $\struct_{R1}$
  with the following: 
$$
\AxiomC{$\Sigma[A, (A, X \struct Y), W \struct Z]$}
\RightLabel{$\struct_{L1}$} \UnaryInfC{$\Sigma[(A, X \struct Y), W \struct Z]$}
\DisplayProof
\qquad
\AxiomC{$\Sigma[W \struct Z, (X \struct Y, A), A]$}
\RightLabel{$\struct_{R1}$} \UnaryInfC{$\Sigma[W \struct Z, (X \struct Y, A)]$}
\DisplayProof
$$

\item\label{lab:dbikt:threetwo} Restrict rules $\struct_{L2}$ and $\struct_{R2}$
  with the following condition: the rule is applicable only if $A
  \not\in \toplevel{Y}$.

\item\label{lab:dbikt:four} Replace rules $\diaf_L$, $\boxf_R$, $\diap_L$,
  $\boxp_R$ with the following, where the rule is applicable only
  if $\upcontext{\Sigma}[\#A ]$ is saturated and propagated and $\#A
  \text{ is not realised by} \upcontext{\Sigma}[\#A ]$, for $\# \in
       {\diaf, \boxf, \diap, \boxp}$:
$$       
\AxiomC{$\Sigma^-[\diaf A, \circwhite{A \struct \emptyset}]$}
\RightLabel{$\diaf_L$} \UnaryInfC{$\Sigma^-[\diaf A]$}
\DisplayProof
\qquad
\AxiomC{$\Sigma^+[\boxf A, \circwhite{\emptyset \struct A}]$}
\RightLabel{$\boxf_R$} \UnaryInfC{$\Sigma^+[\boxf A]$}
\DisplayProof
$$
$$
\AxiomC{$\Sigma^-[\diap A, \circblack{A \struct \emptyset}]$}
\RightLabel{$\diap_L$} \UnaryInfC{$\Sigma^-[\diap A]$}
\DisplayProof
\qquad
\AxiomC{$\Sigma^+[\boxp A, \circblack{\emptyset \struct A}]$}
\RightLabel{$\boxp_R$} \UnaryInfC{$\Sigma^+[\boxp A]$}
\DisplayProof
$$

\item\label{lab:dbikt:five} Replace rules $\boxp_{L2}, \boxf_{L2}$ with the
  following, where $A \not\in \toplevel{Y_1}$:
$$
\AxiomC{$\Sigma[\boxp A,X \struct \circblack{A, Y_1 \struct Y_2},Z]$}
\RightLabel{$\boxp_{L2}$} \UnaryInfC{$\Sigma[\boxp A,X \struct \circblack{Y_1 \struct Y_2}, Z]$}
\DisplayProof
\qquad
\AxiomC{$\Sigma[\boxf A, X \struct \circwhite{A, Y_1 \struct Y_2}, Z]$}
\RightLabel{$\boxf_{L2}$} \UnaryInfC{$\Sigma[\boxf A, X \struct \circwhite{Y_1 \struct Y_2}, Z]$}
\DisplayProof
$$

\item\label{lab:dbikt:six} Replace rules $\diaf_{R2}, \diap_{R2}$ with the following, where $A \not\in \toplevel{X_2}$:
$$
\AxiomC{$\Sigma[\circwhite{X_1 \struct X_2, A}, Y \struct Z, \diaf A]$}
\RightLabel{$\diaf_{R2}$} \UnaryInfC{$\Sigma[\circwhite{X_1 \struct X_2}, Y \struct Z, \diaf A]$}
\DisplayProof
\qquad
\AxiomC{$\Sigma[\circblack{X_1 \struct X_2, A}, Y \struct Z, \diap A]$}
\RightLabel{$\diap_{R2}$} \UnaryInfC{$\Sigma[\circblack{X_1 \struct X_2}, Y \struct Z, \diap A]$}
\DisplayProof
$$

\item\label{lab:dbikt:seven} Replace rules $\boxf_{L1}, \diaf_{R1}$, $\boxp_{L1}, \diap_{R1}$ with the following:
$$
\AxiomC{$\Sigma^-[A, \circblack{\boxf A,X \struct Y}]$}
\RightLabel{$\boxf_{L1}$} \UnaryInfC{$\Sigma^-[\circblack{\boxf A, X \struct Y}]$}
\DisplayProof
\qquad
\AxiomC{$\Sigma^+[A, \circblack{Y \struct \diaf A, X}]$}
\RightLabel{$\diaf_{R1}$} \UnaryInfC{$\Sigma^+[\circblack{Y \struct \diaf A, X}]$}
\DisplayProof
$$
$$
\AxiomC{$\Sigma^-[A, \circwhite{\boxp A,X \struct Y}]$}
\RightLabel{$\boxp_{L1}$} \UnaryInfC{$\Sigma^-[\circwhite{\boxp A,X \struct Y}]$}
\DisplayProof
\qquad
\AxiomC{$\Sigma^+[A, \circwhite{Y \struct \diap A,X}]$}
\RightLabel{$\diap_{R1}$} \UnaryInfC{$\Sigma^+[\circwhite{Y \struct \diap A,X}]$}
\DisplayProof
$$

\item\label{lab:dbikt:eight} Replace rules $\impl_L, \dimpl_R$ with the following:
$$       
\AxiomC{$\Sigma[X, A \impl B \struct A, Y]$}
\AxiomC{$\Sigma[X, A \impl B, B \struct Y]$}
\RightLabel{$\impl_L$} \BinaryInfC{$\Sigma[X, A \impl B \struct Y]$}
\DisplayProof
$$
$$
\AxiomC{$\Sigma[X \struct Y, A \dimpl B, A]$}
\AxiomC{$\Sigma[X, B \struct Y, A \dimpl B]$}
\RightLabel{$\dimpl_R$} \BinaryInfC{$\Sigma[X \struct Y, A \dimpl B]$}
\DisplayProof
$$

\item\label{lab:dbikt:nine} Restrict rules $\impl_L$, $\dimpl_R$, $\struct_{L1}$,
  $\struct_{R1}$, $\land_L$, $\land_R$, $\lor_L$, $\lor_R$ and all
  modal propagation rules to the following:
 Let $\Sigma[X_0]$ be the conclusion of the rule and let $\Sigma[X_1]$
  (and $\Sigma[X_2]$) be the premise(s). The rule is applicable only if:
  $\upcontext{\Sigma}[X_1] \supset \upcontext{\Sigma}[X_0]$ and
  $\upcontext{\Sigma}[X_2] \supset \upcontext{\Sigma}[X_0]$.

\end{enumerate}
\end{definition}

We conjecture that $\dbikt$ and $\dbikts$ are equivalent and that
backward proof search in $\dbikt$ terminates. Note that by equivalence
here we mean that $\dbikt$ and $\dbikt$ proves the same set of formulae,
but not necessarily the same set of structures. This is because the
propagation rules in $\dbikts$ are more restricted so as to allow
for easier termination checking. For example, the structure
$A \struct \bullet(\diaf A)$ is derivable in $\dbikt$ but not
in $\dbikts$, although its formula translate is derivable in both
systems.
It is likely that a combination of the techniques 
from~\cite{postniece2009} and~\cite{gorepostniecetiu2009} can be used to prove
termination of proof search in $\dbikts$, given its similarities to the deep inference
systems used in those two works.

\section{Semantics}
\label{sec:semantics}

We now give a Kripke-style semantics for $\bikt$ and show that
$\lbikt$ is sound with respect to the semantics. 
Our semantics for $\bikt$ extend Rauszer's \cite{Rauszer80Phd}
Kripke-style semantics for $\biint$ by clauses for the tense logic
connectives.  We use the classical first-order meta-level connectives
\&, ``or'', ``not'', $\Rightarrow$, $\forall$ and $\exists$ to state
our semantics. 

A Kripke {\em frame} is a tuple $\langle W, \leq, R_\diaf, R_\boxf \rangle$
where $W$ is a non-empty set of worlds and $\leq \ \subseteq (W \times W)$ 
is a reflexive and transitive binary relation
over $W$, and each of $R_\diaf$ and $R_\boxf$ are arbitrary binary
relations over $W$ with the following {\em frame conditions}:
\begin{description}
\item[\rm F1$\diaf$] if $x \leq y \ \&\ x R_\diaf z$ then $\exists w.~y
  R_\diaf w \ \&\ z \leq w$
\item[\rm F2$\boxf$] if $x R_\boxf y \ \&\ y \leq z$ then  $\exists w.~x \leq
  w \ \& \ w R_\boxf z$.
\end{description}

A Kripke {\em model} extends 
a Kripke frame with a mapping  $V$ from $Atoms$ to $2^{W}$
obeying
{\em persistence:}
$$\forall v \geq w.~w \in V(p) \Rightarrow v \in V(p).$$
Given a model $\langle W, \leq, R_\diaf, R_\boxf, V \rangle$, we say that
$w \in W$ {\em satisfies} $p$ if $w \in V(p)$, and write this as $w
\Vdash p$. 
We write $w \not\Vdash p$ to mean $(not)(w \Vdash p)$; that
is, $\exists v \geq w.~v \not\in V(p)$. The relation $\Vdash$
is then extended to formulae as given in Figure~\ref{fig:semantics}.
A $\bikt$-formula $A$ is {\em $\bikt$-valid} if it is
satisfied by every world in every Kripke model. A nested sequent $X
\struct Y$ is $\bikt$-valid if its formula translation is
$\bikt$-valid.

\begin{figure}[t]
\begin{center}
\mysmall{
\begin{tabular}
      {l@{\hspace{0.3cm}}c@{\hspace{0.3cm}}l@{\hspace{1cm}}
       l@{\hspace{0.3cm}}c@{\hspace{0.3cm}}l}
 $w \Vdash \top$      &    & for every $w \in W$ &
 $w \Vdash \bot$      &    & for no $w \in W$ \\
 $w \Vdash A \land B$  & if & $w \Vdash A \ \& \ w \Vdash B$ &
 $w \Vdash A \lor B$   & if & $w \Vdash A$ or $w \Vdash B$ \\
 $w \Vdash A \impl B$  & if & $\forall v \geq w.~v \Vdash A \Rightarrow v \Vdash B$ &
 $w \Vdash \neg A$ & if & $\forall v \geq w.~ v\not\Vdash A$ \\
 $w \Vdash A \dimpl B$ & if & $\exists v \leq w.~v \Vdash A \ \& \ v \not\Vdash B$ &
 $w \Vdash \ \dneg A$ & if & $\exists v \leq w.~v \not\Vdash A$ \\
 $w \Vdash \diaf A$ & if & $\exists v.~w R_\diaf v \ \&\ v \Vdash A$ &
 $w \Vdash \boxf A$ & if & $\forall z.\forall v.~w \leq z ~\&\ z R_\boxf v \Rightarrow v \Vdash A$ \\
 $w \Vdash \diap A$ & if & $\exists v.~w R_\boxf^{-1} v \ \&\ v \Vdash A$ &
 $w \Vdash \boxp A$ & if & $\forall z.\forall v.~w \leq z \ \&\  z R_\diaf^{-1} v \Rightarrow v \Vdash A$ 
\end{tabular}
}
    \caption{Semantics for $\bikt$}
    \label{fig:semantics}
\end{center}
\end{figure}

Our semantics differ from those of
Simpson~\cite{simpson1994} and Ewald~\cite{ewald1986} because we use
two modal accessibility relations instead of one. 
In our calculi, there is no direct relationship between $\diaf$ and
$\boxf$ (or $\diap$ and $\boxp$),
but 
$\diaf$ and $\boxp$ are a residuated pair, as are $\diap$ and
$\boxf$. Semantically, this corresponds to $R_\diap =
R_\boxf^{-1}$ and $R_\boxp = R_\diaf^{-1}$; therefore the clauses in
Figure~\ref{fig:semantics} are couched in terms of $R_\diaf$ 
and $R_\boxf$ only.
Our frame conditions F1$\diaf$ and F2$\boxf$ are also used by Simpson
whose F2 captures the ``persistence of being seen
by''~\cite[page~51]{simpson1994} while for us F2$\boxf$ is simply the
``persistence of $\diap$''.

$\lbikt$ is sound with respect to $\bikt$. The soundness proof
is straightforward by the definition of the semantics and
the inference rules.
\begin{theorem}[Soundness]
If $A$ is a $\bikt$-formula and $\emptyset \struct A$ is
$\lbikt$-derivable, then $A$ is $\bikt$-valid. 
\end{theorem}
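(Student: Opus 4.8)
The plan is to prove soundness by the standard route: define a notion of validity for nested sequents via the formula translation $\tau$ of Figure~\ref{fig:interpretation}, and then show that every inference rule of $\lbikt$ preserves validity (reading rules top-down), so that by induction on the height of the derivation every derivable sequent is valid; applying this to $\emptyset \struct A$, whose translation is $\tau^-(\emptyset) \impl \tau^+(A)$, i.e.\ essentially $A$ (up to the unit $\emptyset$ translating to $\top$ or being absorbed), yields $\bikt$-validity of $A$. First I would record the persistence lemma: for every formula $B$, if $w \Vdash B$ and $w \leq v$ then $v \Vdash B$. This is proved by induction on $B$; the atomic case is the persistence assumption on $V$, the propositional cases are as in Rauszer's bi-intuitionistic semantics, and the modal cases use the frame conditions --- crucially $w \Vdash \boxf B$ persists because $\leq$ is transitive (the clause already quantifies $\forall z \geq w$), and $w \Vdash \diap B$, which unfolds to $\exists v.\, w R_\boxf^{-1} v \ \&\ v \Vdash B$, persists via frame condition F2$\boxf$ (given $w R_\boxf^{-1} v$ actually means $v R_\boxf w$, and $w \leq w'$, F2$\boxf$ supplies a suitable witness above $v$... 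I would double-check the direction here, since F2$\boxf$ is stated for $R_\boxf$ and the clause uses $R_\boxf^{-1}$). Dually $\diaf$ and $\boxp$ persist using F1$\diaf$.

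Next I would set up the sequent-validity machinery. For a structure $X$ define $w \Vdash^- X$ to mean $w \Vdash \tau^-(X)$ and $w \Vdash^+ X$ to mean $w \Vdash \tau^+(X)$; call a nested sequent $X \struct Y$ valid if for all models and all worlds $w$, $w \Vdash \tau(X \struct Y)$, i.e.\ $w \Vdash \tau^-(X) \impl \tau^+(Y)$, equivalently: for all $v \geq w$, $v \Vdash^- X$ implies $v \Vdash^+ Y$. The key technical step --- and the place where one must be careful --- is a substitution/context lemma: because structural rules act deep inside $X$ or $Y$, I would prove that if one structure entails another at the $\tau^-$ or $\tau^+$ level (semantically), then the same holds inside any positive/negative/neutral context, respecting the polarity conventions spelled out in Section~\ref{sec:nested}. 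Concretely: if $v \Vdash^- X'$ implies $v \Vdash^- X$ for all $v$, then $v \Vdash^- \Sigma^-[\cdots]$-type monotonicity holds; the overloading of $\struct$ as both $\impl$ (on the right) and $\dimpl$ (on the left) is exactly why the polarity bookkeeping matters, and I would lean on the already-established fact in the paper that nesting a negative context further inside $\struct$ does not flip its polarity.

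**The main obstacle** I expect is verifying the modal structural rules and the residuation postulates $rp_\circ$ and $rp_\bullet$, together with $s_L, s_R, \struct_L, \struct_R$. For instance $rp_\circ$ turns $\circblackshort X \struct Y$ into $X \struct \circwhiteshort Y$: on the left $\tau^-(\circblackshort X) = \diap \tau^-(X)$ and on the right $\tau^+(\circwhiteshort Y) = \boxf \tau^+(Y)$, so soundness of this rule is precisely the adjunction $\diap A \impl B$ iff $A \impl \boxf B$ at the level of the Kripke semantics, which in turn is where $R_\diap = R_\boxf^{-1}$ (hence the single relation $R_\boxf$ in the clauses) does the work; one has to unfold both clauses and check the quantifier shuffle goes through using reflexivity/transitivity of $\leq$ and, for one direction, frame condition F2$\boxf$. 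The analogous check for $rp_\bullet$ uses F1$\diaf$. The logical rules $\boxf_L, \boxf_R, \diaf_L, \diaf_R$ and their black counterparts are then comparatively routine once the residuation rules and persistence are in hand, and the purely intuitionistic rules together with $cut$, $id$, $w$, $c$ are inherited essentially verbatim from the soundness proof for LBiInt in~\cite{gorepostniecetiu2008}. So my proof would be structured as: (1) persistence lemma; (2) semantic monotonicity-in-context lemma; (3) case analysis showing each rule of Figure~\ref{fig:lbikt} preserves sequent validity, highlighting the residuation and modal cases; (4) conclude by induction on derivation height and specialize to $\emptyset \struct A$.
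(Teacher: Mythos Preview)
Your proposal is correct and follows exactly the route the paper has in mind: the paper's own proof is a single sentence (``straightforward by the definition of the semantics and the inference rules''), and your plan---persistence, sequent validity via the formula translation~$\tau$, then rule-by-rule preservation with the residuation postulates handled via the adjunctions $R_{\diap} = R_{\boxf}^{-1}$ and $R_{\boxp} = R_{\diaf}^{-1}$ and the frame conditions F1$\diaf$, F2$\boxf$---is precisely the unpacking of that sentence. Your hesitation about the direction of F2$\boxf$ in the persistence case for $\diap$ is unwarranted: $w \Vdash \diap B$ means some $v$ with $v R_{\boxf} w$ and $v \Vdash B$; given $w \leq w'$, F2$\boxf$ (with $x=v$, $y=w$, $z=w'$) supplies $u \geq v$ with $u R_{\boxf} w'$, and persistence for $B$ finishes it.
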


We conjecture that $\dbikt$ is also complete w.r.t. the semantics.
We give an outline here:
The proof follows the usual counter-model construction
  technique for intuitionistic and tense logics; the non-trivial
  addition is showing that the resulting models satisfy the frame
  conditions F1$\diaf$
and F2$\boxf$. We will
  show that our propagation rules allow us to simulate the frame
  conditions. We do the case for F1$\diaf$; the other is
  similar.

We need to consider all the derivation fragments ending with a deep
sequent such that it contains the syntactic equivalent of three worlds
$x$, $y$ and $z$ such that $x \leq y$ and $x R_\diaf z$. Then we need
to show that there exists a $w$ such that $y R_\diaf w$ and $z \leq
w$. We will take $z$ to be the required $w$, which means we need to
show that $y R_\diaf z$ and $z \leq z$. The latter follows immediately
because $\leq$ is reflexive. The former demands the following:

\begin{enumerate}
	\item If $y \not\Vdash \diaf B$ then $z \not\Vdash B$
	\item If $z \Vdash \boxp A$ then $y \Vdash A$
\end{enumerate}

The following derivation fragments illustrate the required
propagations for (1) on the left and (2) on the right:

\begin{tabular}{cc}
\begin{minipage}{6.5cm}
$$
\AxiomC{$\Sigma[(\circwhite{Z \struct B} \struct U, \diaf B) \struct Y, \diaf B]$}
\RightLabel{$\diaf_{R2}$} \UnaryInfC{$\Sigma[(\circwhiteshort{Z} \struct U, \diaf B) \struct Y, \diaf B]$}
\RightLabel{$\struct_{R2}$} \UnaryInfC{$\Sigma[(\circwhiteshort{Z} \struct U) \struct Y, \diaf B]$}
\DisplayProof
$$

$$
\AxiomC{$\Sigma[\circwhite{Z \struct B} \struct (Y_1 \struct Y_2, \diaf B), \diaf B]$}
\RightLabel{$\diaf_{R2}$} \UnaryInfC{$\Sigma[\circwhiteshort{Z} \struct (Y_1 \struct Y_2, \diaf B), \diaf B]$}
\RightLabel{$\struct_{R1}$} \UnaryInfC{$\Sigma[\circwhiteshort{Z} \struct (Y_1 \struct Y_2, \diaf B)]$}
\DisplayProof
$$

$$
\AxiomC{$\Sigma[Z \struct \circblack{(Y_1 \struct Y_2, \diaf B), \diaf B}, B]$}
\RightLabel{$\diaf_{R1}$} \UnaryInfC{$\Sigma[Z \struct \circblack{(Y_1 \struct Y_2, \diaf B), \diaf B}]$}
\RightLabel{$\struct_{R1}$} \UnaryInfC{$\Sigma[Z \struct \circblack{Y_1 \struct Y_2, \diaf B}]$}
\DisplayProof
$$
\end{minipage}
&
\begin{minipage}{6.5cm}
$$
\AxiomC{$\Sigma[A, (\circwhite{\boxp A, Z} \struct X) \struct U]$}
\RightLabel{$\boxp_{L1}$} \UnaryInfC{$\Sigma[(\circwhite{\boxp A, Z} \struct X) \struct U]$}
\DisplayProof
$$

$$
\AxiomC{$\Sigma[A, \circwhite{\boxp A, Z} \struct (A, Y_1 \struct Y_2)]$}
\RightLabel{$\struct_{L2}$} \UnaryInfC{$\Sigma[A, \circwhite{\boxp A, Z} \struct (Y_1 \struct Y_2)]$}
\RightLabel{$\boxp_{L1}$} \UnaryInfC{$\Sigma[\circwhite{\boxp A, Z} \struct (Y_1 \struct Y_2)]$}
\DisplayProof
$$

$$
\AxiomC{$\Sigma[\boxp A \struct \circblack{A \struct (A, Y_1 \struct Y_2)}]$}
\RightLabel{$\struct_{L2}$} \UnaryInfC{$\Sigma[\boxp A \struct \circblack{A \struct (Y_1 \struct Y_2)}]$}
\RightLabel{$\boxp_{L2}$} \UnaryInfC{$\Sigma[\boxp A \struct \circblack{Y_1 \struct Y_2}]$}
\DisplayProof
$$
\end{minipage}
\end{tabular}

\section{Modularity, Extensions and Classicality}
\label{sec:ext}

We first exhibit the modularity of our deep calculus $\dbikt$
by showing that fragments of $\dbikt$ obtained by restricting
the language of formulae and structures also satisfy cut admissibility. 
For example, by allowing only $\lint$ formulae and structures, 
and restricting to rules that only affect those structures, 
we get a subsystem of $\dbikt$ that admits cut admissibility. 
We then show how we can obtain Ewald's intuitionistic tense logic
IKt~\cite{ewald1986}, Simpson's intuitionistic modal logic
IK~\cite{simpson1994} and regain classical tense logic \kt. We also
discuss extensions of $\dbikt$ with axioms $T$, $4$ and $B$
but they do not correspond semantically to reflexivity,
transitivity and symmetry~\cite{simpson1994}.

\paragraph{Modularity}

A nested sequent is {\em purely modal} if contains no
occurrences of $\bullet$ nor its formula translates $\boxp$ and $\diap.$
We write $\di$ for the sub-system of $\dbikt$ containing only 
the rules $id$, the logical rules for intuitionistic connectives, and 
the propagation rules for $\struct.$ The logical system $\dik$ is
obtained by adding to $\di$ the deep introduction rules for $\boxf$
and $\diaf$, and the propagation rules $\boxf_{L2}$
and $\diaf_{R2}.$ The logical system $\dbi$ is obtained by adding
to $\di$ the deep introduction rules for $\dimpl.$ 
In the following, we say that a formula is an {\em $\intk$-formula} 
if it is composed from propositional variables,
intuitionistic connectives, and $\boxf$ and $\diaf.$
Observe that in $\dbikt$, the only rules that create $\circblackshort$
upwards are $\diap_L$ and $\boxp_R$.  Thus in every
$\dbikt$-derivation $\pi$ of an $\intk$ formula, the internal sequents
in $\pi$ are purely modal, and hence $\pi$ is also a
$\dik$-derivation.  This observation gives immediately the following
modularity result. 

\begin{theorem}[Modularity]
\label{thm:modularity}
Let $A$ be an \lint{} (resp. $\biint$ and $\intk$) formula. 
The nested sequent $\emptyset \struct A$ is 
$\di$-derivable (resp. $\dbi$- and $\dik$-derivable) 
iff $\emptyset \struct A$ is $\dbikt$-derivable.
\end{theorem}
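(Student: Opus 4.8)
The forward implication is immediate: by construction the rules of $\di$, $\dbi$ and $\dik$ are obtained from those of $\dbikt$ by restricting the formula language and deleting certain rules, so every $\di$-, $\dbi$- or $\dik$-derivation is already a $\dbikt$-derivation. The substance of the theorem is the converse, and I would prove it by making precise the argument sketched in the paragraph preceding the statement, in three steps.

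First I would record a subformula property for $\dbikt$. Since $\dbikt$ has no cut rule and each of its rules is analytic --- every formula occurring in a premise is a subformula of a formula occurring in the conclusion (in the propagation rules the active formula is a copy of a formula already present in the conclusion, or an immediate subformula of one, e.g.\ the active $A$ in $\boxf_{L2}$ is a subformula of $\boxf A$) --- an easy induction on a $\dbikt$-derivation of $\emptyset \struct A$ shows that every formula occurring in it is a subformula of $A$. Hence, if $A$ is an $\intk$-formula, no $\boxp$-, $\diap$- or $\dimpl$-formula occurs anywhere in the derivation; if $A$ is an $\lint$- or a $\biint$-formula, then in addition no $\boxf$- or $\diaf$-formula occurs.

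The second and key step is to show that no forbidden structural connective can appear. Inspecting Fig.~\ref{fig:dbikt}, the only rules whose premise can contain an occurrence of the black structural connective $\bullet$ that is absent from the conclusion are $\diap_L$ and $\boxp_R$: in $\boxf_{L1}$, $\diaf_{R1}$, $\boxp_{L2}$ and $\diap_{R2}$ a $\bullet$-structure sits, in the same position, in both premise and conclusion. Since $\diap_L$ and $\boxp_R$ require respectively a $\diap$- and a $\boxp$-formula in their conclusion, by the first step neither is used in a derivation of an $\intk$-formula; as $\emptyset \struct A$ itself contains no $\bullet$, no sequent of the derivation does, i.e.\ every sequent is purely modal. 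Symmetrically, the only rules that introduce a fresh $\circ$- or $\bullet$-structure are $\diaf_L$, $\boxf_R$, $\diap_L$ and $\boxp_R$, each of which forces a modal formula into its conclusion; so in a derivation of an $\lint$- or a $\biint$-formula no modal structural connective appears at all.

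Combining these, one simply reads off the rules actually used. A $\dbikt$-rule whose conclusion is purely modal and contains only $\intk$-formulas cannot be any of $\boxf_{L1}$, $\diaf_{R1}$, $\boxp_{L1}$, $\diap_{R1}$, $\boxp_{L2}$, $\diap_{R2}$, $\diap_L$, $\boxp_R$ (each of these has a $\bullet$-structure or a $\boxp$- or $\diap$-formula in its conclusion) nor $\dimpl_L$ or $\dimpl_R$ (a $\dimpl$-formula in the conclusion); the rules that remain are precisely those of $\dik$, so the derivation is a $\dik$-derivation. Likewise, a $\dbikt$-rule whose conclusion is built only from $\lint$- (resp.\ $\biint$-) formulas using only the structural connectives comma and $\struct$ must be $id$, $\bot_L$, $\top_R$, one of the four $\struct$-propagation rules, or a logical rule for an intuitionistic connective (together with $\dimpl_L$, $\dimpl_R$ in the $\biint$ case), that is, a rule of $\di$ (resp.\ $\dbi$). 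I expect the only real work to be this exhaustive bookkeeping over the rule set --- verifying that every rule is analytic, and cataloguing precisely which rules create $\bullet$- or modal structures ``upwards'' and which force such a structure or a $\dimpl$-, $\boxp$- or $\diap$-formula into their conclusion --- with no conceptual obstacle; the strict-context proviso on $\impl_L$ and $\dimpl_R$ is inherited verbatim by $\di$, $\dbi$ and $\dik$ and so plays no role.
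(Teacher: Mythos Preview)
Your proposal is correct and follows essentially the same approach as the paper: the paper's proof is in fact nothing more than the one-sentence observation preceding the theorem (``the only rules that create $\bullet$ upwards are $\diap_L$ and $\boxp_R$''), implicitly relying on the subformula property you make explicit. Your version simply unpacks that observation into the subformula property plus the bookkeeping over which rules introduce $\bullet$ or $\circ$, which is exactly what is needed to make the argument rigorous.
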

A consequence of Theorem~\ref{thm:cut-elim}, Theorem~\ref{thm:soundness},
Theorem~\ref{thm:completeness} and Theorem~\ref{thm:modularity},
is that the cut rule is admissible in $\di$, $\dbi$ and $\dik.$
As the semantics of $\lbikt$ (hence, also $\dbikt$) is
conservative w.r.t. to the semantics of both intuitionistic
and bi-intuitionistic logic, the following completeness
result holds.

\begin{theorem}
An \lint{} (resp.\ $\biint$) formula $A$ is
valid in \lint{} (resp.\ $\biint$) iff 
$\emptyset \struct A$ is derivable in $\di$ (resp. $\dbi$).
\end{theorem}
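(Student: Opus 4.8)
The plan is to derive this as a routine corollary of the cut-elimination and equivalence results already established, together with the modularity theorem and the known completeness of standard calculi for intuitionistic and bi-intuitionistic logic. First I would recall that the semantics given in Section~\ref{sec:semantics} restricts, on the $\lint$-fragment, to Kripke's ordinary intuitionistic semantics (the $\leq$ relation with persistence), and on the $\biint$-fragment to Rauszer's bi-intuitionistic semantics; this is the ``conservativity'' of the semantics alluded to just before the statement. Hence ``$A$ valid in $\lint$'' and ``$A$ valid in $\biint$'' mean exactly what they classically mean.

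The forward direction (derivability implies validity) is immediate: by Theorem~\ref{thm:modularity}, $\di$- (resp.\ $\dbi$-) derivability of $\emptyset \struct A$ gives $\dbikt$-derivability, hence by Theorem~\ref{thm:completeness} (the $\dbikt$-to-$\lbikt$ direction) $\lbikt$-derivability of $\emptyset \struct A$, and then the Soundness theorem for $\lbikt$ gives $\bikt$-validity of $A$, which specialises to $\lint$- (resp.\ $\biint$-) validity. The converse direction is where the real content sits: given an $\lint$-valid (resp.\ $\biint$-valid) formula $A$, I would invoke a known cut-free complete calculus for $\lint$ (resp.\ $\biint$) — for bi-intuitionistic logic this is precisely the cut-free $\mathrm{LBiInt}$/$\mathrm{DBiInt}$ calculus of~\cite{gorepostniecetiu2008,postniece2009}, and for $\lint$ any standard cut-free sequent calculus — and then observe that such a derivation is, modulo notation, already a $\dbi$- (resp.\ $\di$-) derivation, since the propagation rules for $\struct$ in $\dbikt$ are exactly those of $\mathrm{DBiInt}$ (as remarked in the discussion of Fig.~\ref{fig:dbikt}) and the logical rules coincide. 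Alternatively, and perhaps more cleanly, one can route through $\lbikt$: a cut-free $\mathrm{LBiInt}$-derivation of $\emptyset \struct A$ is an $\lbikt$-derivation, Theorem~\ref{thm:completeness} turns it into a $\dbikt$-derivation, and Theorem~\ref{thm:modularity} then pulls it back into $\dbi$ (resp.\ $\di$) since $A$ is a $\biint$- (resp.\ $\lint$-) formula.

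The main obstacle — really the only non-bookkeeping point — is making precise that the relevant fragment of the $\bikt$-semantics genuinely \emph{is} the classical $\lint$/$\biint$ semantics, i.e.\ that the tense machinery ($R_\diaf$, $R_\boxf$, the $\bullet$/$\circ$ structures) is inert on formulas built only from intuitionistic connectives, so that a counter-model in the ordinary sense lifts to a $\bikt$ counter-model and conversely. This is a straightforward induction on formula structure comparing the clauses of Figure~\ref{fig:semantics} with Rauszer's clauses, using that $\leq$ is reflexive-transitive with persistence in both settings; I would state it as a one-line observation rather than a lemma. Everything else is assembling Theorems~\ref{thm:soundness}, \ref{thm:completeness}, \ref{thm:modularity}, the $\lbikt$-Soundness theorem, and the cited completeness of $\mathrm{LBiInt}$ for $\biint$ (and of a standard calculus for $\lint$). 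I would therefore write the proof in two short paragraphs, one per direction, citing the four theorems above plus~\cite{gorepostniecetiu2008,postniece2009} for the bi-intuitionistic completeness input.
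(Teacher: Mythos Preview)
Your proposal is correct and fleshes out precisely the one-line justification the paper gives before stating the theorem (the paper supplies no detailed proof, only citing semantic conservativity together with Theorems~\ref{thm:cut-elim}, \ref{thm:soundness}, \ref{thm:completeness} and~\ref{thm:modularity}). One minor bookkeeping slip: in the paper's labelling, Theorem~\ref{thm:soundness} is the $\dbikt$-to-$\lbikt$ direction and Theorem~\ref{thm:completeness} the reverse, so your parenthetical references are swapped --- but the argument itself is right.
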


\paragraph{Obtaining Ewald's IKt}

\begin{figure}[t]
\mysmall{
$$
\AxiomC{$$}
\RightLabel{$id$} \UnaryInfC{$A \struct A$}
\RightLabel{$w_R$} \UnaryInfC{$A \struct A, \circblackshort{\diaf B}$}
\AxiomC{$$}
\RightLabel{$id$} \UnaryInfC{$B \struct B$}
\RightLabel{$\diaf_R$} \UnaryInfC{$\circwhiteshort{B} \struct \diaf B$}
\RightLabel{$rp_\circblackshort$} \UnaryInfC{$B \struct \circblackshort{\diaf B}$}
\RightLabel{$w_L$} \UnaryInfC{$B, A \struct \circblackshort{\diaf B}$}
\BinaryInfC{$A \impl B, A \struct \circblackshort{\diaf B}$}
\RightLabel{$\struct_R$} \UnaryInfC{$A \impl B \struct A \struct \circblackshort{\diaf B}$}
\RightLabel{$\boxf_L$} \UnaryInfC{$\boxf(A \impl B) \struct \circwhite{A \struct \circblackshort{\diaf B}}$}
\RightLabel{$rp_\circwhiteshort$} \UnaryInfC{$\circblack{\boxf(A \impl B)} \struct A \struct \circblackshort{\diaf B}$}
\RightLabel{$s_R$} \UnaryInfC{$A, \circblack{\boxf(A \impl B)} \struct \circblackshort{\diaf B}$}
\RightLabel{$\struct_R$} \UnaryInfC{$A \struct \circblack{\boxf(A \impl B)} \struct \circblackshort{\diaf B}$}
\RightLabel{$\bullet\struct_R$} \UnaryInfC{$A \struct \circblack{\boxf(A \impl B) \struct \diaf B}$}
\RightLabel{$rp_\circblackshort$} \UnaryInfC{$\circwhiteshort{A} \struct (\boxf(A \impl B) \struct \diaf B)$}
\RightLabel{$\diaf_L$} \UnaryInfC{$\diaf A \struct (\boxf(A \impl B) \struct \diaf B)$}
\RightLabel{$s_R$} \UnaryInfC{$\boxf(A \impl B), \diaf A \struct \diaf B$}
\RightLabel{$\impl_R \times 2$} \UnaryInfC{$\struct \boxf(A \impl B) \impl (\diaf A \impl \diaf B)$}
\DisplayProof
\qquad \qquad
\AxiomC{$$}
\RightLabel{$id$} \UnaryInfC{$A \struct A, \circblackshort{\bot}$}
\AxiomC{$$}
\RightLabel{$\bot_L$} \UnaryInfC{$\bot, A \struct \circblackshort{\bot}$}
\RightLabel{$\impl_L$} \BinaryInfC{$A \impl \bot, A \struct \circblackshort{\bot}$}
\RightLabel{$\struct_R$} \UnaryInfC{$A \impl \bot \struct A \struct \circblackshort{\bot}$}
\RightLabel{$\boxf_L$} \UnaryInfC{$\boxf(A \impl \bot) \struct \circwhite{A \struct \circblackshort{\bot}}$}
\RightLabel{$rp_\circwhiteshort$} \UnaryInfC{$\circblack{\boxf(A \impl \bot)} \struct A \struct \circblackshort{\bot}$}
\RightLabel{$s_R$} \UnaryInfC{$A, \circblack{\boxf(A \impl \bot)} \struct \circblackshort{\bot}$}
\RightLabel{$\struct_R$} \UnaryInfC{$A \struct \circblack{\boxf(A \impl \bot)} \struct \circblackshort{\bot}$}
\RightLabel{$\bullet\struct_R$} \UnaryInfC{$A \struct \circblack{\boxf(A \impl \bot) \struct \bot}$}
\RightLabel{$rp_\circblackshort$} \UnaryInfC{$\circwhiteshort{A} \struct \boxf(A \impl \bot) \struct \bot$}
\RightLabel{$\diaf_L$} \UnaryInfC{$\diaf A \struct \boxf(A \impl \bot) \struct \bot$}
\RightLabel{$s_R$} \UnaryInfC{$\boxf(A \impl \bot), \diaf A \struct \bot$}
\RightLabel{$\impl_R \times 2$} \UnaryInfC{$\struct \boxf(A \impl \bot) \impl (\diaf A \impl \bot)$}
\DisplayProof
$$
}
\caption{Derivations of Simpson's axiom 2 and Ewald's axiom 5 (left) and Ewald's axiom 7 (right)}
\label{fig:ewald1}
\end{figure}

\begin{figure}[t]
$$
\mysmall{
\AxiomC{$$}
\RightLabel{$id$} \UnaryInfC{$A \struct A$}
\RightLabel{$\diaf_{R}$} \UnaryInfC{$\circwhiteshort{A} \struct \diaf A$}
\AxiomC{$$}
\RightLabel{$id$} \UnaryInfC{$B \struct B$}
\RightLabel{$\boxf_{L}$} \UnaryInfC{$\boxf B \struct \circwhiteshort{B}$}
\RightLabel{$w_{L}$} \UnaryInfC{$\boxf B, \circwhiteshort{A} \struct \circwhiteshort{B}$}
\RightLabel{$\impl_L$} \BinaryInfC{$\diaf A \impl \boxf B, \circwhiteshort{A} \struct \circwhiteshort{B}$}
\RightLabel{$\struct_{R}$} \UnaryInfC{$\diaf A \impl \boxf B \struct \circwhiteshort{A} \struct \circwhiteshort{B}$}
\RightLabel{$\circ\struct_R$} \UnaryInfC{$\diaf A \impl \boxf B \struct \circwhite{A \struct B}$}
\RightLabel{$rp_\circ$} \UnaryInfC{$\circblack{\diaf A \impl \boxf B} \struct A \struct B$}
\RightLabel{$s_R$} \UnaryInfC{$\circblack{\diaf A \impl \boxf B}, A \struct B$}
\RightLabel{$\impl_R$} \UnaryInfC{$\circblack{\diaf A \impl \boxf B} \struct A \impl B$}
\RightLabel{$rp_\circ$} \UnaryInfC{$\diaf A \impl \boxf B \struct \circwhite{A \impl B}$}
\RightLabel{$\boxf_R$} \UnaryInfC{$\diaf A \impl \boxf B \struct \boxf(A \impl B)$}
\RightLabel{$\impl_R$} \UnaryInfC{$\struct (\diaf A \impl \boxf B) \impl \boxf(A \impl B)$}
\DisplayProof
\qquad
\AxiomC{$$}
\RightLabel{$id$} \UnaryInfC{$A \struct A$}
\RightLabel{$\boxp_{L}$} \UnaryInfC{$\boxp A \struct \circblackshort{A}$}
\RightLabel{$rp_{\circ}$} \UnaryInfC{$\circwhite{\boxp A} \struct A$}
\RightLabel{$w_{R}$} \UnaryInfC{$\circwhite{\boxp A} \struct A, \circwhite{\diap B}$}
\AxiomC{$$}
\RightLabel{$id$} \UnaryInfC{$B \struct B$}
\RightLabel{$\diap_{R}$} \UnaryInfC{$\circblackshort{B} \struct \diap B$}
\RightLabel{$rp_{\circ}$} \UnaryInfC{$B \struct \circwhite{\diap B}$}
\RightLabel{$w_{L}$} \UnaryInfC{$B, \circwhite{\boxp A} \struct \circwhite{\diap B}$}
\RightLabel{$\impl_{L}$} \BinaryInfC{$A \impl B, \circwhite{\boxp A} \struct \circwhite{\diap B}$}
\RightLabel{$\struct_R$} \UnaryInfC{$A \impl B \struct \circwhite{\boxp A} \struct \circwhite{\diap B}$}
\RightLabel{$\circ\struct_R$} \UnaryInfC{$A \impl B \struct \circwhite{\boxp A \struct \diap B}$}
\RightLabel{$rp_{\bullet}$} \UnaryInfC{$\circblack{A \impl B} \struct \boxp A \struct \diap B$}
\RightLabel{$s_{R}$} \UnaryInfC{$\circblack{A \impl B},  \boxp A \struct \diap B$}
\RightLabel{$\impl_{R}$} \UnaryInfC{$\circblack{A \impl B} \struct \boxp A \impl \diap B$}
\RightLabel{$\diap_{L}$} \UnaryInfC{$\diap(A \impl B) \struct \boxp A \impl \diap B$}
\RightLabel{$\impl_R$} \UnaryInfC{$\struct \diap(A \impl B) \impl (\boxp A \impl \diap B)$}
\DisplayProof
}
$$
\caption{Derivations of Simpson's axiom 5 and Ewald's axiom 10 (left) and Ewald's axiom 11' (right)}
\label{fig:ewald2}
\end{figure}

To obtain Ewald's IKt~\cite{ewald1986} 
we need to collapse $R_\diaf$ and $R_\boxf$ into one temporal 
relation $R$ and leave out our semantic clauses for 
$\dimpl$ and $\dneg$. That is, we need to add
the following conditions to the basic semantics: $R_\diaf \subseteq
R_\boxf$ and $R_\boxf \subseteq R_\diaf$. 
Proof theoretically, this is captured by extending $\lbikt$ with
the structural rules: 
$$
\AxiomC{$X \struct \circblackshort{Y} \struct \circblackshort{Z}$}
\RightLabel{$\bullet\struct_R$} \UnaryInfC{$X \struct \circblack{Y \struct Z}$}
\DisplayProof
\qquad
\AxiomC{$X \struct \circwhiteshort{Y} \struct \circwhiteshort{Z}$}
\RightLabel{$\circ\struct_R$} \UnaryInfC{$X \struct \circwhite{Y \struct Z}$}
\DisplayProof
$$
We refer to the extension of $\lbikt$ with these two structural rules
as $\lbikte.$

Simpson's intuitionistic modal logic IK~\cite{simpson1994} can then 
be obtained from Ewald's system by restricting the language
to the modal fragment. 
Note that cut-elimination still holds for $\lbikte$ 
because these structural rules are closed under formula
substitution and the cut-elimination proof for $\lbikt$ 
still goes through when additional structural rules of this kind 
are added. We refer the reader to \cite{gorepostniecetiu2009} for a discussion
on how cut elimination can be proved for this kind of extensions.

A $\bikt$-frame is an {\em $E$-frame} if
$R_\boxf = R_\diaf.$ A formula $A$ is {\em $E$-valid} if
it is true in all worlds of every $E$-model.
An IKt formula $A$ is a
theorem of IKt iff it is $E$-valid~\cite{ewald1986}. 
The rules $\circ\struct_R$
and $\bullet\struct_R$ are sound for $E$-frames. 

\begin{lemma}\label{lm:collapse-rdia-rbox}
Rule $\circ\struct_R$ is sound iff $R_\boxf \subseteq R_\diaf$.
\end{lemma}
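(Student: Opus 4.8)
The plan is to pass to formula translations and then argue semantically in each direction. By Figure~\ref{fig:interpretation}, the premise $X \struct (\circwhiteshort{Y} \struct \circwhiteshort{Z})$ of $\circ\struct_R$ translates to $\tau^-(X) \impl (\diaf\tau^-(Y) \impl \boxf\tau^+(Z))$ and its conclusion $X \struct \circwhite{Y \struct Z}$ to $\tau^-(X) \impl \boxf(\tau^-(Y) \impl \tau^+(Z))$. Writing $a,b,c$ for $\tau^-(X),\tau^-(Y),\tau^+(Z)$, soundness of $\circ\struct_R$ amounts to: whenever $a \impl (\diaf b \impl \boxf c)$ is valid, so is $a \impl \boxf(b\impl c)$. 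The antecedent $a$ is inessential (it can be threaded through by the clause for $\impl$, or simply removed by taking $X=\emptyset$), so the real content is the correspondence between the rule and the frame condition $R_\boxf \subseteq R_\diaf$ for the ``core'' implication $(\diaf b \impl \boxf c) \impl \boxf(b\impl c)$.

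For ``$\Leftarrow$'', assume $R_\boxf \subseteq R_\diaf$; I show the stronger, world-wise statement that for every world $w$ of every such model, $w \Vdash \diaf b \impl \boxf c$ implies $w \Vdash \boxf(b \impl c)$ (the full rule then follows by a one-line use of the $\impl$-clause to carry $a$). So suppose $w \Vdash \diaf b \impl \boxf c$, fix $z \geq w$ and $v$ with $z R_\boxf v$, and fix $u \geq v$ with $u \Vdash b$; the goal is $u \Vdash c$. Applying frame condition F2$\boxf$ to $z R_\boxf v$ and $v \leq u$ yields a world $w'$ with $z \leq w'$ and $w' R_\boxf u$. Now $w' R_\diaf u$ by $R_\boxf \subseteq R_\diaf$, hence $w' \Vdash \diaf b$ from $u \Vdash b$; since $w' \geq z \geq w$, the $\impl$-clause at $w$ gives $w' \Vdash \boxf c$; and applying the $\boxf$-clause at $w'$ with $w' \leq w'$ (reflexivity of $\leq$) and $w' R_\boxf u$ yields $u \Vdash c$, as required.

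For ``$\Rightarrow$'' I argue the contrapositive by exhibiting a $\bikt$-model that violates $R_\boxf \subseteq R_\diaf$ and in which an instance of $\circ\struct_R$ passes from a valid premise to an invalid conclusion. Take $W = \{s,t\}$ with $\leq$ the identity relation, $R_\boxf = \{(s,t)\}$ and $R_\diaf = \emptyset$; F1$\diaf$ holds vacuously and F2$\boxf$ is witnessed by $w=s$, so this is a legitimate $\bikt$-frame. Put $V(p) = \{t\}$ and $V(q) = \emptyset$, which is trivially persistent since $\leq$ is the identity. Consider the instance of $\circ\struct_R$ with $X=\emptyset$, $Y=p$, $Z=q$, whose premise and conclusion translate respectively to $\diaf p \impl \boxf q$ and $\boxf(p\impl q)$. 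Since $R_\diaf = \emptyset$, $\diaf p$ fails at every world, so $\diaf p \impl \boxf q$ is valid; but $t \Vdash p$ and $t \not\Vdash q$, so $t \not\Vdash p \impl q$, and from $s R_\boxf t$ (with $s \leq s$) we get $s \not\Vdash \boxf(p\impl q)$, so the conclusion is not valid. Hence $\circ\struct_R$ is not sound.

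The translation bookkeeping and the threading of $a$ are routine. The crux is the ``$\Leftarrow$'' argument, and specifically the realisation that to reach the witness $w'$ one must first climb up on the $R_\boxf$ side via F2$\boxf$ (already part of the base semantics) before $R_\boxf \subseteq R_\diaf$ can be invoked to make $\diaf b$ — and hence $\boxf c$ — true at $w'$; using F1$\diaf$ here does not help. A secondary point is that the ``$\Rightarrow$'' countermodel must be chosen with $R_\diaf$ empty, so that the premise is vacuously valid while the conclusion still fails: over an arbitrary frame merely violating the inclusion one cannot in general keep $\diaf b \impl \boxf c$ valid, so the countermodel has to be tailored rather than taken to be the offending frame itself.
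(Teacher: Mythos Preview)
Your proof is correct and follows the same approach as the paper: the $\Leftarrow$ direction is the identical semantic argument via F2$\boxf$ (you phrase it directly, the paper by contradiction), and for $\Rightarrow$ both construct a countermodel with $R_\diaf = \emptyset$ so that the premise is vacuously valid. Your two-world countermodel is in fact simpler than the paper's five-world one, but the underlying idea is the same.
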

\begin{proof}
($\Leftarrow$) We show that if the frame condition holds, then the
  rule is sound. We assume that: (1)~$R_\boxf \subseteq R_\diaf$, and
  (2)~that the formula translation $\diaf A \impl \boxf B$ of the premise
  is valid. We then show that the formula translation $\boxf (A \impl
  B)$ of the conclusion is valid. 
For a contradiction, suppose that $\boxf (A \impl B)$ is not valid.
That is, there exists a world $u$ such that $u \not\Vdash \boxf (p \impl q)$. Then
(4) there exist worlds $x$ and $y$ such
that $u \leq x \ \&\ x R_{\boxf} y$ and $y \not\Vdash p \impl q$.
Thus there exists % a world 
$z$ s.t. $z \geq y$ and $z \Vdash p$ and
$z \not\Vdash q$. 
The pattern $x R_{\boxf} y \leq z$ implies
there is a world $w$ with $x \leq wR_{\boxf} z$ by F2$\boxf$.
The frame condition (1) then gives $w R_{\diaf} z$ too,
meaning that $w \Vdash \diaf p$.
From (2) we get $w \Vdash \boxf q$, which 
gives us $z \Vdash q$, giving us the contradiction we seek.
Therefore the premise $\boxf (A
\impl B)$ is valid and the rule is sound.

($\Rightarrow$) We show that if the rule is sound, then the failure of
the frame condition gives a contradiction. So suppose that the rule is
sound. The rule implies that $\struct (\diaf A \impl \boxf B)
\impl \boxf (A \impl B)$ is derivable.  For a contradiction, suppose
we have a frame with $R_\boxf \not\subseteq R_\diaf$. That is, (5):
there exist $x$ and $y$ such that $x R_\boxf y$ but not $x R_\diaf
y$. Let $W = \{u, w, x, y, z \}$,  let $<$ be the relation $\{ (u, x), (x, w), (y, z) \}$
and let $\leq$ be the reflexive-transitive closure of $<.$
Let $R_\diaf ~~= \{ \}$, $R_\boxf ~~= \{ (x, y), (w, z) \}$	
and let $V(p) ~~= \{z\}, V(q) = \{\}.$
Then the model $\langle W, \leq$, $R_\diaf, R_\boxf, V \rangle$ satisfies
(5), and has $u \Vdash \diaf p \impl \boxf q$ but $u \not\Vdash \boxf
(p \impl q)$.
\end{proof}

\begin{lemma}
\label{lm:collapse-rdia-rbox2}
Rule $\bullet\struct_R$ is sound iff $R_\diaf \subseteq R_\boxf$.
\end{lemma}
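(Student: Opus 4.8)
The plan is to mirror, almost verbatim, the proof of Lemma~\ref{lm:collapse-rdia-rbox}, since $\bullet\struct_R$ is obtained from $\circ\struct_R$ by swapping the white modalities $(\diaf,\boxf)$ for the black ones $(\diap,\boxp)$. Semantically this amounts to working with $R_\diap = R_\boxf^{-1}$ and $R_\boxp = R_\diaf^{-1}$ in place of $R_\diaf$ and $R_\boxf$, so the side condition $R_\boxp \subseteq R_\diap$ turns into $R_\diaf \subseteq R_\boxf$, and it is the frame condition F1$\diaf$ (rather than F2$\boxf$) that is invoked. As in Lemma~\ref{lm:collapse-rdia-rbox}, it suffices to reason about the formula translations $\diap A \impl \boxp B$ and $\boxp(A \impl B)$ of the premise and conclusion of the rule.

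For the direction ($\Leftarrow$), I would assume $R_\diaf \subseteq R_\boxf$ and that $\diap A \impl \boxp B$ is valid, and suppose for contradiction that $\boxp(A \impl B)$ is not valid, so $u \not\Vdash \boxp(A \impl B)$ for some world $u$. Unfolding the clause for $\boxp$ (which runs along $\leq$ and $R_\diaf^{-1}$) and then the clause for $\impl$ produces worlds $v_0, z_0, t_0$ with $u \leq z_0$, $v_0 R_\diaf z_0$, $v_0 \leq t_0$, $t_0 \Vdash A$ and $t_0 \not\Vdash B$. Applying F1$\diaf$ to $v_0 \leq t_0$ and $v_0 R_\diaf z_0$ yields a world $w_0$ with $t_0 R_\diaf w_0$ and $z_0 \leq w_0$; then $R_\diaf \subseteq R_\boxf$ gives $t_0 R_\boxf w_0$, so $w_0 \Vdash \diap A$ (witnessed by $t_0$). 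Validity of $\diap A \impl \boxp B$ then gives $w_0 \Vdash \boxp B$, and instantiating the $\boxp B$ clause at $w_0$ itself (using reflexivity of $\leq$ together with $t_0 R_\diaf w_0$) forces $t_0 \Vdash B$, contradicting $t_0 \not\Vdash B$. Hence $\bullet\struct_R$ preserves validity on every class of $\bikt$-frames satisfying $R_\diaf \subseteq R_\boxf$.

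For the direction ($\Rightarrow$), I would argue exactly as in Lemma~\ref{lm:collapse-rdia-rbox}: if $\bullet\struct_R$ is sound then $\struct (\diap A \impl \boxp B) \impl \boxp(A \impl B)$ is derivable (by a derivation symmetric to the left derivation of Figure~\ref{fig:ewald2}, now using $\boxp_L$, $\diap_R$, $rp_\bullet$ and $\bullet\struct_R$), hence valid; so it is enough to exhibit a $\bikt$-frame with $R_\diaf \not\subseteq R_\boxf$ falsifying it. The two-world frame $W = \{u,v\}$ with $\leq$ the identity, $R_\diaf = \{(v,u)\}$, $R_\boxf = \emptyset$, $V(p) = \{v\}$ and $V(q) = \emptyset$ does the job: F1$\diaf$ holds ($w := u$ witnesses it for the pair $(v,u)$), F2$\boxf$ holds vacuously, persistence holds, $R_\diaf \not\subseteq R_\boxf$, and at $u$ we have $u \Vdash \diap p \impl \boxp q$ vacuously (no world forces $\diap p$, since $R_\boxf$ is empty) while $u \not\Vdash \boxp(p \impl q)$ (since $v R_\diaf u$ and $v \not\Vdash p \impl q$). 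Thus soundness of $\bullet\struct_R$ forces $R_\diaf \subseteq R_\boxf$.

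The step I expect to demand the most care is keeping the directions straight: because $\boxp$ and $\diap$ quantify \emph{backwards} along $R_\diaf^{-1}$ and $R_\boxf^{-1}$ while $\leq$ still runs forwards, the roles of F1$\diaf$ and F2$\boxf$ are interchanged relative to the white case, and it is easy to misapply a frame condition. The only other routine checks are that the ($\Rightarrow$)-countermodel really satisfies all frame conditions while violating the inclusion, and that the empty-context reduction used in ($\Leftarrow$) is legitimate — it is, because every $\bikt$-formula is persistent and the argument only ever passes to worlds above the one where the context $\tau^-(X)$ was assumed true.
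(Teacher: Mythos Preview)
Your proposal is correct and follows exactly the approach the paper sketches: the paper's proof simply observes that $R_\diaf \subseteq R_\boxf$ is equivalent to $R_\boxp \subseteq R_\diap$ and declares the rest analogous to Lemma~\ref{lm:collapse-rdia-rbox}, which is precisely the symmetry you exploit (correctly noting that F1$\diaf$ replaces F2$\boxf$). Your two-world countermodel for the ($\Rightarrow$) direction is in fact simpler than the five-world one the paper uses in the white case, but the argument structure is the same.
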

\begin{proof}
$R_\diaf \subseteq R_\boxf$ means $R_\boxp \subseteq R_\diap$; 
the rest of the proof is analogous to the proof of Lemma~\ref{lm:collapse-rdia-rbox}.
\end{proof}

\begin{theorem}
\label{thm:ewald-sound}
If $A$ is derivable in $\lbikte$ then $A$ is $E$-valid.
\end{theorem}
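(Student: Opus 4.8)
The plan is to reuse the soundness argument for $\lbikt$ essentially unchanged, relativised to $E$-models, and then to treat the two new structural rules $\circ\struct_R$ and $\bullet\struct_R$ as extra inductive cases. As in the soundness proof for $\lbikt$, I would argue semantically at the level of sequents: call a nested sequent $X \struct Y$ \emph{$E$-valid} if its formula translation $\tau^-(X) \impl \tau^+(Y)$ is true at every world of every $E$-model, and prove by induction on the height of the $\lbikte$-derivation that $X \struct Y$ is $E$-valid. The base cases ($id$, $\bot_L$, $\top_R$) and every inductive step for a rule already present in $\lbikt$ — including $cut$ — go through verbatim: those rules preserve validity over \emph{all} Kripke frames satisfying F1$\diaf$ and F2$\boxf$, and every $E$-frame is such a frame, so they preserve $E$-validity as well.

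The only genuinely new cases are $\circ\struct_R$ and $\bullet\struct_R$, and for these I would appeal directly to Lemma~\ref{lm:collapse-rdia-rbox} and Lemma~\ref{lm:collapse-rdia-rbox2}. An $E$-frame is by definition one with $R_\boxf = R_\diaf$, so both inclusions $R_\boxf \subseteq R_\diaf$ and $R_\diaf \subseteq R_\boxf$ hold; hence the ``$\Leftarrow$'' directions of those two lemmas tell us that $\circ\struct_R$ and $\bullet\struct_R$ take $E$-valid premises to $E$-valid conclusions. Feeding these two cases into the induction closes the argument: every $\lbikte$-derivable sequent is $E$-valid, and in particular, if $\emptyset \struct A$ is derivable then its formula translation — which is equivalent to $A$ — is $E$-valid, so $A$ is $E$-valid. (This is exactly the pattern by which cut-elimination was noted to extend to $\lbikte$: a structural rule closed under substitution and sound for the intended frames slots cleanly into the existing argument.)

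I do not expect a real obstacle, since all the substantive work has been done in advance — the sequent-level soundness induction for $\lbikt$ and the two ``$\Leftarrow$'' soundness lemmas for the collapsing rules. The only points needing a little care are bookkeeping: one must state the induction hypothesis for sequents (via the formula translation) rather than merely for formulas, so that the premises of $\circ\struct_R$ and $\bullet\struct_R$ are genuinely covered; and one must note that passing from $E$-validity of $\emptyset \struct A$ to $E$-validity of $A$ only uses that $\emptyset$ translates to $\top$ on the left. No frame reasoning beyond F1$\diaf$ and F2$\boxf$, already internal to Lemmas~\ref{lm:collapse-rdia-rbox} and~\ref{lm:collapse-rdia-rbox2}, is required.
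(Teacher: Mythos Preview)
Your proposal is correct and follows essentially the same approach as the paper: the paper's proof simply cites the soundness of $\lbikt$ with respect to the $\bikt$-semantics (noting that every $E$-frame is a $\bikt$-frame) together with Lemmas~\ref{lm:collapse-rdia-rbox} and~\ref{lm:collapse-rdia-rbox2} for the two new structural rules. You have merely unpacked this into an explicit induction on derivations, which is exactly the intended reading.
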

\begin{proof}
Straightforward from the soundness of $\lbikt$ w.r.t.
$\bikt$-semantics (which subsumes Ewald's semantics)
and Lemma~\ref{lm:collapse-rdia-rbox} and Lemma~\ref{lm:collapse-rdia-rbox2}.
\end{proof}
Completeness of $\lbikte$ w.r.t. IKt and IK
can be shown by deriving the axioms of IKt and IK.
\begin{theorem}
\label{thm:ewald-complete}
System $\lbikte$ is complete w.r.t. Ewald's IKt and Simpson's IK.
\end{theorem}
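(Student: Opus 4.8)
The proof splits into a \emph{soundness} half and a \emph{completeness} half. The soundness half---every $\lbikte$-theorem is a theorem of IKt, and, after restricting to the modal language, every $\lbikte$-theorem is a theorem of IK---is immediate from Theorem~\ref{thm:ewald-sound} (which gives $\lbikte$-derivability $\Rightarrow$ $E$-validity) together with Ewald's completeness theorem for IKt w.r.t.\ $E$-frames (resp.\ Simpson's for IK). The substance is the completeness half: every IKt-theorem is $\lbikte$-derivable (as a nested sequent $\emptyset \struct A$), and every IK-theorem is derivable in the modal fragment of $\lbikte$. The plan is to prove this by induction on the length of a Hilbert-style derivation in IKt (resp.\ IK), which reduces to two tasks: (i) each axiom schema of IKt is $\lbikte$-derivable, and (ii) each primitive inference rule of IKt is admissible in $\lbikte$.

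For task (ii), the intuitionistic base is already available, since $\lbikt$, and a fortiori $\lbikte$, derives every theorem of $\lint$ (via the conservativity/modularity results of Section~\ref{sec:ext}, or directly); \emph{modus ponens} is admissible because cut is admissible in $\lbikte$ (Theorem~\ref{thm:cut-elim} transfers to $\lbikte$, as noted in Section~\ref{sec:ext}, the two new structural rules being closed under formula substitution); and \emph{necessitation} for each of $\boxf$ and $\boxp$ is derivable---from a cut-free derivation of $\emptyset \struct A$, weaken the antecedent by $\circblackshort\emptyset$ (resp.\ $\circwhiteshort\emptyset$), apply $rp_\circ$ (resp.\ $rp_\bullet$) to reach $\emptyset \struct \circwhiteshort A$ (resp.\ $\emptyset \struct \circblackshort A$), and conclude with $\boxf_R$ (resp.\ $\boxp_R$). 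Any monotonicity-style primitive rules in Ewald's presentation reduce to these plus the monotonicity axioms, which are handled in task (i).

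For task (i), the figures of Section~\ref{sec:ext} already exhibit a representative sample: Figure~\ref{fig:ewald1} derives Simpson's axiom~2 / Ewald's axiom~5 and Ewald's axiom~7, Figure~\ref{fig:ewald2} derives Simpson's axiom~5 / Ewald's axiom~10 and Ewald's axiom~11$'$, and the example in Section~\ref{sec:calculi} derives Ewald's axiom~9. The remaining axioms---the $K$-distribution axioms $\boxf(A\impl B)\impl(\boxf A\impl\boxf B)$ and $\boxp(A\impl B)\impl(\boxp A\impl\boxp B)$, the conversion axioms such as $A\impl\boxp\diaf A$, the $\diaf$- and $\diap$-monotonicity axioms, the $\neg\diaf\bot$-style axioms, and distribution of $\diaf,\diap$ over $\lor$---are obtained by entirely analogous and, on the whole, simpler derivations. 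The one point to flag is that, among all these, the only schemata whose $\lbikte$-derivations genuinely use the new structural rules $\circ\struct_R$ and $\bullet\struct_R$ are the interdependence axiom $(\diaf A\impl\boxf B)\impl\boxf(A\impl B)$ and its temporal converse (cf.\ Figure~\ref{fig:ewald2}, left); every other IKt/IK axiom is already $\lbikt$-derivable. For IK one notes in addition that IK-theorems are precisely the modal-fragment theorems of IKt, and that the derivations of the modal axioms never introduce a $\bullet$, so the induction closes inside the modal fragment of $\lbikte$.

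The main obstacle is organisational rather than conceptual: marshalling Ewald's complete axiom list (and Simpson's) and, within it, getting the residuation and structural-rule manipulations right for the two interdependence axioms, where the interaction of the modal structural connectives $\circwhiteshort,\circblackshort$ with the intuitionistic $\struct$---via $\circ\struct_R$, $\bullet\struct_R$, $s_L$, $s_R$, $\struct_L$, $\struct_R$ and the $rp$-rules---is the delicate part, exactly the phenomenon illustrated by the long derivation columns in Figures~\ref{fig:ewald1} and~\ref{fig:ewald2}. Everything else is routine given cut-admissibility for $\lbikte$ and the already-established completeness for the intuitionistic base.
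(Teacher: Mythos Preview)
Your approach is essentially the same as the paper's: show that every IKt/IK axiom is $\lbikte$-derivable and that the Hilbert rules (modus ponens, necessitation) are admissible, the latter via cut-admissibility for $\lbikte$. The paper's proof is terser---it simply points to the figures for the non-trivial axioms and declares the rest ``similar or easier''---but the underlying strategy is identical, and your explicit treatment of the rule-admissibility side (task~(ii)) is a welcome elaboration.

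Two of your side remarks, however, are factually incorrect and should be dropped. First, you assert that ``the only schemata whose $\lbikte$-derivations genuinely use the new structural rules $\circ\struct_R$ and $\bullet\struct_R$ are the interdependence axiom $(\diaf A\impl\boxf B)\impl\boxf(A\impl B)$ and its temporal converse''. This is false: the derivations in Figure~\ref{fig:ewald1} of Simpson's axiom~2 / Ewald's axiom~5 and of Ewald's axiom~7 both use $\bullet\struct_R$, and the derivation of Ewald's axiom~11$'$ in Figure~\ref{fig:ewald2} uses $\circ\struct_R$. Indeed, $\boxf(A\impl B)\impl(\diaf A\impl\diaf B)$ is \emph{not} $\bikt$-valid when $R_\diaf$ and $R_\boxf$ are unrelated (take $w R_\diaf v$ with $v\Vdash A$, $v\not\Vdash B$, and no $R_\boxf$-successors of $w$), hence by soundness of $\lbikt$ it is not $\lbikt$-derivable. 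Second, your claim that ``the derivations of the modal axioms never introduce a~$\bullet$'' is contradicted by both displayed Simpson-axiom derivations, each of which passes through $\circblackshort$-structures via $rp_\circ$ or $rp_\bullet$. Neither error affects the theorem as stated---completeness of $\lbikte$ does not require the derivations to stay in any fragment---but you should excise these claims rather than rely on them.
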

\begin{proof}
We show the non-trivial cases; the rest are similar or easier. 
Derivations of Simpson's axiom 2 and Ewald's axiom 5 and 7 are given in Figure~\ref{fig:ewald1},
derivations of Simpson's axiom 5 and Ewald's axiom 10 and 11' are given in Figure~\ref{fig:ewald2}.
\end{proof}

\begin{theorem}[Conservativity over IKt and IK]
If $A$ is an IKt-formula (IK formula),
then $A$ is IKt-valid (IK-valid) iff
$\emptyset \struct A$ is derivable in $\lbikte.$
\end{theorem}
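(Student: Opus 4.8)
The plan is to obtain this as an immediate corollary of the soundness and completeness results for $\lbikte$ proved just above, namely Theorem~\ref{thm:ewald-sound} and Theorem~\ref{thm:ewald-complete}, combined with Ewald's characterisation of IKt-theoremhood as $E$-validity (and Simpson's analogous characterisation for IK). First I would pin down the terminology: ``IKt-valid'' is read as validity in every $E$-model, which by Ewald~\cite{ewald1986} coincides, for IKt-formulae, with provability in IKt; likewise ``IK-valid'' is validity over the modal fragment of $E$-models, which by Simpson~\cite{simpson1994} coincides with provability in IK. Since the paragraph on obtaining Ewald's IKt already records that Simpson's IK is exactly the modal fragment of Ewald's IKt, the IK case will run in parallel to the IKt case throughout, and I will only spell out the latter.

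For the left-to-right direction (the completeness half): assume $A$ is an IKt-formula that is IKt-valid, i.e.\ $E$-valid. By Ewald's completeness $A$ is then a theorem of IKt, and by Theorem~\ref{thm:ewald-complete} (which establishes completeness of $\lbikte$ with respect to IKt) the nested sequent $\emptyset \struct A$ is $\lbikte$-derivable. For the right-to-left direction (the genuine conservativity content): assume $\emptyset \struct A$ is $\lbikte$-derivable for an IKt-formula $A$. By Theorem~\ref{thm:ewald-sound}, $A$ is $E$-valid, hence IKt-valid. It is worth observing that it does no harm here that the $\lbikte$-derivation of $\emptyset \struct A$ may pass through internal sequents mentioning $\dimpl$, $\diap$ or $\boxp$ outside the IKt language: Theorem~\ref{thm:ewald-sound} already quantifies over \emph{all} $\lbikte$-derivable formulae, so no separate fragment argument is needed for soundness. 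The IK case is identical after restricting to the modal fragment and invoking Simpson's soundness/completeness in place of Ewald's.

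I do not expect any genuine proof-theoretic or model-theoretic obstacle: all the real work — stability of cut-elimination under the added structural rules $\circ\struct_R$ and $\bullet\struct_R$, soundness of these rules over $E$-frames (Lemmas~\ref{lm:collapse-rdia-rbox} and~\ref{lm:collapse-rdia-rbox2}), and derivability of Ewald's and Simpson's axioms (Figures~\ref{fig:ewald1} and~\ref{fig:ewald2}) — has already been carried out. The only point requiring care is bookkeeping: ensuring that the three notions in play, ``$E$-valid'', ``IKt-valid'' and ``provable in IKt'', are identified via exactly the cited Ewald/Simpson theorems, and that ``IKt-formula'' versus ``IK-formula'' are matched to the corresponding fragments of $\lbikte$ so that the restriction to the modal fragment in the IK case is legitimate.
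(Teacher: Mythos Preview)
Your proposal is correct and matches the paper's approach: the theorem is stated without proof immediately after Theorems~\ref{thm:ewald-sound} and~\ref{thm:ewald-complete} as their evident corollary via Ewald's (resp.\ Simpson's) characterisation of IKt-theoremhood (resp.\ IK-theoremhood) as $E$-validity. One small slip worth correcting: IKt-formulae \emph{do} include $\diap$ and $\boxp$ (IKt is a tense logic), so only $\dimpl$ lies outside the IKt language---but your observation that Theorem~\ref{thm:ewald-sound} already covers all $\lbikte$-derivable formulae, regardless of the connectives appearing in intermediate structures, makes this harmless.
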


\paragraph{Regaining classical tense logic \kt}

To collapse \bikt{} to classical tense logic \kt{} we add the rules
$\circblackshort\struct_R$ and $\circwhiteshort\struct_L$, giving
Ewald's IKt with $R_{\diaf} = R_{\boxf}$ via
Lemmas~\ref{lm:collapse-rdia-rbox}-\ref{lm:collapse-rdia-rbox2}, and then add
following two rules:
$$
\AxiomC{$X_1, X_2 \struct Y_1, Y_2$}
\RightLabel{$s_L^{-1}$} 
\UnaryInfC{$(X_1 \struct Y_1), X_2 \struct Y_2$}
\DisplayProof
\qquad
\AxiomC{$X_1, X_2 \struct Y_1, Y_2$}
\RightLabel{$s_R^{-1}$} 
\UnaryInfC{$X_1 \struct Y_1, (X_2 \struct Y_2)$}
\DisplayProof
$$
The law of the excluded middle and the law of
(dual-)contradiction can then be derived as shown below:
$$
\AxiomC{$p \struct p, \bot$}
\RightLabel{$s^{-1}_L$}
\UnaryInfC{$(\emptyset \struct p), p \struct \bot$}
\RightLabel{$\impl_R$}
\UnaryInfC{$(\emptyset \struct p) \struct (p \impl \bot)$}
\RightLabel{$s_L$}
\UnaryInfC{$\emptyset \struct p, (p \impl \bot)$}
\RightLabel{$\lor_L$}
\UnaryInfC{$\emptyset \struct p \lor (p \impl \bot)$}
\DisplayProof
\qquad
\AxiomC{$p, \top \struct p$}
\RightLabel{$s^{-1}_R$}
\UnaryInfC{$\top \struct p, (p \struct \emptyset)$}
\RightLabel{$\dimpl_L$}
\UnaryInfC{$(\top \dimpl p) \struct (p \struct \emptyset)$}
\RightLabel{$s_R$}
\UnaryInfC{$p, (\top \dimpl p) \struct \emptyset$}
\RightLabel{$\land_R$}
\UnaryInfC{$p \land (\top \dimpl p) \struct \emptyset$}
\DisplayProof
$$

\paragraph{Further extensions}

Our previous work on deep inference systems for classical
tense logic~\cite{gorepostniecetiu2009} shows that extensions of
classical tense logic with some standard modal axioms can be formalised
by adding numerous propagation rules to the deep inference system
for classical tense logic given in that paper. 
We illustrate here with a few examples how such an approach to extensions
with modal axioms can be applied to $\bikt$. 
Figure~\ref{fig:ext} shows the propagation rules that are needed
to derive axiom T, 4 and B. For each rule, the derivation of
the corresponding axiom is given below the rule. 
Other nesting combinations will be needed for full 
completeness. Dual rules allow derivations of $p \impl \diaf p$
and $\diaf\diaf p \impl \diaf p$.
The complete treatement of these and other possible extensions
of $\lbikt$ is left for future work.

\begin{figure}[t]
$$
\mysmall{
\begin{array}{ccc}
\mbox{
\AxiomC{$\Sigma^-[A, \boxf A]$}
\RightLabel{$T\boxf$}
\UnaryInfC{$\Sigma^-[\boxf A]$}
\DisplayProof
}
&
\mbox{
\AxiomC{$\Sigma[\boxf A, X \struct \circwhite{\boxf A \struct Y}, Z]$}
\RightLabel{$4\boxf_{L}$}
\UnaryInfC{$\Sigma[\boxf A, X \struct \circwhiteshort{Y}, Z]$}
\DisplayProof
}
&
\mbox{
\AxiomC{$\Sigma^-[A, \circwhite{\boxf A, X}]$}
\RightLabel{$B\boxf_{L}$}
\UnaryInfC{$\Sigma^-[\circwhite{\boxf A, X}]$}
\DisplayProof
} \\ \\
\mbox{
\AxiomC{}
\RightLabel{$id$}
\UnaryInfC{$p, \boxf p \struct p$}
\RightLabel{$T\boxf$}
\UnaryInfC{$\boxf p \struct p$}
\RightLabel{$\impl_R$}
\UnaryInfC{$\struct~\boxf p \impl p$}
\DisplayProof
}
&
\mbox{
\AxiomC{}
\RightLabel{$id$}
\UnaryInfC{$\boxf p \struct \circwhite{\boxf p \struct \boxf p}$}
\RightLabel{$4\boxf_{L}$}
\UnaryInfC{$\boxf p \struct \circwhiteshort\boxf p$}
\RightLabel{$\boxf_R$}
\UnaryInfC{$\boxf p \struct \boxf\boxf p$}
\RightLabel{$\impl_R$}
\UnaryInfC{$\struct~\boxf p \impl \boxf\boxf p$}
\DisplayProof
}
&
\mbox{
\AxiomC{}
\RightLabel{$id$}
\UnaryInfC{$p, \circwhiteshort\boxf p \struct  p$}
\RightLabel{$B\boxf_L$}
\UnaryInfC{$\circwhiteshort\boxf p \struct  p$}
\RightLabel{$\diaf_L$}
\UnaryInfC{$\diaf\boxf p \struct  p$}
\RightLabel{$\impl_R$}
\UnaryInfC{$\struct~\diaf\boxf p \impl  p$}
\DisplayProof
}
\end{array}
}
$$
\caption{Some example propagation rules and the axioms they capture}
\label{fig:ext}
\end{figure}

\bibliographystyle{aiml10}

\end{document}